\DeclareMathOperator{\tr}{Tr}
\theoremstyle{plain}
\newtheorem{theorem}{Theorem}
\newtheorem{lemma}[theorem]{Lemma}
\newtheorem{proposition}[theorem]{Proposition}
\newtheorem{corollary}[theorem]{Corollary}
\newtheorem{result}{Result}
\theoremstyle{definition}
\newtheorem{definition}{Definition}
\newenvironment{alg}[1][]{\medskip
\begin{adjustwidth}{1cm}{1cm}
{#1}
\begin{enumerate}}
{\end{enumerate}
\end{adjustwidth}
\medskip}
\newcommand{\mc}{\mathcal}
\renewcommand{\ss}[1]{{\sf{#1}}}
\newcommand{\W}{\bm{W}}
\newcommand{\noisy}[1]{\ensuremath{\tilde{\mathcal{#1}}}}
\begin{document}

\title{Efficient estimation of Pauli channels}

\author{Steven T. Flammia}
\affiliation{Centre for Engineered Quantum Systems, School of Physics, University of Sydney, Sydney, NSW 2006 Australia}
\affiliation{Yale Quantum Institute, Yale University, New Haven, CT 06520, USA}
\affiliation{Quantum Benchmark Inc., 100 Ahrens Street West, Suite 203, Kitchener, ON N2H 4C3, Canada}

\author{Joel J. Wallman}
\affiliation{Quantum Benchmark Inc., 100 Ahrens Street West, Suite 203, Kitchener, ON N2H 4C3, Canada}
\affiliation{Institute for Quantum Computing and Department of Applied Mathematics, University of Waterloo, Waterloo, Ontario N2L 3G1, Canada}

\date{\today}

\begin{abstract}
Pauli channels are ubiquitous in quantum information, both as a dominant noise source in many computing architectures and as a practical model for analyzing error correction and fault tolerance.
Here we prove several results on efficiently learning Pauli channels, and more generally the Pauli projection of a quantum channel.
We first derive a procedure for learning a Pauli channel on $n$ qubits with high probability to a \emph{relative} precision $\epsilon$ using $O\bigl(\epsilon^{-2} n 2^n\bigr)$ measurements, which is efficient in the Hilbert space dimension.
The estimate is robust to state preparation and measurement errors which, together with the relative precision, makes it especially appropriate for applications involving characterization of high-accuracy quantum gates.
Next we show that the error rates for an arbitrary set of $s$ Pauli errors can be estimated to a relative precision $\epsilon$ using $O\bigl(\epsilon^{-4} \log s\log s/\epsilon\bigr)$ measurements.
Finally, we show that when the Pauli channel is given by a Markov field with at most $k$-local correlations, we can learn an entire $n$-qubit Pauli channel to relative precision $\epsilon$ with only $O_k\bigl(\epsilon^{-2} n^2 \log n \bigr)$ measurements, which is efficient in the number of qubits.
These results enable a host of applications beyond just characterizing noise in a large-scale quantum system: they pave the way to tailoring quantum codes, optimizing decoders, and customizing fault tolerance procedures to suit a particular device.
\end{abstract}

\maketitle

\section{Introduction}

Pauli channels are among the most basic noise channels in quantum information science.
Nearly all theoretical studies of quantum error correction and fault tolerance, including most threshold and overhead estimates, rely on modeling noise as a Pauli channel~\cite{Terhal2015}.

Originally, the theoretical focus on Pauli channels was primarily motivated by the ease with which they can be theoretically analyzed and simulated on a classical computer.
However, Pauli channels are now justified by the technique of randomized compiling~\cite{Wallman2016}, which maps general quantum noise to its Pauli projection, a Pauli channel having the same average fidelity to an ideal channel.
This addresses the concern that coherent noise, which is in general as hard to model as full quantum computation, may create distorted comparisons with threshold error rates that were computed using Pauli noise~\cite{Sanders2015, Kueng2016}.
Moreover, quantum error correction of coherent noise will, under reasonable assumptions, lead to less coherent noise at the logical level, meaning that noise at that level is better approximated by a Pauli channel~\cite{Huang2018, Beale2018}.
Finally, recent experiments~\cite{Ware2018} have shown that enforcing the Pauli projection by randomized compiling works extremely well in practice, further solidifying the importance of Pauli channels.

Despite the central role played by Pauli channels, to date there have been few systematic studies of how to estimate them efficiently, meaning with a complexity that improves over what follows from a naive application of full channel tomography.
We will review the relevant literature below.
Filling this gap becomes even more pressing in light of work showing just how much the threshold changes under biased or correlated noise models~\cite{Bombin2012,Nickerson2017,Darmawan2017,Maskara2018}, in some cases by more than a factor of 4 for a code capacity threshold~\cite{Tuckett2018, Tuckett2019,Tuckett2020}.
Such factors can be substantial because the logical error rate depends exponentially on the distance below the threshold.
Accurate estimation of the Pauli noise in an architecture would allow for many applications, including tailoring codes and decoders to match the noise~\cite{Robertson2017}, customizing fault-tolerance schemes~\cite{Aliferis2008}, and accurate estimation of thresholds and overheads~\cite{Chubb2018}.

\subsection{Summary of Results}

In this paper, we give a comprehensive treatment of the sample complexity of Pauli channel estimation.
We will present three main results corresponding to efficient estimation procedures with error guarantees for three separate and complementary regimes: estimation of a complete Pauli channel, estimation of error rates for an arbitrary set of Pauli errors, and estimation of a Pauli channel that factorizes over a bounded-degree factor graph.
The latter is equivalent to considering the Pauli error probability as a tensor network state comprised of tensors with a bounded number of indices, but no restriction on the topology of the connections.
This section contains only informal statements of our results; the precise versions of these statements along with rigorous proofs are given in the subsequent sections.

The basic procedure that we develop and analyze is a variant of randomized benchmarking~\cite{Emerson2005, Dankert2009, Knill2008} and its recently introduced cousins, character benchmarking~\cite{Helsen2018} and cycle benchmarking~\cite{Erhard2019}.
The procedure, defined in \Cref{sec:estimation}, inherits the robustness to errors in state preparation and measurement (SPAM) enjoyed by all benchmarking variants.
It uses preparations and measurements in a stabilizer basis (or equivalently, in the computational basis together with a single Clifford group element), and repeated rounds of random Pauli gates to average over the noise.
Our results apply when the noise on the random Pauli gates, the state preparations, and the measurement all obey certain mild regularity assumptions such as being gate-independent, time-stationary, Markovian, and not too far from ideal.
These assumptions can be relaxed still further, particularly the gate-independence assumption~\cite{Chasseur2015,Proctor2017,Wallman2018,Merkel2018}.
The precise conditions on the noise for which our proofs hold are given in \Cref{def:GTMnoise,def:weakstable}.
In this section, we informally refer to these restrictions as ``nice'' noise.

Our first main result is that our procedure for estimating a complete Pauli channel on $n$ qubits requires only $O\bigl(\epsilon^{-2} n 2^n\bigr)$ measurements to estimate the channel to constant \textit{relative} precision $\epsilon$ and with a constant success probability.
This result forms a core subroutine in our subsequent two results.
The output is the vector $\bm{p}$ of all $4^n$ Pauli error probabilities, which we achieve by using the full power of $n$-bit measurements.
Here the relative precision and robustness to SPAM are crucial for applications, since average error rates in quantum gates are now routinely below 1\%, and in some cases as low as $10^{-6}$~\cite{Harty2014}.
In this regime, a meaningful additive error approximation would require at least $10^{12}$ samples, making it far outside the realm of practicality even if it is still technically ``efficient''.

\begin{result}[Informal summary of \Cref{prop:RatioProp,prop:errorbound}]\label{res:group}
For a nice Pauli channel on $n$ qubits, the Pauli error rates $\bm{p}$ can be estimated using $O\bigl(\epsilon^{-2} n 2^n\bigr)$ measurements by $\hat{\bm{p}}$ such that
\begin{align*}
    \| \hat{\bm{p}} - \bm{p} \|_2 
    \leq O(\epsilon)(1-p_0),
\end{align*}
holds with high probability, where $p_0$ is the probability of no error.
\end{result}

We can apply a randomized sampling routine to estimate a subnormalized probability distribution over an arbitrary set $\ss{E}$ of size $|\ss{E}| = s$. 
We note that the $\epsilon^{-4}$ scaling is an artifact of the proof technique and expect it can be improved to $\epsilon^{-2}$ by a more careful analysis of the bias in the output of the subroutine ${\bf Ratio}$ described in \cref{sec:analysis}.

\begin{result}[Informal statement of \Cref{thm:sparse}]
\label{res:sparse}
For a nice Pauli channel on $n$ qubits, a subnormalized distribution over any set $\ss{E}$ of $s$ Pauli errors can be estimated using $O(\epsilon^{-4}\log(s)\log(s/\epsilon^2))$ measurements by $\hat{\bm{p}}$ such that
\begin{align*}
	\| \hat{\bm{p}} - \bm{p} \|_\infty
	&\leq O(\epsilon)(1-p_0)
\end{align*}
holds with constant probability, where $p_0$ is the probability of no error.
\end{result}

The above sampling protocol can be applied directly to efficiently estimate the probability of all low-weight Pauli errors, which will account for the majority of the distribution under realistic physical assumptions in near-term high-performance quantum systems.
We also provide a tree-based search heuristic that can be used to identify sets of errors of interest.
Two natural candidates for interesting sets of errors are the best $s$-sparse approximation without assuming local errors, and high-weight Pauli errors that occur with probabilities that differ substantially from model predictions.
Although we do not give a formal analysis of the probability with which the search heuristic identifies a correct set, when it finds a sparse set with large measure it is certifiably correct as a description of the Pauli channel. 
This follows because \Cref{thm:sparse} can be applied to the final output without accounting for failure probabilities from intermediate steps.

Our final main result is the efficient reconstruction of any nice $n$-qubit Pauli channel in polynomial time in $n$ whenever the channel has bounded-degree correlations and the local marginals are positive, conditions which we quantify in \cref{sec:boundeddegree}.
We assume throughout that the correlations are modeled by a factor graph with fixed known topology.

\begin{result}[Informal statement of \Cref{prop:boundeddegree}.]
Let $\bm{p}$ be a nice $n$-qubit Pauli channel with a $k$-degree factor graph and positive marginals.
Then an estimate $\hat{\bm{p}}$ as a tensor network can be obtained using $O_k\bigl(\epsilon^{-2} n^2 \log n\bigr)$ measurements such that
\begin{align*}
	\| \hat{\bm{p}}-\bm{p} \|_1 \le O(\epsilon) \|\bm{1}_I-\bm{p}\|_\infty
\end{align*}
holds with high probability, where $O_k$ means the implicit big-$O$ constant depends exponentially on $k$.
Moreover, an estimate proportional to $\hat{\bm{p}}$ can be found in time $\mathrm{poly}(n)$.
\end{result}
Here we use the notation $\bm{1}_I$ to mean a probability distribution supported only on the identity Pauli operator, which corresponds to a noiseless channel.

\subsection{Proof techniques}

As mentioned above, we make use of a variant of randomized benchmarking (RB). 
RB was originally applied to the full unitary group and the Clifford group~\cite{Emerson2005, Dankert2009}, but subsequent work has developed it for many different groups~\cite{Emerson2007, Ryan2009, Gambetta2012, Barends2014a, Carignan-Dugas2015, Cross2016, Hashagen2018, Brown2018}; see Refs.~\cite{Franca2018, Helsen2018} for a general treatment. 
Our work focuses on RB over the Pauli group, and introduces several important advances over prior art. 

As shown by Harper et al.~\cite{Harper2018}, RB over the Clifford group provably gives relative precision estimates of a single parameter associated to a quantum channel, the average error rate.
Our proof extends this to the case of RB over the Pauli group.
The original proof of Ref.~\cite{Harper2018} isolates a single exponential decay from an RB signal using the method from Ref.~\cite{Fogarty2015}.
However, for the case of interest here, the Pauli group, there are $4^n$ parameters that must be learned instead of just one. 
While a general solution exists to the problem of isolating exponential decays~\cite{Helsen2018}, prior work estimates the individual decay rates serially using one-bit measurements. 
Our work greatly improves this by showing how to take the natural $n$-bit measurements arising from measuring the $n$ individual qubits and using these measurements to estimate up to $2^n$ parameters in parallel.
When sampled in this way, a Hadamard transformation can be done to isolate the exponentials on all data collected from the $n$-bit measurements simultaneously, avoiding the need for serial exponential fits. 
Our proof for the case of estimating probabilities for arbitrary sets of Pauli errors uses the essential ideas above with random sampling techniques. 

Finally, we assume that the Pauli channel error rates are given by a Markov random field over a known factor graph.
We can again apply the core subroutine for estimating Pauli channels to the marginal channel on each of the factors and then round these estimated marginals into a coherent global probability distribution using the Hammersley-Clifford theorem~\cite{Hammersley1971}.

\subsection{Comparison with Prior Work}

These results comprise the first proofs in the literature of recovery guarantees for quantum channels to relative precision while avoiding bias from SPAM errors.
As noted above, this is absolutely essential for applications.
Furthermore, Results 2 \& 3 are the first recovery guarantees known that apply to broad classes of non-trivial quantum channels on $n$ qubits with a scaling that is efficient in $n$ (when $s=\mathrm{poly}(n)$ or $k=O(1)$, respectively).
Lastly, although numerical practicality is not a focus of the present paper (it will be explored elsewhere~\cite{Harper2019}) we wish to stress that these results are \emph{practical} and immediately applicable to characterizing error rates in near-term devices, such as the 50- and 72-qubit processors announced by IBM and Google respectively.
To emphasize this last point, the results of our numerical simulations of these procedures on up to 100 qubits show excellent performance, and we have successfully implemented a variant of the method on a publicly accessible 14-qubit quantum device~\cite{Harper2019}. 
To put this in perspective, the previous record for quantum channel tomography is of a 3-qubit quantum gate~\cite{Rodionov2014}, which was only made possible by employing sophisticated methods from compressed sensing~\cite{Gross2010,Flammia2012,Shabani2011}. 
It is clear from these examples that our results represent a qualitative shift in the characterization of quantum devices.

Let us compare the scaling of our procedures to the best previously known results from the literature.
A naive, non-adaptive application of channel tomography with (single-copy) two-outcome measurements~\cite{Chuang1997} would require $\tilde{O}\bigl(\tfrac{d^4}{\epsilon^2}\bigr)$ state preparations and measurements to achieve an \emph{additive} precision $\epsilon$ in, say, the average fidelity of the channel, where $d=2^n$ for $n$ qubits.
Applying the best known procedures for state tomography~\cite{Haah2017, ODonnell2016, ODonnell2017} to the Choi state of the channel using a collective measurement requires $\tilde{\Theta}\bigl(\tfrac{d^4}{\epsilon^2}\bigr)$ copies, so this is roughly tight.
For sparse channels in a known superoperator basis (such as the Pauli basis), Ref.~\cite{Shabani2011} argues from the theory of compressed sensing that $O(s \log d)$ random product basis measurements should suffice to reconstruct an $s$-sparse channel in the Pauli basis, but it is not clear that the technical ``incoherence'' or restricted isometry properties that are used for efficiently reconstructing sparse vectors are satisfied in this case~\cite{Flammia2012}.
The recently introduced procedure of shadow tomography~\cite{Aaronson2017} could achieve an estimate of the Pauli error rates with a small ($\mathrm{poly}\log d$) number of copies of the Choi matrix for the channel, but at the cost of a large ($\mathrm{poly}(d)$) amount of nontrivial quantum computation and a copy complexity of $~1/\epsilon^5$ with respect to the additive error.
Our procedure uses much more trivial quantum resources, and with a circuit depth that is independent of the dimension.
The idea of tomography of matrix product states~\cite{Cramer2010a} has been extended to matrix product operators that are unitary in the numerical work of Ref.~\cite{Holzapfel2015}.
There is as yet no systematic mathematical treatment of matrix product operator tomography, or tomography of quantum channels with bounded correlations of the kind discussed in this paper.

The above tomographic techniques only provide \emph{additive} precision estimates rather than the \emph{relative} precision estimates achieved here.
Furthermore, these tomographic methods are not robust against SPAM errors, which create an unknown systematic error, or, equivalently, places an absolute floor on the achievable additive precision that can be obtained using these tomographic techniques~\cite{Merkel2012}.
The magnitude and complexity of SPAM errors scales with the system size, so that the achievable precision decreases for larger numbers of qubits.
These tomographic techniques could be made robust to SPAM errors using gate-set tomography~\cite{Merkel2012, Blume-Kohout2016}, at the cost of substantially increasing the resources required.
However, it is unclear how to obtain relative-precision estimates from any of these methods.

We achieve multiplicative precision and robustness to SPAM errors by using techniques originating from randomized benchmarking.
Randomized benchmarking tomography~\cite{Kimmel2013} also employs randomized benchmarking techniques to achieve robustness to SPAM, and can obtain tomographic reconstructions using $O\bigl(d^2\log(d)\bigr)$ randomized benchmarking experiments with an additive reconstruction error that scales as $O(d/\sqrt{\log d})$~\cite{Roth2018}.
Note that the additive error is essentially fundamental to the approach used in randomized benchmarking tomography because the decay rates are the fidelities between the noisy process and distinct Clifford gates.
These fidelities are significantly less than 1---even for ideal noise processes---resulting in rapid decays that are difficult to estimate precisely.

Besides the prior work on quantum tomography, there has also been a large amount of prior work on learning Markov random fields, both in terms of parameter estimation (including hardness results~\cite{Bresler2014}) and in terms of learning the structure of the factor graph~\cite{Chow1968, Abbeel2006, Bresler2013, Bresler2015, Hamilton2017, Klivans2017}.
These results are certainly relevant, but to make a direct comparison to the present results is complicated by the fact that our probability distribution contains variables that might be called ``quasi-latent'', in the sense that they are not observable to all queries, only to some, and queries that probe one variable must necessarily hide others.
This is due to the symplectic structure of the Pauli group: we can only simultaneously measure observables that commute.
Therefore learning a ``symplectic Markov field'' seems to be an inherently different task from what has been previously considered.

\section{Mathematical Preliminaries}

Given a set of $n$ qubits with Hilbert space dimension $d=2^n$, let us introduce the following notation.
Let $\mathbb{P}^n$ denote the group of Pauli operators acting on $n$ qubits, and $\ss{P}^n = \mathbb{P}^n/\langle i\rangle$ be the quotient of $\mathbb{P}^n$ with its center.
The group $\ss{P}^n$ is Abelian and isomorphic to $\mathbb{Z}_2^{2n}$, so we will label elements in $\ss{P}^n$ by bit strings of length $2n$.
We will abuse notation and talk about $a \in \ss{P}^n$ when we mean a matrix $P_a \in \ss{P}^n$ up to an overall phase.
For concreteness, for a bit string $a$ and a single-qubit Pauli operator $A$ we write $A[a] = \bigotimes_{a_i \in a} A^{a_i}$ and choose the elements of $\ss{P}^n$ to be
\begin{align}
	P_a = P_{(a_x,a_z)} = i^{a_x \cdot a_z} X[a_x]Z[a_z]
\end{align}
where $a\in \mathbb{Z}_2^{2n}$ and $X$ and $Z$ are the standard single-qubit Pauli matrices.
The choice of phase is arbitrary but this choice of phase ensures that every $P_a$ is Hermitian.
For any two Pauli matrices $P_a$ and $P_b$ with respective images $a$ and $b$ in $\mathbb{Z}_2^{2n}$, we have $P_a P_b = (-1)^{\langle a , b\rangle} P_b P_a$ where
\begin{align}
	\langle a,b\rangle = a_x\cdot b_z + a_z\cdot b_x \mod 2
\end{align}
is a binary symplectic form, and so it is symmetric and linear in each argument.
We will typically omit the mod 2 because it is clear from context.

We define a stabilizer group to be a linear subspace of $\mathbb{Z}_2^{2n}$ such that $\langle a,b\rangle=0$ for all $a,b\in\ss{S}$.
Therefore every stabilizer group is a group $\ss{S}\subset\sf{P}^n$ whose elements all commute as matrices.
Note that stabilizer groups are often defined in terms of the Pauli matrices with specific phase conventions~\cite{Gottesman1997}.

The commutant of a \textit{set} $\ss{G} \subseteq \ss{P}^n$ is the group $\ss{C_G} \subseteq \ss{P}^n$ that is orthogonal to $\ss{G}$ according to the binary symplectic form.
That is,
\begin{align}
    \ss{C_G} = \{a\in\ss{P}^n:\forall g\in\ss{G},\ \langle a,g\rangle = 0\}.
\end{align}
We refer to $\ss{C_G}$ as the commutant because every element of $\ss{C_G}$ commutes with every element of $\ss{G}$ as matrices in $\mathbb{P}^n$.
When $\ss{G}$ is a group (i.e., closed under multiplication), $\ss{C_{C_G}} = \ss{G}$ by the double commutant theorem.
We define the anti-commutant $\ss{A_G}$ of $\ss{G}\subseteq\ss{P}^n$ to be the quotient group $\ss{P}^n/\ss{C_G}$.
Any $b\in\ss{P}^n$ can be uniquely decomposed as
\begin{align}
b = a + c
\end{align}
for some $a\in\ss{A_G}$ and $c\in\ss{C_G}$.

Any element of the Pauli group (except those proportional to the identity) commutes with exactly half of the other elements in the group.
A similar statement is true for any subgroup, and we have the following lemma.
In what follows, we use the Iverson bracket notation, where if $x$ is any logical proposition (such as a set inclusion), then
\begin{align}
    1[x] = \begin{cases}
    1 & \text{if $x$ is true},\\
    0 & \text{if $x$ is false}.
    \end{cases}
\end{align}

\begin{lemma}\label{lem:pauliOrthogonality}
For any Pauli $a\in\ss{P}^n$ and any group $\ss{G}\subseteq\ss{P}^n$,
\begin{align*}
    \frac{1}{|\ss{G}|}\sum_{b\in\ss{G}} (-1)^{\langle a, b\rangle} = 1[a\in\ss{C_G}].
\end{align*}
\end{lemma}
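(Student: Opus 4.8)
The plan is to recognize the left-hand side as a standard character sum over the finite abelian group $\ss{G}$. Because the symplectic form is linear in its second argument, the map $\chi_a\colon \ss{G}\to\{\pm 1\}$ defined by $\chi_a(b)=(-1)^{\langle a,b\rangle}$ is a group homomorphism; that is, $\chi_a(b+b')=(-1)^{\langle a,b\rangle+\langle a,b'\rangle}=\chi_a(b)\chi_a(b')$. So the claim reduces to the familiar dichotomy that summing a character over a finite group yields either the group order (for the trivial character) or zero (otherwise).

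First I would dispose of the case $a\in\ss{C_G}$. By the definition of the commutant, $\langle a,b\rangle=0$ for every $b\in\ss{G}$, so each summand equals $1$ and the sum is $|\ss{G}|$; dividing by $|\ss{G}|$ gives $1$, matching the right-hand side.

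Next I would handle $a\notin\ss{C_G}$. By definition there exists some $b_0\in\ss{G}$ with $\langle a,b_0\rangle=1$, hence $\chi_a(b_0)=-1$. Since $\ss{G}$ is a group, the translation $b\mapsto b+b_0$ permutes $\ss{G}$, so reindexing the sum and using the homomorphism property gives
\begin{align*}
\sum_{b\in\ss{G}}(-1)^{\langle a,b\rangle}
=\sum_{b\in\ss{G}}\chi_a(b+b_0)
=\chi_a(b_0)\sum_{b\in\ss{G}}\chi_a(b)
=-\sum_{b\in\ss{G}}(-1)^{\langle a,b\rangle}.
\end{align*}
This forces the sum to vanish, so the left-hand side is $0$, again matching the right-hand side.

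There is no serious obstacle here: the only points requiring care are that $\ss{G}$ being a group guarantees translation by $b_0$ is a bijection (so the reindexing is valid), and that the failure of $a\in\ss{C_G}$ supplies the needed element $b_0$ with odd symplectic pairing directly from the definition of the commutant. Both are immediate.
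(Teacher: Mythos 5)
Your proof is correct. The paper states this lemma without proof, treating it as the standard orthogonality-of-characters fact for the abelian group $\ss{G}\cong\mathbb{Z}_2^k$; your argument --- the trivial-character case when $a\in\ss{C_G}$, and the translation-by-$b_0$ trick forcing the sum to vanish otherwise --- is exactly the canonical justification the paper leaves implicit, with the two delicate points (bilinearity of the symplectic form making $\chi_a$ a homomorphism, and the group property making the reindexing a bijection) handled correctly.
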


Our Pauli estimation procedure varies over states and measurements whose stabilizer groups have elements that cover some set $\ss{X}$ of errors.
This motivates the following definition.
\begin{definition}[Stabilizer coverings]
A \textit{stabilizer covering} $\ss{O}$ of a set $\ss{X}\subseteq \ss{P}^n$ is a set of stabilizer groups $\ss{O} = \{\ss{S}_j\}$ such that $\ss{X}\subseteq \bigcup_j \ss{S}_j$.
\end{definition}

For important applications, $\ss{X}$ will not necessarily be a group.
For example, $\ss{X}$ could be the set of Pauli errors with weight $\le w$, which is not generally a group.
The next lemma bounds the size of a stabilizer covering for any set $\ss{X}$, and in some cases we can improve over the trivial bound.

\begin{lemma}\label{lem:covering}
For any set $\ss{X}\in\ss{P}^n$ there exists a stabilizer covering $\ss{O}$ of $\ss{X}$ with cardinality
\begin{align}
    |\ss{O}| \le \min\bigl\{ |\ss{X}|, \sqrt{|\langle \ss{X} \rangle/\ss{S_X}|}+1\bigr\}\,,
\end{align}
where $\langle \ss{X} \rangle$ is the group generated by the elements of $\ss{X}$ and $\ss{S_X}$ is any maximal stabilizer subgroup of $\langle \ss{X} \rangle$.
In particular, the bound is tight when $\ss{X}$ is a group, and $|\ss{O}| \le 2^n +1$ unconditionally.
\end{lemma}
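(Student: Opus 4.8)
The plan is to produce two explicit stabilizer coverings of $\ss{X}$ and keep the smaller one, which yields the minimum in the stated bound. For the term $|\ss{X}|$, note that for each $a\in\ss{X}$ the set $\ss{S}_a=\{0,a\}$ is a stabilizer group, since the symplectic form is alternating and so $\langle a,a\rangle=0$; hence $\{\ss{S}_a:a\in\ss{X}\}$ is a covering of size at most $|\ss{X}|$. For the second term it suffices to cover the group $G=\langle\ss{X}\rangle$, because $\ss{X}\subseteq G$, so I will build a covering of $G$ by maximal stabilizer subgroups.

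First I would pass to the symplectic geometry of $G$. Let $\ss{R}=\{a\in G:\langle a,b\rangle=0\text{ for all }b\in G\}$ be the radical of the form restricted to $G$ (the maximal subgroup of $G$ whose elements commute with all of $G$); it is itself a stabilizer group, and the quotient $V=G/\ss{R}$ carries a \emph{nondegenerate} alternating form, hence has even dimension $2m$ with $|V|=|\langle\ss{X}\rangle/\ss{R}|=2^{2m}$. The heart of the argument is to invoke a \emph{symplectic spread} of $V$: a family of $2^m+1$ Lagrangian (maximal isotropic) subspaces $L_0,\dots,L_{2^m}$ whose nonzero parts partition $V\setminus\{0\}$. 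Pulling each back along the projection $\pi\colon G\to V$ yields subgroups $\ss{S}_i=\pi^{-1}(L_i)\supseteq\ss{R}$. Because $\ss{R}$ lies in the radical and the induced form vanishes on $L_i$, each $\ss{S}_i$ is totally isotropic, i.e.\ a stabilizer group, and $\bigcup_i\ss{S}_i=\pi^{-1}(V)=G\supseteq\ss{X}$. This is a stabilizer covering of cardinality $2^m+1=\sqrt{|\langle\ss{X}\rangle/\ss{R}|}+1$, which is the bound claimed in the lemma, and since each $\ss{S}_i$ is a maximal stabilizer subgroup containing $\ss{R}$ with $|\ss{S}_i/\ss{R}|=2^m$, the count is independent of the choice of maximal subgroup.

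The step I expect to be the main obstacle is the existence of the symplectic spread over $\mathbb{F}_2$. I would establish it through the standard field model: every nondegenerate symplectic $\mathbb{F}_2$-space of dimension $2m$ is isometric to $\mathbb{F}_{2^m}\oplus\mathbb{F}_{2^m}$ equipped with the trace form $\langle(a,b),(c,d)\rangle=\tr_{\mathbb{F}_{2^m}/\mathbb{F}_2}(ad+bc)$, for which the regular (Desarguesian) spread $\{0\oplus\mathbb{F}_{2^m}\}\cup\bigl\{\{(x,cx):x\in\mathbb{F}_{2^m}\}:c\in\mathbb{F}_{2^m}\bigr\}$ consists of $2^m+1$ totally isotropic subspaces that partition the nonzero vectors. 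Transporting this spread back through the isometry $V\cong\mathbb{F}_{2^m}\oplus\mathbb{F}_{2^m}$ supplies the $L_i$, and checking that each trace-form component is isotropic is a one-line trace computation using $ad+bc\mapsto cxy+cxy=0$.

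It remains to address tightness and the unconditional bound. When $\ss{X}=G$ is a group, a matching lower bound shows the construction is optimal: the image in $V$ of any stabilizer subgroup is isotropic, hence has at most $2^m$ elements and covers at most $2^m-1$ nonzero classes, so covering $V\setminus\{0\}$ forces at least $(2^{2m}-1)/(2^m-1)=2^m+1$ groups. Finally, since $G\subseteq\ss{P}^n\cong\mathbb{Z}_2^{2n}$ we always have $2m\le 2n$, and the extreme case $\ss{X}=\ss{P}^n$, where the form is nondegenerate of dimension $2n$, gives the unconditional value $2^n+1$.
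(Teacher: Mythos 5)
Your proof is correct and follows essentially the same route as the paper: the two-element groups $\{0,a\}$ give the $|\ss{X}|$ term, and the second term comes from pulling back to $\langle\ss{X}\rangle$ a complete family of $2^m+1$ Lagrangians partitioning the nonzero elements of the nondegenerate quotient, which is precisely the paper's covering by the stabilizer groups of a complete set of mutually unbiased bases taken in direct sum with $\ss{S_X}$, followed by the same counting argument for tightness. The only differences are presentational: you construct the symplectic spread explicitly via the trace form on $\mathbb{F}_{2^m}\oplus\mathbb{F}_{2^m}$ where the paper simply cites the known MUB construction, and you explicitly identify the subgroup being quotiented by as the radical (center) of $\langle\ss{X}\rangle$, which is in fact how the paper's phrase ``maximal stabilizer subgroup'' must be read for its quotient to carry the symplectic structure of $\ss{P}^k$ and for the stated bound to hold.
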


\begin{proof}
For any set $\ss{X}$, a simple construction of a stabilizer covering is the set of all two-element stabilizer groups generated by each nontrivial element of $\ss{X}$. 
That is, $\ss{O} = \{\langle r \rangle: r \in \ss{X}\backslash\{0\}\}$.
Clearly this set has $|\ss{O}| \le |\ss{X}|$.

Now consider a stabilizer covering of $\langle\ss{X}\rangle$, which obviously will also cover $\ss{X}$.
As a subgroup of $\ss{P}^n$, the generators can be partitioned into $s$ generators of a stabilizer group $\ss{S_X}$ and $2k$ generators of the quotient $\langle \ss{X} \rangle/\ss{S_X}$.
The quotient comprises $k$ logical qubits, so it is isomorphic to $\ss{P}^k$ and has $4^k$ elements.
Then we can construct $\ss{O}$ to be the stabilizer groups of a complete set of $|\ss{O}| = 2^k+1$ mutually unbiased bases (MUBs)~\cite{Wootters1989} acting on the $k$ logical qubits, where each MUB is taken in direct sum with $\ss{S_X}$.
The cardinality of this choice for $\ss{O}$ is just $2^k+1 = \sqrt{4^k}+1$, as claimed.

The MUBs give a minimal covering when $\ss{X}$ is a group, as there are at most $2^k-1$ non-identity Pauli elements in a stabilizer group on $k$ qubits, so at least $2^k+1$ stabilizer groups are required to cover all $4^k-1$ non-identity elements of $\ss{P}^k$.
\end{proof}

\subsection{States and measurements}

We will use the following states and measurements constructed from stabilizer groups.
Recall that $d = 2^n$. 
For every $a\in\ss{P}^n$ and every stabilizer group $\ss{S}$,
\begin{align}
\label{eq:stabilizerstate}
	\rho_{\ss{S}, a} = \frac{1}{d}\sum_{s\in\ss{S}} (-1)^{\langle s, a\rangle} P_s
\end{align}
is a valid quantum state, known as a stabilizer state.
Note that for any $a\in\ss{P}^n$ and $b\in\ss{C_S}$, $\rho_{\ss{S}, a + b} = \rho_{\ss{S}, a}$ and so we can uniquely label stabilizer states by the error syndromes $e\in\ss{A_S}$.
Moreover,
\begin{align}
\label{eq:syndromemeasurement}
	E_\ss{S} = \left\{E_{\ss{S},e} = \frac{d}{\lvert \ss{A_S}\rvert}\rho_{\ss{S},e}: e\in\ss{A_s}\right\}
\end{align}
is a valid measurement, otherwise known as a syndrome measurement.
The normalization is such that each $E_{\ss{S},e}$ is an orthogonal projector.
If $\dim\ss{S}=n$, then $\rho_{\ss{S},e}$ is a pure state and $E_{\ss{S}}$ is a rank-1 projective measurement.

\subsection{Quantum channels}

We wish to partially characterize a general linear map $\mathcal{L}:\mathbb{C}^{d\times d}\to\mathbb{C}^{d\times d}$ which has a Kraus operator representation
\begin{align}
\label{eq:genlinKraus}
	\mathcal{L}(M) = \sum_k A_k M B_k^\dagger.
\end{align}
We will generally be interested in completely positive (CP) maps, where we can choose $A_k = B_k$ for all $k$.
A specific class of channels that we frequently use are CP maps with a single unitary Kraus operator $U$.
For any unitary matrix $U\in U(d)$, we implicitly define the corresponding ideal channel $\mathcal{U}$ that maps an arbitrary matrix $M$ to $UMU^\dagger$.
Note that the implicit function is many to one, as it maps phase multiples of a unitary matrix to the same channel.
For any $a,b\in\sf{P}^n$,
\begin{align}\label{eq:pauliTransform}
    \mathcal{P}_a(P_b) = P_a P_b P_a^\dag =  (-1)^{\langle a, b\rangle} P_b.
\end{align}

We will partially characterize a general linear map by performing a Pauli twirl of the channel to reduce the effective channel to a Pauli channel, that is, to a linear map with a Kraus representation
\begin{align}
	\mathcal{E}(\rho) = \sum_{a \in \ss{P}^n} p_a P_a \rho P_a
\end{align}
where the $p_a$ are \textit{Pauli error rates} with $p_a\geq 0$ for CP maps.
For any set $\mathbb{T}\subset U(d)$, we define the $\mathbb{T}$-twirl of a channel $\mathcal{L}$ to be
\begin{align}
    \mathcal{L}^\mathbb{T} = \frac{1}{|\mathbb{T}|} \sum_{T\in\mathbb{T}} \mathcal{T} \mathcal{L} \mathcal{T}^\dagger.
\end{align}
In Kraus form, we have
\begin{align}\label{eq:kraus_twirl}
    \mathcal{L}^\mathbb{T}(M) = \frac{1}{|\mathbb{T}|} \sum_{T\in\mathbb{T}} \sum_k \mathcal{T}(A_k) \rho \mathcal{T}(B_k)^\dagger.
\end{align}
As we now show, the  Pauli-twirled channel $\mc{L}^{\ss{P}^n}$ is a Pauli channel where the error rates are directly related to the Kraus operators of the untwirled channel $\mc{L}$.
This lemma has previously appeared in \cite[Lemma 5.2.4.]{DankertThesis}, but we provide a proof here for completeness. 

\begin{lemma}[Pauli error rates]\label{lem:pauliTwirl}
For a linear map $\mc L(M)$ as in \cref{eq:genlinKraus} with Kraus operators $A_k = \sum_{a\in\ss{P}^n} l_{k,a} P_a$ and $B_k = \sum_{b\in\ss{P}^n} r_{k,b} P_b$, the Pauli error rates $p_a$ of the twirled channel $\mc{L}^{\ss{P}^n}$ are given by
\begin{align}\label{eq:effectiveErrorRates}
	p_a = \sum_k l_{k,a} r_{k,a}^*.
\end{align}
\end{lemma}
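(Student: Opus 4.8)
The plan is to compute the Pauli twirl $\mc{L}^{\ss{P}^n}$ directly by expanding everything in the Pauli basis and collapsing the sum over twirling elements using \Cref{lem:pauliOrthogonality}. First I would write out the twirl explicitly. Since each $P_c$ is Hermitian, $\mc{P}_c^\dagger=\mc{P}_c$, so
\begin{align*}
    \mc{L}^{\ss{P}^n}(M) = \frac{1}{|\ss{P}^n|}\sum_{c\in\ss{P}^n}\sum_k P_c A_k P_c\, M\, P_c B_k^\dagger P_c.
\end{align*}
Next I would substitute the Pauli expansions, using that $B_k^\dagger = \sum_b r_{k,b}^* P_b$ because the $P_b$ are Hermitian, and then apply the conjugation rule \cref{eq:pauliTransform}, which gives $P_c P_a P_c = (-1)^{\langle c,a\rangle}P_a$. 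This turns each conjugated Kraus operator into $P_c A_k P_c = \sum_a l_{k,a}(-1)^{\langle c,a\rangle}P_a$ and similarly for $B_k^\dagger$, so that the conjugation phases appear additively.

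The key simplification is then to collect the phases. Using linearity and symmetry of the symplectic form, $\langle c,a\rangle + \langle c,b\rangle = \langle c, a+b\rangle$, so the only dependence on $c$ is through a single character, and the channel becomes
\begin{align*}
    \mc{L}^{\ss{P}^n}(M) = \sum_k \sum_{a,b} l_{k,a}\, r_{k,b}^* \Bigl(\frac{1}{|\ss{P}^n|}\sum_{c\in\ss{P}^n}(-1)^{\langle c, a+b\rangle}\Bigr) P_a M P_b.
\end{align*}
Now I would invoke \Cref{lem:pauliOrthogonality} with $\ss{G}=\ss{P}^n$, which evaluates the bracketed sum to $1[a+b\in\ss{C}_{\ss{P}^n}]$. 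Since the symplectic form is non-degenerate on the full group, the commutant $\ss{C}_{\ss{P}^n}=\{0\}$, so this indicator is $1[a+b=0]=1[a=b]$ in $\mathbb{Z}_2^{2n}$. The off-diagonal terms $a\ne b$ therefore vanish, leaving $\mc{L}^{\ss{P}^n}(M) = \sum_a\bigl(\sum_k l_{k,a}r_{k,a}^*\bigr)P_a M P_a$. Matching this against the definition of a Pauli channel $\mc{E}(M)=\sum_a p_a P_a M P_a$ identifies $p_a = \sum_k l_{k,a}r_{k,a}^*$, as claimed.

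This argument is essentially a direct computation, so I do not expect a genuine obstacle; the only points requiring care are bookkeeping rather than ideas. Specifically, I must track the complex conjugation correctly on the $B_k$ side (so that $r_{k,b}^*$ rather than $r_{k,b}$ appears), use Hermiticity of the Pauli matrices to suppress spurious phases in both the conjugations and the adjoints, and — the one conceptual step — recognize that twirling over the full Pauli group forces $a=b$ because only the trivial element lies in its commutant. Everything else follows from linearity of $\langle\cdot,\cdot\rangle$ and the orthogonality relation already established.
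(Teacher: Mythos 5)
Your proposal is correct and follows essentially the same route as the paper's proof: expand the Kraus operators in the Hermitian Pauli basis, use the conjugation rule of \cref{eq:pauliTransform} to pull out the phases $(-1)^{\langle c,a+b\rangle}$, and collapse the sum over the twirl with \cref{lem:pauliOrthogonality} and the fact that $\ss{C}_{\ss{P}^n}=0$ to force $a=b$. The only difference is that you spell out the Hermiticity bookkeeping (for $\mc{P}_c^\dagger$ and for $B_k^\dagger$ giving $r_{k,b}^*$) which the paper leaves implicit.
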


\begin{proof}
From \cref{eq:kraus_twirl},
\begin{align*}
	\mc{L}^{\ss{P}^n}(M) = |\ss{P}^n|^{-1}\sum_{c\in\ss{P}^n}\sum_k \mathcal{P}_c(A_k) M \mathcal{P}_c(B_k)^\dagger.
\end{align*}
As the Pauli matrices are a Hermitian orthogonal basis for $\mathbb{C}^{d\times d}$, we can expand the Kraus operators as $A_k = \sum_{a\in\ss{P}^n} l_{k,a} P_a$ and $B_k = \sum_{b\in\ss{P}^n} r_{k,b} P_b$, so that
\begin{align}
\begin{split}
    \mc L^{\ss{P}^n}(M)
    &= |\ss{P}^n|^{-1} \sum_{k} \sum_{a,b,c \in\ss{P}^n}  l_{k,a} r_{k,b}^* (-1)^{\langle c, a+b\rangle} P_a M P_b\\
    &= \sum_{k} \sum_{a,b \in\ss{P}^n}  l_{k,a} r_{k,b}^* 1[a=b] P_a M P_b \\
    &= \sum_{a\in\ss{P}^n}\left(\sum_k l_{k,a} r_{k,a}^*\right) P_a M P_a,
\end{split}
\end{align}
where in the second line we have used \cref{lem:pauliOrthogonality} with the fact that $\ss{C}_{\ss{P}^n} = 0$.
\end{proof}

In the most important case of a CP map, $l_{k,a} = r_{k,a}$ and the error rates are manifestly nonnegative.
However, effective negative error rates are possible in the presence of initial correlations, and this would correspond to the inner product in \cref{eq:effectiveErrorRates} being negative.
It is also noteworthy that this definition is independent of the freedom in the Kraus operators of a CP map, since the unitary freedom on the index $k$ above does not change the inner product.

We can simplify the representation of Pauli channels using the vectorization map, a linear map $| \,\cdot\, ) : X \to |X)$ that acts by stacking the columns of $X$ in a given basis and satisfies the identity $|AXB^\dagger) = B^T\otimes A |X)$. 
Note that a consequence of this is that the vectorization map can be implemented as $|X) = I\otimes X|\Gamma)$, where $|\Gamma) = \sum_j |j\rangle\otimes|j\rangle$ is a maximally entangled state and the $|j\rangle$ form an orthonormal basis.
In this notation, the dual vectors map to scalars via the Hilbert-Schmidt inner product, $(A|B) = \mathrm{Tr}(A^\dag B)$.
Then we can write the action of a channel $\mc{L}$ as a superoperator acting via left multiplication so that $\mc{L}|\rho) = \bigl|\mc{L}(\rho)\bigr)$.
Note that we overload our notation by using the same symbol for the abstract channel and the superoperator matrix.

Any Pauli channel can be expressed as a superoperator in the Pauli basis as
\begin{align}\label{eq:pauliErrors}
	\mc{E} = \frac{1}{d}\sum_{a,b \in \ss{P}^n}  (-1)^{\langle a,b \rangle} p_b |P_a)(P_a|.
\end{align}
It is convenient to make the following two simplifications.
First, we note that $|P_a)$ is not a normalized vector, so we adopt the convention that
\begin{align}\label{eq:normalization}
	|a) = \tfrac{1}{\sqrt{d}} |P_a)
\end{align}
so that $(a|b) = \tfrac{1}{d} \mathrm{Tr}(P_a^\dagger P_b) = \delta_{a,b}$.
This notation is unambiguous as long as we distinguish carefully between $a$ and $P_a$ from the given context inside the vectorization map.
Second, for any two sets of Pauli matrices $\ss{A}, \ss{B} \subseteq \ss{P}^n$, we define a Walsh-Hadamard transform $\W_{\ss{A},\ss{B}}: \mathbb{C}^{|\ss{B}|} \to \mathbb{C}^{|\ss{A}|}$ whose matrix representation is
\begin{align}
\label{eq:WHtransform}
\W_\ss{A, B} = \sum_{a\in\ss{A}, b\in\ss{B}} (-1)^{\langle a, b\rangle} |a)(b|.
\end{align}
As we prove in \cref{lem:WHtransform} below, this transformation is proportional to an isometry for specific groups $\ss{A}$ and $\ss{B}$.

Given these definitions, the vector of Pauli error rates $\bm{p}$ is related to the superoperator representation in terms of the \textit{Pauli channel eigenvalues}, or just Pauli eigenvalues for short, which are defined as,
\begin{align}
\label{eq:p-ptilde}
	\bm{f} = \W \bm{p},
\end{align}
where we have simplified the notation by using $\W = \W_{\ss{P,P}}$.
After this change of variables we have simply
\begin{align}
\label{eq:Paulichannel}
	\mc{E} = \sum_{a \in \ss{P}^n} f_a  |a)(a|.
\end{align}
With this convention, $f_0 = \sum_a p_a \le 1$ with equality if the channel is trace preserving, and all other channel eigenvalues lie inside the interval $[-f_0, f_0]$ if the channel is completely positive.
We are most interested in channels that are close to the identity in the sense that the eigenvalues are in some small, strictly positive interval $[\epsilon,1]$, which is appropriate when the noise is weak.
Consequently, for any $h\in\ss{P}^n$ and any set $\ss{A}\subseteq\ss{P}^n$ we define
\begin{align}\label{eq:defr}
    r_h = 1 - f_h, \quad
    \bm{r}_\ss{A} = \sum_{a\in\ss{A}} r_a|a), \quad
    \bm{1}_\ss{A} = \sum_{a\in\ss{A}} |a).
\end{align}
We will omit the subscripts from the vector notation when $\ss{A} = \ss{P}^n$.

In what follows, our estimation strategy will be to infer both $\bm{r}$ and, by then doing the inverse transform, $\bm{p} = \W^{-1}( \bm{1} - \bm{r})$, where we can evaluate the inverse using the following lemma.

\begin{lemma}
\label{lem:WHtransform}
For any subgroups $\ss{A},\ss{B}\subseteq\ss{P}^n$, $\W_\ss{A, B}$ satisfies
\begin{align*}
	\W_\ss{A, B}^\dagger \W_\ss{A, B} = |\ss{A}|\sum_{b,b'\in\ss{B}:b+b'\in\ss{C_A}} |b')(b|.
\end{align*}
In particular, when $\ss{C_A \cap B}$ is trivial, we have that $\W_\ss{A, B}$ is proportional to an isometry, with $\W_\ss{A, B}^\dagger \W_\ss{A, B}= |\ss{A}| \Pi_\ss{B}$ where $\Pi_\ss{B} = \sum_{b \in \ss{B}} |b)(b|$ is the projector onto $\ss{B}$.
\end{lemma}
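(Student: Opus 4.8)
The plan is to expand $\W_\ss{A,B}^\dagger\W_\ss{A,B}$ directly from the definition in \cref{eq:WHtransform} and simplify using the orthonormality of the normalized Pauli vectors. Writing out both factors, I would obtain a quadruple sum over $a,a'\in\ss{A}$ and $b,b'\in\ss{B}$,
\[
\W_\ss{A,B}^\dagger\W_\ss{A,B}=\sum_{a,a'\in\ss{A}}\sum_{b,b'\in\ss{B}} (-1)^{\langle a',b'\rangle+\langle a,b\rangle}\,|b')(a'|a)(b|,
\]
and then use $(a'|a)=\delta_{a',a}$ (the orthonormality following \cref{eq:normalization}) to kill the sum over $a'$, leaving a single sum over $a\in\ss{A}$ and a double sum over $b,b'\in\ss{B}$.

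Next, since the symplectic form is linear in each argument, I would combine the two phases into $(-1)^{\langle a,\,b+b'\rangle}$ and reorganize so that the $a$-sum appears as an inner sum with $b,b'$ held fixed:
\[
\W_\ss{A,B}^\dagger\W_\ss{A,B}=\sum_{b,b'\in\ss{B}}\Bigl(\sum_{a\in\ss{A}}(-1)^{\langle a,\,b+b'\rangle}\Bigr)|b')(b|.
\]
The key step is then to recognize the inner sum as an instance of \cref{lem:pauliOrthogonality}: using the symmetry $\langle a,b+b'\rangle=\langle b+b',a\rangle$ and taking $\ss{G}=\ss{A}$ with the fixed Pauli $b+b'$, the inner sum evaluates to $|\ss{A}|\,1[b+b'\in\ss{C_A}]$. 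Substituting this yields exactly the claimed identity, with the Iverson bracket restricting the double sum to pairs satisfying $b+b'\in\ss{C_A}$.

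For the second part, I would observe that since $\ss{B}$ is a group, $b+b'\in\ss{B}$ for all $b,b'\in\ss{B}$; combined with the constraint $b+b'\in\ss{C_A}$ this forces $b+b'\in\ss{C_A}\cap\ss{B}$. When this intersection is trivial we must have $b+b'=0$, i.e.\ $b=b'$ in $\mathbb{Z}_2^{2n}$, so the double sum collapses to $|\ss{A}|\sum_{b\in\ss{B}}|b)(b|=|\ss{A}|\,\Pi_\ss{B}$, establishing that $\W_\ss{A,B}$ is proportional to an isometry on the $\ss{B}$ subspace. The only thing requiring care is the correct application of \cref{lem:pauliOrthogonality} — in particular invoking the symmetry of the symplectic form so that the summation variable lives in the group $\ss{A}$ while $b+b'$ plays the role of the fixed Pauli — rather than any genuine conceptual obstacle; the remainder is routine index bookkeeping.
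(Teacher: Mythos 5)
Your proposal is correct and follows essentially the same route as the paper's own proof: expand the quadruple sum, use orthonormality of the normalized Pauli vectors to collapse the $a'$ sum, combine phases by bilinearity, and invoke \cref{lem:pauliOrthogonality} to produce the Iverson bracket, with the group property of $\ss{B}$ and triviality of $\ss{C_A}\cap\ss{B}$ giving the isometry case. No gaps; the careful note about applying \cref{lem:pauliOrthogonality} via symmetry of the symplectic form is exactly the right point to check.
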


\begin{proof}
From \cref{eq:WHtransform},
\begin{align*}
	\W_\ss{A, B}^\dagger \W_\ss{A, B}
	&= \sum_{a,a'\in\ss{A}}\sum_{b,b'\in\ss{B}} (-1)^{\langle a,b\rangle + \langle a',b'\rangle} |b')(a'|a)(b| \\
\	&= \sum_{b,b'\in\ss{B}} \left(\sum_{a\in\ss{A}}(-1)^{\langle a,b+b'\rangle}\right) |b')(b| \\
    &= |\ss{A}|\sum_{b,b'\in\ss{B}} 1[b'+b\in \ss{C_A}] |b')(b|,
\end{align*}
where we have used \cref{lem:pauliOrthogonality} to obtain the third line.

When $\ss{B}$ is a group and $\ss{C_A \cap B}$ is trivial, $b+b'\in\ss{C_A}$ if and only if $b'= -b = b$.
\end{proof}

Finally, we can evaluate the quality of a reconstructed set of errors $\hat{\bm p}$ either in terms of a norm $\| \hat{\bm p} - \bm{p} \|$ or a figure of merit on the space of linear maps.
The natural norm would be the 1-norm, which is related to the average gate infidelity $r(\mc{E})$ and the diamond distance to the identity $\epsilon_\diamond(\mc{E})$ by
\begin{align}\label{eq:WF}
\tfrac{1}{2} \||0)-\bm{p}\|_1 = \epsilon_\diamond(\mc{E}) = (1+1/d)r(\mc{E}),
\end{align}
which holds for Pauli channels~\cite{Magesan2012a}. 
Here we have $r(\mc{E}) = 1-\int \mathrm{d}\psi \tr\bigl[\psi\mc{E}(\psi)\bigr]$, where $\mathrm{d}\psi$ is the uniform (Haar) measure over all pure state projectors $\psi$ and $\epsilon_\diamond(\mc{E}) = \sup_{\rho} \frac{1}{2}\|\rho-\mc{E}\otimes\mc{I}(\rho)\|_1$, where the supremum is over all valid quantum states on an enlarged space with an arbitrary-sized ancilla. 

However, motivated by physical considerations we will project reconstructions into the nearest point in the set of probability distributions, or potentially to the set of subnormalized probability distributions.
The nearest point is not uniquely defined in general according to some norms, such as the 1-norm, and so in some cases we will instead use the 2-norm for which the nearest point is uniquely defined.

\subsection{Model assumptions}\label{sec:modelassumptions}

Our procedure makes use of the following primitives: preparations of stabilizer states, Pauli transformations, and syndrome measurements.
An experimental implementation of our procedure will necessarily involve noisy versions of all primitives.
For clarity, we always denote the noisy version of a primitive with an overset $\sim$, e.g., $\tilde{A}$ is a noisy implementation of the ideal operation $A$ (whether $A$ is a state preparation, a measurement operator, or a channel).

For ease of analysis, we assume that the noise on the primitives is independent, time-stationary, and Markovian.
In particular, we assume that the noisy preparation of a state $\rho$ and a noisy syndrome measurement of a stabilizer group $\ss{H}$ are independent of the circuit to be applied, so that they can be written as a density operator $\tilde{\rho}$ and a positive-operator-valued measure (POVM) $\{\tilde{E}_{\ss{H},e}:e\in\ss{E_H}\}$ respectively.

Similarly, we require that the noise in the implementations of a set of twirling channels $\mathcal{T}$ are independent of the particular twirl being implemented and the remainder of the circuit and so can be written as $\noisy{T} = \mathcal{T}\Lambda$ for some fixed completely positive (CP) map $\Lambda$.
We codify this in the following definition.

\begin{definition}[GTM noise]
\label{def:GTMnoise}
A noise model is time-stationary if the noisy implementation $\noisy{U}(t)$ of a gate $\mathcal{U}$ at time $t$ is a linear map that is independent of $t$ and if state preparations and measurements are respectively described by fixed density operators and POVMs.
A noise model for the Pauli group is called \emph{GTM} (gate-independent, time-stationary, Markovian) if it is time-stationary and there exists a completely positive trace-preserving map $\Lambda$ such that $\noisy{P} = \mathcal{P}\Lambda$ for all $\mathcal{P}\in\ss{P}^n$.
\end{definition}

The above assumption is routinely assumed in analyses of RB for groups $\sf{X}$ that contain complex circuits of multi-qubit gates and can be relaxed with sufficient effort~\cite{Chasseur2015,Proctor2017,Wallman2018}.
In contrast, we only make the assumption for groups consisting of tensor products of channels acting on individual qubits.

The other assumption we will make on the noise is that it is sufficiently weak that we can ignore certain algebraic limitations in the use of our specific estimators.
In particular, we will assume that our noise and the state preparation and measurements (SPAM) satisfy the following definitions for choices of a parameter $c$ that we will specify later.

\begin{definition}[Weak, stable]
\label{def:weakstable}
A noise map $\mathcal{L}$ is \textbf{$\bm{c}$-weak} if the Pauli twirl $\mathcal{L}^{\ss{P}^n}$ is close to the identity channel in the operator norm, $\|\mathcal{I} - \mathcal{L}^{\ss{P}^n}\| \le c$.
A SPAM parameter $A$ is called \textbf{$\bm{c}$-stable} if $A \ge 1-c$.
\end{definition}

The first definition is equivalent to saying that the Pauli twirl $\mathcal{L}^{\ss{P}^n}$ has Pauli eigenvalues $\bm{f}$ that all lie in the interval $[1-c,1]$.
It is likely that this assumption too can be relaxed, but to do so would require different estimators than the ratio estimator that we use below.
We note that a simple sufficient condition to ensure that a CP map is $c$-weak is that the Pauli error rate for the identity, $p_0$, obeys $p_0 \ge 1-\tfrac{c}{2}$.
This can be verified from the inequality $f_i \ge 2p_0 - 1$~\cite{Erhard2019}.
The second definition will become clearer once we define (in \cref{eq:SPAM} below) how the SPAM parameters $A_j$ in the definition depend on the noisy state preparations and measurements. 
Intuitively, it just ensures that the measurements are sufficiently good to give a reasonable signal, so a $c$-stable SPAM parameter is $c$-close to being ideal.

\section{Estimation procedure and sample complexity}\label{sec:estimation}

We now specify a procedure for estimating the Pauli eigenvalues of a noisy implementation of the Pauli group $\ss{P}^n$, where we prepare states using a fixed stabilizer group $\ss{G}$ and we make syndrome measurements of a second, possibly different, stabilizer group $\ss{H}$.
As we will see below, for this procedure to yield useful information about the noise we will choose $\ss{H} \subseteq \ss{G}$, but we do not require this choice in our proofs.
The backbone of the procedure is running a generalized cycle benchmarking sequence of length $m$~\cite{Erhard2019} and recording the result of a syndrome measurement.
One of the core innovations of the present procedure is to consider measurements of general stabilizer groups, which enables us to obtain more information from each experiment.
Character benchmarking~\cite{Helsen2018} over the Pauli group, direct RB~\cite{Proctor2018} with the Pauli group as the generator distribution, and cycle benchmarking with no interleaved gate~\cite{Erhard2019} can all be rephrased in terms of the following more general procedure by choosing the measurement to be of a two-element stabilizer group, that is, a group generated by a single Pauli operator.

\begin{alg}[${\bf RunCB}(\ss{G},\ss{H},m)$]
\item Prepare the approximate stabilizer state $\tilde{\rho}_{\ss{G},0}$.

\item For each $i = 0,\ldots,m$, apply a Pauli gate $P_{a_i}\in\ss{P}^n$ uniformly at random.

\item Perform a syndrome measurement of $\ss{H}$ and record the outcome $b \in \ss{A_H}$ from the noisy POVM element $\tilde{E}_b$.

\item Return $z\in\ss{A_H}$ such that $b+\sum_i a_i\equiv z$.
\end{alg}

We will use the shorthand ${\bf RunCB}(\ss{G},m) = {\bf RunCB}(\ss{G}, \ss{G},m)$ for the special case $\ss{G} = \ss{H}$.
The following proposition exactly characterizes the output of ${\bf RunCB}(\ss{G}, \ss{H},m)$ under gate-independent and time-stationary noise.
We note that ${\bf RunCB}(\ss{G}, \ss{H},m)$ can be straightforwardly generalized to include interleaved gates~\cite{Erhard2019}; this will be explored in future work.

\begin{proposition}
\label{prop:PrZ}
Let $Z = Z(\ss{G},\ss{H},m)$ be the random variable taking values $z \in \ss{E_H}$ that results from one call to the subroutine ${\bf RunCB}(\ss{G},\ss{H},m)$.
Under GTM noise with $\tilde{\mc{P}} = \mc{P}\Lambda$ for all $P\in\ss{P}^n$, $Z$ has the probability distribution function
\begin{align}\label{eq:likelihood}
	\Pr\bigl(Z=z\bigr)
	= \frac{1}{|\ss{A_H}|}\sum_{h \in \ss{H}} f_h^m (-1)^{\langle h,z\rangle} A_h
\end{align}
where the $f_h^m = (h|\Lambda|h)^m$ are the $m$th powers of the Pauli eigenvalues of $\Lambda$, and
\begin{align}\label{eq:SPAM}
    A_h =  \sum_{b\in\ss{A_H}} (-1)^{\langle h,b\rangle} (\tilde{E}_{\ss{H},b}|h)(h|\Lambda|\tilde{\rho}_{\ss{G},0}).
\end{align}
\end{proposition}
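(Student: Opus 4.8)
The plan is to compute $\Pr(Z=z)$ directly from the superoperator description of the circuit, exploiting that each Pauli gate is diagonal in the Pauli basis and that averaging a uniformly random gate against a fixed character projects onto a single Pauli. First I would fix a gate sequence $a_0,\dots,a_m$ and write the probability of the raw measurement outcome $b\in\ss{A_H}$ as $(\tilde{E}_{\ss{H},b}|\mc{P}_{a_m}\Lambda\cdots\mc{P}_{a_0}\Lambda|\tilde\rho_{\ss{G},0})$, using $\tilde{\mc{P}}_{a_i}=\mc{P}_{a_i}\Lambda$ from the GTM assumption and left-multiplication of superoperators. Since the subroutine returns $z$ with $z\equiv b+\sum_i a_i$ in $\ss{A_H}$, averaging over the $m+1$ uniformly random gates gives
\[
\Pr(Z=z)=\frac{1}{|\ss{P}^n|^{m+1}}\sum_{a_0,\dots,a_m}\sum_{b\in\ss{A_H}} 1\bigl[z\equiv b+\textstyle\sum_i a_i\bigr]\,(\tilde{E}_{\ss{H},b}|\mc{P}_{a_m}\Lambda\cdots\mc{P}_{a_0}\Lambda|\tilde\rho_{\ss{G},0}).
\]

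The central device is to linearize the indicator. The constraint $z\equiv b+\sum_i a_i$ in $\ss{A_H}$ is equivalent to $z-b-\sum_i a_i\in\ss{C_H}$, so by \cref{lem:pauliOrthogonality} applied to the group $\ss{H}$,
\[
1\bigl[z\equiv b+\textstyle\sum_i a_i\bigr]=\frac{1}{|\ss{H}|}\sum_{h\in\ss{H}}(-1)^{\langle z-b-\sum_i a_i,\,h\rangle}.
\]
The symplectic form against $h\in\ss{H}$ is well defined on cosets in $\ss{A_H}$ because $\ss{C_H}$ is orthogonal to $\ss{H}$, so this step is legitimate. Expanding the character by bilinearity factors it into a product of $z$-, $b$-, and $a_i$-dependent signs, which lets me pull the $(-1)^{\langle z,h\rangle}$ out front and, crucially, factor the expectation over the independent gates into a product of identical single-gate averages.

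Then I would evaluate each single-gate average using $\mc{P}_a=\sum_c(-1)^{\langle a,c\rangle}|c)(c|$, which follows from \cref{eq:pauliTransform}, together with \cref{lem:pauliOrthogonality} on the full group $\ss{P}^n$ (whose commutant is trivial), obtaining $\frac{1}{|\ss{P}^n|}\sum_{a}(-1)^{\langle a,h\rangle}\mc{P}_a=|h)(h|$. Substituting this for each of the $m+1$ gates collapses the alternating product $|h)(h|\Lambda\cdots|h)(h|\Lambda$ by the telescoping identity $(h|\Lambda|h)=f_h$: the $m$ internal projector-to-projector transitions each contribute a factor $f_h$, while the two endpoints leave $(\tilde{E}_{\ss{H},b}|h)$ and the remaining $(h|\Lambda|\tilde\rho_{\ss{G},0})$. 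Collecting the $b$-sum into the quantity $A_h$ of \cref{eq:SPAM}, and recalling that nondegeneracy of the symplectic form gives $|\ss{A_H}|=|\ss{H}|$, yields exactly \cref{eq:likelihood}.

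The main obstacle is bookkeeping rather than any analytic estimate: I must handle the quotient structure carefully — that outcomes live in $\ss{A_H}$, that the form descends to cosets, and that the exponent of $f_h$ is exactly $m$ and not $m+1$. The last point is the easiest to get wrong; it comes down to noting that although $m+1$ noisy gates are applied, only the $m$ transitions between consecutive projectors produce the diagonal eigenvalue $f_h$, with the final $\Lambda$ absorbed into the state-preparation factor and the first projector into the measurement factor. Everything off-diagonal in $\Lambda$ is annihilated by the surrounding projectors $|h)(h|$, which is precisely why a single clean exponential $f_h^m$ survives.
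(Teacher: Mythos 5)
Your proof is correct, and it reaches \cref{eq:likelihood} by a genuinely different route from the paper's. The paper performs the standard RB telescoping change of variables $a'_i = a_i + a'_{i-1}$, so that averaging over the gates produces $m$ copies of the twirled channel $\Lambda^{\ss{P}^n}$; the restriction of the sum to $h\in\ss{H}$ only emerges at the end, by marginalizing the net Pauli $a'_m$ over $\ss{C_H}$ (via \cref{lem:pauliOrthogonality} and the double commutant identity $\ss{C_{C_H}}=\ss{H}$) and then summing over syndrome pairs with $a+b=z$. You instead Fourier-expand the indicator $1[z\equiv b+\sum_i a_i]$ as a character sum over $\ss{H}$ — legitimately, since $\langle\cdot,h\rangle$ descends to cosets in $\ss{A_H}$ — which decouples the average over the $m+1$ independent gates into identical single-gate averages $\tfrac{1}{|\ss{P}^n|}\sum_a(-1)^{\langle a,h\rangle}\mc{P}_a=|h)(h|$, so the circuit collapses directly to $f_h^m A_{b,h}$ and only elements of $\ss{H}$ ever appear; the twirled channel is never formed. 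What the paper's route buys: the twirl $\Lambda^{\ss{P}^n}$ appears as an explicit intermediate object, tying the proof to the definition of the Pauli eigenvalues, and the telescoping trick generalizes to settings where the constraint cannot be factored (non-abelian twirling groups, interleaved gates). What your route buys: no sequential change of variables, manifest symmetry among the $m+1$ gates, and a transparent reason why a single clean exponential survives — each gate average is already the rank-one projector $|h)(h|$. Your two bookkeeping points are also handled correctly: the exponent is $m$ (not $m+1$) because only the transitions between consecutive projectors contribute $(h|\Lambda|h)$, and the normalization matches because nondegeneracy of the symplectic form gives $|\ss{A_H}|=|\ss{H}|$, recovering the paper's prefactor $1/|\ss{A_H}|$.
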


\begin{proof}
By the linearity of the Born rule for density matrices, the probability of a given instantiation $Z=z$ is the sum of all the probabilities of random noisy Pauli gates and measurement outcomes that would lead to that result.
This is equivalent to first writing the entire joint distribution over any possible sequence, and then marginalizing over the specific sequences of Pauli gates and measurement outcomes that give the same value of $z$.

Under the assumption of GTM noise, the probability of choosing $a_0,a_1,\ldots,a_m$ and obtaining the outcome $b$ is
\begin{align*}
    \mathrm{Pr}(a_0,\ldots, a_m,b) = |\ss{P}^n|^{-m-1} (\tilde{E}_{\ss{H},b}|\left(\prod_{i=m}^0 \mc{P}_{a_i} \Lambda \right)|\tilde{\rho}_{\ss{G},0}),
\end{align*}
where for non-commutative products we use the convention $\prod_{i=a}^b x_i = x_a \ldots x_b$ (i.e., the lower index is on the left of the product).
We can rewrite the probability as
\begin{align*}
    \mathrm{Pr}(a_0,\ldots, a_m,b) = |\ss{P}^n|^{-m-1} (\tilde{E}_{\ss{H},b}|\mc{P}_{a'_m}\left(\prod_{i=m-1}^0 \mc{P}_{a'_i} \Lambda \mc{P}_{a'_i} \right)\Lambda|\tilde{\rho}_{\ss{G},0}),
\end{align*}
where we set $a'_0= a_0$ and recursively define $a'_i = a_i + a'_{i-1}$ for $i=1,\ldots, m-1$, where $a'_m = \sum_i a_i$ (recall that addition is mod 2).
Averaging uniformly and independently over the $a_i$ is equivalent to averaging uniformly and independently over the $a'_i$ because $\ss{P}^n$ is a group.
Therefore the marginal probability of $a'_m\in\ss{P}^n$ and $b\in\ss{E_H}$ is
\begin{align*}
    \mathrm{Pr}(a'_m,b) &= |\ss{P}^n|^{-m-1} \sum_{a'_0,\ldots,a'_m\in\ss{P}^n}  (\tilde{E}_{\ss{H},b}|\mc{P}_{a'_m}\left(\prod_{i=m-1}^0 \mc{P}_{a'_i} \Lambda \mc{P}_{a'_i} \right)\Lambda|\tilde{\rho}_{\ss{G},0}) \\
    &=  |\ss{P}^n|^{-1} (\tilde{E}_{\ss{H},b}|\mc{P}_{a'_m}\left(\prod_{i=m-1}^0 \Lambda^{\ss{P}^n} \right)\Lambda|\tilde{\rho}_{\ss{G},0}) \\
    &= |\ss{P}^n|^{-1} \sum_{h\in\ss{P}^n} f_h^m (\tilde{E}_{\ss{H},b}|\mc{P}_{a'_m}|h)(h|\Lambda|\tilde{\rho}_{\ss{G},0})\\
    &= |\ss{P}^n|^{-1} \sum_{h\in\ss{P}^n} (-1)^{\langle h, a'_m\rangle} f_h^m A_{b,h}
\end{align*}
by \cref{eq:Paulichannel,eq:pauliTransform}, where in particular we use  $(h|\Lambda^{\ss{P}^n}|h) = (h|\Lambda|h) = f_h$ and we define $A_{b,h} = (\tilde{E}_{\ss{H},b}|h)(h|\Lambda|\tilde{\rho}_{\ss{G},0})$.

Splitting $a'_m = a + c$ where $a\in \ss{A_H}$ and $c\in\ss{C_H}$ and averaging over $c$ gives the marginal probability
\begin{align*}
\mathrm{Pr}(a,b) &= \sum_{c\in\ss{C_H}} \mathrm{Pr}(a+c,b) \\
&= |\ss{A_H}|^{-1} \sum_{h\in\ss{P}^n} (-1)^{\langle h, a\rangle} f_h^m A_{b,h} |\ss{C_H}|^{-1} \sum_{c\in\ss{C_H}} (-1)^{\langle h, c\rangle} \\
&= |\ss{A_H}|^{-1} \sum_{h\in\ss{P}^n} (-1)^{\langle h, a\rangle} f_h^m A_{b,h}\, 1\bigl[h\in\ss{C_{C_H}}\bigr] 
\end{align*}
by \cref{lem:pauliOrthogonality}.
Now by the double commutant theorem we have $\ss{C_{C_H}} = \ss{H}$ for any group $\ss{H}\subseteq\ss{P}^n$, so
\begin{align*}
\mathrm{Pr}(a,b) &= |\ss{A_H}|^{-1} \sum_{h\in\ss{H}} (-1)^{\langle h, a\rangle} f_h^m A_{b,h}\,.
\end{align*}
The probability of obtaining the outcome $z\in\ss{A_H}$ is then
\begin{align*}
    \mathrm{Pr}(Z=z) &= \sum_{a,b\in\ss{A_H}:a+b=z} \mathrm{Pr}(a,b) \\
    &= \sum_{b\in\ss{A_H}} \mathrm{Pr}(z+b,b) \\
    &= |\ss{A_H}|^{-1} \sum_{b\in\ss{A_H}}\sum_{h\in\ss{H}} (-1)^{\langle h, b+z\rangle} f_h^m A_{b,h} \\
    &= |\ss{A_H}|^{-1}\sum_{h\in\ss{H}} (-1)^{\langle h, z\rangle} f_h^m A_h
\end{align*}
which completes the proof.
\end{proof}

We remark that the coefficients $A_h$ appearing in \cref{prop:PrZ} are $1$ for ideal states, measurements, and transformations.
Indeed, setting $\tilde{\rho}_{\ss{G},0} = \rho_{\ss{G},0}$, $\Lambda = 1$ and $\tilde{E}_{\ss{H},b} = E_{\ss{H},b}$ for any fixed outcome $b\in\ss{A_H}$ in \cref{eq:SPAM} and substituting in \cref{eq:stabilizerstate,eq:syndromemeasurement}, we have that
\begin{align*}
	(E_{\ss{H},b}|h)(h|\rho_{\ss{G},0}) & = \frac{1}{d|\ss{A_H}|}\sum_{g\in\ss{G}}\sum_{h'\in\ss{H}} (-1)^{\langle h',b\rangle} (P_{h'}|h)(h|P_g)\\
	& = \frac{(-1)^{\langle h, b\rangle}}{|\ss{A_H}|} 1[h\in\ss{G}\cap\ss{H}],
\end{align*}
where we have used the normalization from \cref{eq:normalization}.
Plugging this into \cref{eq:SPAM}, we find that in the ideal case,
\begin{align*}
	A_h & = 1[h\in\ss{G}\cap\ss{H}] . \tag{ideal case}
\end{align*}
We refer to the coefficients $A_h$ as SPAM coefficients (despite having a contribution from edge noise terms) because the coefficients are the only model parameters affected by errors in the state preparations and measurements.
We see that the information contained in the likelihood function is maximized when $\ss{H}=\ss{G}$.

\Cref{prop:PrZ} exactly quantifies the output of one call to ${\bf RunCB}(\ss{G},\ss{H},m)$.
One could naively obtain many samples of ${\bf RunCB}(\ss{G},\ss{H},m)$ for different values of $m$ and perform a multi-exponential fit to estimate the model parameters $\{A_h,f_h:h\in\ss{H}\}$.
However, there are exponentially many such parameters, making the fit and a rigorous theoretical treatment nigh impossible.
We can resolve this problem by transforming the output using the following lemma to obtain a new random variable with a single exponential decay as in \cite{Helsen2018}.
We note that considering syndrome measurements instead of the two-outcome measurements considered in  Refs.~\cite{Helsen2018,Erhard2019} allows us to estimate multiple random variables from the same data using the following procedure, which can significantly reduce the sample complexity.

\begin{alg}[${{\bf V}}(\ss{X}, \ss{G},t,m)$: Estimator function for vector of SPAM-dependent Pauli eigenvalues.]
\item Set $\bm{V} \coloneqq 0$.
\item For $k = 1,\ldots,t$, Do
\begin{itemize}
	\item Set $z \coloneqq {\bf RunCB}(\ss{G},m)$,
	\item Set $\bm{V} \mathrel{+}= \sum_{x\in\ss{X}} (-1)^{\langle x,z \rangle}|x)$.
\end{itemize}
\item Return $\hat{\bm{V}} = \frac{1}{t} \bm{V}$.
\end{alg}

The elements in the output of ${\bf V}(\ss{X},\ss{G},t,m)$ are correlated binomial variables, but their covariance can also be exactly computed using the following lemma.

\begin{lemma}
\label{lemma:Fmean}
Let $\bm{V}$ be the output of ${\bf V}(\ss{X}, \ss{G}, t, m)$.
Under GTM noise, we have
\begin{align*}
    \bm{\mu} = \mathbb{E}[\bm{V}] &= \sum_{x\in\ss{X}} A_{x^\bot} f_{x^\bot}^m|x) \\
    \mathbb{E}\bigl[\bm{V} \bm{V}^\dagger\bigr] &= \sum_{x,x'\in\ss{X}} 1[x+x'\in\sf{P}^n/\ss{G}]\, A_{x+x'} f_{x+x'}^m|x)(x'|
\end{align*}
where $x^\bot$ is the component of $x$ in $\ss{P}^n / \ss{G}$ and the $A_x$ are as in \cref{prop:PrZ}, and, in particular, are independent of $m$.
\end{lemma}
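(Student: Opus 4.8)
The plan is to reduce everything to the moments of a single call of ${\bf RunCB}(\ss{G},m)$, since the $t$ iterations inside ${\bf V}$ are independent and identically distributed. Writing $\bm{V}_1 = \sum_{x\in\ss{X}}(-1)^{\langle x,z\rangle}|x)$ for the per-call vector built from one sample $z = Z(\ss{G},\ss{G},m)$, the returned $\hat{\bm{V}}$ satisfies $\mathbb{E}[\hat{\bm{V}}] = \mathbb{E}[\bm{V}_1]$ and $\mathrm{Cov}(\hat{\bm{V}}) = \tfrac{1}{t}\mathrm{Cov}(\bm{V}_1)$, so the two quantities recorded in the statement are exactly the single-call mean and the single-call second moment (from which the covariance of the estimator follows by the usual $1/t$ scaling). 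Hence I only need these two per-call moments.

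The key simplification is that both moments collapse to one scalar function of the symplectic form. By linearity, $\mathbb{E}[\bm{V}_1] = \sum_{x\in\ss{X}} g(x)\,|x)$ where $g(y) := \mathbb{E}[(-1)^{\langle y,z\rangle}]$; and since $(-1)^{\langle x,z\rangle}(-1)^{\langle x',z\rangle} = (-1)^{\langle x+x',z\rangle}$, the outer product gives $\mathbb{E}[\bm{V}_1\bm{V}_1^\dagger] = \sum_{x,x'\in\ss{X}} g(x+x')\,|x)(x'|$. Thus it suffices to evaluate $g(y)$ for an arbitrary $y\in\ss{P}^n$, with $y=x$ yielding the mean and $y=x+x'$ yielding the second moment.

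To compute $g(y)$ I would insert the exact likelihood from \cref{prop:PrZ} with $\ss{H}=\ss{G}$, giving $g(y) = |\ss{A_G}|^{-1}\sum_{h\in\ss{G}} f_h^m A_h \sum_{z\in\ss{A_G}}(-1)^{\langle y+h,z\rangle}$ after exchanging the order of summation. Applying \cref{lem:pauliOrthogonality} to the inner sum over the syndrome labels $z\in\ss{A_G}$ collapses it to $|\ss{A_G}|\,1[y+h\in\ss{C_{A_G}}]$, so that $g(y) = \sum_{h\in\ss{G}} f_h^m A_h\,1[y+h\in\ss{C_{A_G}}]$. The final step is the symplectic bookkeeping: because $\ss{G}$ and $\ss{C_{A_G}}$ are complementary subspaces of $\ss{P}^n$ (their dimensions sum to $2n$ and their intersection is trivial by nondegeneracy of the form), the coset $y+\ss{C_{A_G}}$ meets $\ss{G}$ in exactly one element, namely the projection $y^\bot$ of $y$ onto the detectable component. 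Exactly one term in the $h$-sum therefore survives, leaving $g(y) = f_{y^\bot}^m A_{y^\bot}$; the indexing $A_{x+x'}, f_{x+x'}$ together with the indicator $1[x+x'\in\ss{P}^n/\ss{G}]$ in the statement is precisely the bookkeeping for this surviving representative. The claimed $m$-independence of the $A_h$ is immediate from their definition in \cref{eq:SPAM}, in which $m$ does not appear.

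I expect the main obstacle to be exactly this last step: one must verify that summing the character $(-1)^{\langle y+h,z\rangle}$ over the syndrome labels $z\in\ss{A_G}$ genuinely isolates a unique element of $\ss{G}$, and keep straight the identification, induced by the perfect pairing that the symplectic form restricts to, between the measurement group $\ss{G}$ that indexes $f_h, A_h$ in \cref{prop:PrZ} and the quotient $\ss{A_G}$ that indexes the surviving component $y^\bot$. Everything else is linearity of expectation together with a single application of \cref{lem:pauliOrthogonality}.
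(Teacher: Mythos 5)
Your proof is correct and takes essentially the same route as the paper's: both reduce to the moments of a single call of ${\bf RunCB}$, insert the likelihood of \cref{prop:PrZ} with $\ss{H}=\ss{G}$, collapse the syndrome sum with \cref{lem:pauliOrthogonality}, and isolate the unique surviving term via the splitting $\ss{P}^n = \ss{G}\oplus\ss{C_{A_G}}$, with the second moment handled by exactly the reduction your scalar function $g$ encodes (the paper writes it as $(x|\mathbb{E}[\bm{V}\bm{V}^\dagger]|x') = (x+x'|\mathbb{E}[\bm{V}]$). The only cosmetic difference is that you package both moments as the single function $g(y)=\mathbb{E}[(-1)^{\langle y,Z\rangle}]$ evaluated at $y=x$ or $y=x+x'$, and you spell out the dimension-count/nondegeneracy argument that the paper leaves implicit.
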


\begin{proof}
As the calls to ${\bf RunCB}(\ss{G},m)$ are independent, the expectation value of $\bm{V}$ is
\begin{align*}
    \bm{\mu} = \mathbb{E}[\bm{V}] = \sum_{x\in\ss{X}} \sum_{z\in\ss{A_G}} (-1)^{\langle x,z\rangle} \mathrm{Pr}(Z=z)|x).
\end{align*}
For any fixed $x\in\ss{X}$, substituting in \cref{prop:PrZ} with $\ss{H}=\ss{G}$, noting we ${\bf RunCB}(\ss{G},m)$), and using \cref{lem:pauliOrthogonality} gives
\begin{align*}
    (x|\bm{\mu} &= \frac{1}{|\ss{A_G}|}\sum_{z\in\ss{A_G}} \sum_{g\in\ss{G}} (-1)^{\langle x+g, z\rangle} f_g^m A_g \\
    &= \sum_{g\in\ss{G}} 1[x+g\in\ss{C_{A_G}}] f_g^m A_g.
\end{align*}
Note that for a stabilizer group $\ss{G}$, $\ss{C_{A_G}} = \ss{P}^n / \ss{G}$.
The result then follows as $\ss{G}$ is a group and $\ss{P}^n$ splits into $\ss{G} \oplus \ss{P}^n/\ss{G}$.

Similarly, for fixed $x,x'\in\ss{X}$ we have
\begin{align*}
    (x|\mathbb{E}\bigl[\bm{V} \bm{V}^\dagger\bigr]|x') = \sum_{z\in\ss{A_H}} (-1)^{\langle x+x',z\rangle} \mathrm{Pr}(Z=z) = (x+x'|\mathbb{E}[\bm{V}],
\end{align*}
which completes the proof.
\end{proof}

Since the elements of $\bm{V}$ are bounded, an average of sufficiently many independent samples will converge quickly to the mean.
The next proposition provides a simple and conservative tail bound on the probability of estimates obtained using $T$ samples will deviate from their mean by more than $\epsilon$.
Specifically, we bound the failure probability of the following procedure, where we allow $\ss{X}\subseteq\ss{G}$ both for later analysis and to illustrate the fundamental scaling.

\begin{proposition}
\label{prop:Tsuffice}
Let $\ss{G}$ be a stabilizer group, $\ss{X}\subseteq\ss{G}$, and $\hat{\bm{V}}$ be the output of $\bm{V}(\ss{X},\ss{G},t,m)$ for some fixed positive integers $t$ and $m$.
Then for any $\epsilon >0$,
\begin{align*}
    \mathrm{Pr}(\|\hat{\bm{V}} - \mathbb{E}(\hat{\bm{V}}) \|_\infty \geq \epsilon) \leq 2 |\ss{X}|\exp(-t\epsilon^2/2).
\end{align*}
\end{proposition}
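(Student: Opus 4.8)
The plan is to reduce the statement to a coordinate-wise concentration inequality followed by a union bound. First I would observe that, because the $t$ calls to ${\bf RunCB}(\ss{G},m)$ inside ${\bf V}$ are independent, each coordinate of the output is an empirical average of $t$ independent and identically distributed terms. Writing $z_1,\dots,z_t\in\ss{A_G}$ for the outcomes of the successive calls and $V_x \coloneqq (x|\hat{\bm V}$ for the $x$-th coordinate of the output, we have
\[
    V_x = \frac{1}{t}\sum_{k=1}^t (-1)^{\langle x,z_k\rangle},
\]
which is an average of the bounded real random variables $Y^{(x)}_k = (-1)^{\langle x,z_k\rangle}\in\{-1,+1\}$. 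In particular every coordinate of $\hat{\bm V}$ is real, so with $\mu_x \coloneqq (x|\bm\mu$ denoting the mean computed in \Cref{lemma:Fmean}, the quantity in the statement is $\|\hat{\bm V}-\bm\mu\|_\infty = \max_{x\in\ss{X}} |V_x - \mu_x|$.

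Next I would apply Hoeffding's inequality to a single fixed coordinate. For each $x$, the summands $Y^{(x)}_1,\dots,Y^{(x)}_t$ are independent (since the calls to ${\bf RunCB}$ are independent) and each lies in an interval of width $2$, so Hoeffding's inequality gives
\[
    \Pr\bigl(|V_x - \mu_x| \ge \epsilon\bigr) \le 2\exp\!\left(-\frac{2t\epsilon^2}{2^2}\right) = 2\exp(-t\epsilon^2/2).
\]
Finally I would take a union bound over the $|\ss{X}|$ coordinates: the event $\{\|\hat{\bm V}-\bm\mu\|_\infty\ge\epsilon\}$ is contained in the union over $x\in\ss{X}$ of the single-coordinate deviation events, so its probability is at most $|\ss{X}|$ times the per-coordinate bound, which is exactly $2|\ss{X}|\exp(-t\epsilon^2/2)$.

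I do not anticipate a genuine obstacle, as the result follows directly from Hoeffding's inequality and a union bound. The only points that require care are that the coordinates are real $\pm1$ averages, so that both the $\infty$-norm and the scalar Hoeffding bound apply cleanly; that the relevant range width is $2$, which produces the constant $1/2$ in the exponent; and that independence holds across the $t$ samples even though coordinates within a single sample are correlated through the shared outcome $z_k$. This last point is harmless, since working one coordinate at a time and only combining the coordinates through the union bound never uses any independence between distinct coordinates.
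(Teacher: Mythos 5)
Your proof is correct and matches the paper's approach exactly: the paper's proof is a one-line invocation of Hoeffding's inequality (using that each coordinate of $\hat{\bm{V}}$ is an average of $t$ independent values in $[-1,1]$) combined with a union bound over the $|\ss{X}|$ coordinates, which is precisely what you have spelled out, including the correct constant from the interval width of $2$.
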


\begin{proof}
The proof follows trivially from the union bound and Hoeffding's inequality~\cite{Hoeffding1963} where each element of $\hat{\bm{V}}$ is in the interval $[-1,1]$.
\end{proof}

Note that the above bound is very loose because, as shown in \cref{lemma:Fmean}, there are correlations between the elements of $\bm{\hat{V}}$ that are neglected when applying the union bound.
However, we still see that the sample complexity above in estimating $A_j f_j^m$ is independent of the Pauli eigenvalues.
The precision in estimating the eigenvalues $f_j$ with a fixed failure probability will, however, depend on the bare precision $\epsilon$ and on the specific values of $m$ that we choose.
In the next section we will analyze a specific procedure and estimator to get bounds on this dependency.
\section{Data fitting and error analysis}\label{sec:analysis}

To understand how the sequence lengths should be chosen to get a precise estimate of a Pauli channel, we must introduce a specific estimation strategy.
We use an estimator that samples $A f^m$ at exponentially increasing values of $m$, and then we use only the data at the endpoints to obtain a ratio estimator.
We also ignore repeated elements from distinct stabilizer groups in a stabilizer covering of a set.
Our estimation strategy throws away a lot of data, but it has two advantages.
First, it is easy to analyze explicitly, and our proof is a straightforward adaption of the theorem presented in Ref.~\cite{Harper2018}.
Second, the endpoints are where the data are most sensitive, so discarding the intermediate data is not as damaging as one might expect since not all points contribute equally to the variance in the estimator.
In what follows we use the notation $V_i(g) = (g|V_i$ for the $g$th element of $V_i$ since the subscript real estate is occupied by the iteration index $i$.

\begin{alg}
[${\bf Ratio}(\ss{O},\ss{X},t)$ Ratio estimator for exponential regression of a stabilizer covering $\ss{O}$ of a set $\ss{X}$.]
\item Set $\hat{r}_x = \texttt{NaN}$ for all $x\in\ss{X}$.
\item Set $\ss{A} \coloneqq \emptyset$.
\item For all $\ss{G} \in \ss{O}$, Do
\begin{enumerate}
\item Set $\ss{X}' \coloneqq \ss{G}\backslash \ss{A}$.
\item Set $\ss{A}\coloneqq\ss{A}\cup\ss{G}$.
\item Set $\hat{V} \coloneqq {\bf V}(\ss{X}',\ss{G},t,0)$.
\item Set $m \coloneqq 1$.
\item While $\exists x \in \ss{X}'$ such that $\hat{r}_x= \texttt{NaN}$, Do
\begin{itemize}
	\item Set $\hat{W} \coloneqq {\bf V}(\ss{X}',\ss{G},t,m)$,
	\item For all $x \in \ss{X}'$ such that $\hat{r}_x=\texttt{NaN}$, Do
	\begin{itemize}
	    \item Set $v \coloneqq \hat{V}(x)$ and $w \coloneqq \hat{W}(x)$.
	    \item If $w \leq v / 3$ and $w,v>0$, Set $\hat{r}_x \coloneqq 1 - \bigl(w/v\bigr)^{1/m}$;
	    \item Else If $w\le0$ or $v\le0$, Set $\hat{r}_x \coloneqq 1$.
	\end{itemize}
    \item Set $m \coloneqq 2 m$
\end{itemize}
\end{enumerate}
\item Return $\hat{\bm{r}}_\ss{X} = \sum_{x\in\ss{X}} \hat{r}_x |x)$.
\end{alg}

This ratio estimator guarantees a pointwise multiplicative precision estimate of the true vector of infidelities $\bm{r}_\ss{X}$ (recall \cref{eq:defr}), at least when certain mild assumptions hold that place us close to the regime of interest.
This is where the notions of $c$-weak and $c$-stable from \cref{sec:modelassumptions} come into play.
We will assume that the noise map is $\tfrac{1}{2}$-weak, which in particular implies that $f_h \geq \tfrac{1}{2}$, or equivalently, that $r_h \leq \tfrac{1}{2}$.
We will also assume that each SPAM parameter $A_h$ from \cref{eq:SPAM} is $\tfrac{1}{2}$-stable, so that $A_h \ge \tfrac{1}{2}$.
However, the only thing that our proof really requires is that both $A$ and $f$ are bounded from below by a positive constant, and the specific choice of $\tfrac{1}{2}$ is motivated only by convenience.

We note that an equivalent theorem to the following can be proven if we remove the $\tfrac{1}{2}$-stable assumption and instead assume that $\epsilon$ is chosen to give a relative precision proportional to $\min_h A_h f_h$ at each step instead of only a constant.
However, this makes the statements about sample complexity dependent on the SPAM parameters $A_h$.
This obfuscates the actual sample complexity of the procedure, so we prefer to add the stability assumption.

To understand how the set of sequence lengths $\kappa$ should be chosen to get a precise estimate of a Pauli channel, we must introduce the spectral gap of the channel, or of a subset of the channel corresponding to the Paulis in the set $\ss{X}$.
Since a Pauli channel is already diagonal in the Pauli basis, as in Eq.~(\ref{eq:Paulichannel}), and every trace-preserving channel has a largest absolute eigenvalue of 1, the spectral gap $\Delta_{\ss{X}} = \Delta_{\ss{X}}(\Lambda)$ of the Pauli channel $\Lambda$ over the set $\ss{X}$ is given by
\begin{align}
\label{eq:channelspectralgap}
	\Delta_{\ss{X}} = 1-\max_{j\in\ss{X}\backslash 0} |f_j| .
\end{align}
In the most interesting regime, $\Lambda$ is $\epsilon$-weak for some $\epsilon < 1$, and the absolute value is not necessary because then all eigenvalues are positive.
In that case, we have $\Delta_\ss{X} = \min_{x\in\ss{X}} r_x$.
When $\ss{X}=\ss{P}^n$, this is exactly the spectral gap of the superoperator for the channel.
Note that $\Delta_\ss{X} \ge \Delta_{\ss{P}^n}$.

As we will now show, one can completely learn all the Pauli eigenvalues in $\ss{X}$ with relative precision $\epsilon$ using a set $\kappa$ of sequence lengths with $|\kappa| = O\bigl(\log \tfrac{1}{\Delta_\ss{X}}\bigr)$ and $m_{\max} = \max \kappa = O\bigl(\tfrac{1}{\Delta_\ss{X}}\bigr)$.


\begin{proposition}
\label{prop:RatioProp}
Let $\ss{X}\subseteq\ss{P}^n$ and $\ss{O}$ be a stabilizer covering of $\ss{X}$.
For any sufficiently small $\epsilon,\delta>0$ and assuming $\tfrac{1}{2}$-weak, $\tfrac{1}{2}$-stable, GTM noise, the following holds with probability $1-\delta$.
Running ${\bf Ratio}(\ss{O},\ss{X},t)$ with $t = \frac{2}{\epsilon^2} \log\bigl(\frac{2|\ss{X}||\kappa|}{\delta}\bigr)$ uses a set $\kappa$ of at most $O\bigl(\log\tfrac{1}{\Delta_\ss{X}}\bigr)$ sequence lengths $m$ with $m_{\max} = O\bigl( \tfrac{1}{\Delta_\ss{X}}\bigr)$.
Moreover, the output $\hat{\bm{r}}_\ss{X}$ satisfies
\begin{align*}
    |\hat{r}_x - r_x| \leq O(\epsilon) r_x\,.
\end{align*}
\end{proposition}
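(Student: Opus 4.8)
The plan is to condition on a single high-probability ``good event'' that controls all the sample means produced inside ${\bf Ratio}$, and then to treat the ratio estimator deterministically on that event. Concretely, \Cref{lemma:Fmean} shows that for $x\in\ss{G}$ the estimate $\hat V_m(x)$ returned by ${\bf V}(\ss{X}',\ss{G},t,m)$ has mean $A_x f_x^m$, so I would first invoke \Cref{prop:Tsuffice} together with a union bound over all elements $x\in\ss{X}$ and all sequence lengths $m\in\kappa$ actually used. There are at most $|\ss{X}|\,|\kappa|$ such estimates (each $x$ lands in exactly one group's $\ss{X}'=\ss{G}\setminus\ss{A}$ because of the running set $\ss{A}$), so the failure probability is at most $2|\ss{X}|\,|\kappa|\exp(-t\epsilon^2/2)$; setting this equal to $\delta$ reproduces the stated $t=\tfrac{2}{\epsilon^2}\log\!\big(\tfrac{2|\ss{X}||\kappa|}{\delta}\big)$. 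On the good event every $v=\hat V_0(x)$ and $w=\hat W_m(x)$ satisfies $|v-A_x|\le\epsilon$ and $|w-A_x f_x^m|\le\epsilon$.

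Next I would analyze where the doubling loop stops for a fixed $x$. Using $A_x\ge\tfrac12$ (the $\tfrac12$-stable assumption) and $f_x\ge\tfrac12$ (the $\tfrac12$-weak assumption of \Cref{def:weakstable}), the stopping test $w\le v/3$ reads $A_xf_x^m\lesssim A_x/3$ up to $O(\epsilon)$, i.e. $f_x^m\le \tfrac13+O(\epsilon)$. Because $m$ only doubles, the previous value $m/2$ failed the test, giving $f_x^{m/2}>\tfrac13-O(\epsilon)$ and hence $f_x^m>\tfrac19-O(\epsilon)$. Thus at the accepted length $m_x$ the decayed signal $f_x^{m_x}$ is trapped in a constant interval bounded away from $0$ and $1$; taking logarithms and using $-\log f_x = -\log(1-r_x)=\Theta(r_x)$ for $r_x\le\tfrac12$ yields $m_x=\Theta(1/r_x)$. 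The same bounds show $w\approx A_x f_x^{m_x}\ge\tfrac16-O(\epsilon)>0$ and $v\ge\tfrac12-\epsilon>0$ for sufficiently small $\epsilon$, so the fallback branch ($w\le0$ or $v\le0$, setting $\hat r_x=1$) never fires for the elements we actually resolve.

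With $m_x=\Theta(1/r_x)$ in hand, the final step is error propagation. Writing $v=A_x+\xi_0$ and $w=A_xf_x^{m_x}+\xi_{m_x}$ with $|\xi_0|,|\xi_{m_x}|\le\epsilon$, the lower bounds $A_x\ge\tfrac12$ and $f_x^{m_x}\ge\tfrac19-O(\epsilon)$ make both relative errors $\xi_0/A_x$ and $\xi_{m_x}/(A_xf_x^{m_x})$ of size $O(\epsilon)$, so $w/v=f_x^{m_x}\big(1+O(\epsilon)\big)$. Raising to the $1/m_x$ power turns this constant-size multiplicative error into an $O(\epsilon/m_x)$ error on $f_x$, since $(1+O(\epsilon))^{1/m_x}=1+O(\epsilon)/m_x$; hence $\hat r_x = 1-(w/v)^{1/m_x}=r_x - f_x\,O(\epsilon)/m_x$ and $|\hat r_x-r_x|=O(\epsilon/m_x)$. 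Substituting $1/m_x=\Theta(r_x)$ gives exactly $|\hat r_x-r_x|\le O(\epsilon)\,r_x$, the claimed pointwise relative precision.

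Finally the sequence-length budget follows from the stopping analysis: $m_{\max}=\max_x m_x=\Theta\!\big(1/\min_{x\in\ss{X}}r_x\big)=\Theta(1/\Delta_\ss{X})$ by the definition of the spectral gap in \cref{eq:channelspectralgap} (in the $\tfrac12$-weak regime $\Delta_\ss{X}=\min_x r_x$), and since $\kappa$ is the geometric ladder $1,2,4,\dots,m_{\max}$ we get $|\kappa|=\lceil\log_2 m_{\max}\rceil=O\!\big(\log\tfrac{1}{\Delta_\ss{X}}\big)$. I expect the main obstacle to be making the adaptive stopping rigorous hand-in-hand with the error propagation: one must show that the $O(\epsilon)$ sampling fluctuations never push the doubling test across the threshold at the ``wrong'' $m$, so that the bracket $f_x^{m_x}\in(\tfrac19,\tfrac13)$ (and therefore the identity $1/m_x=\Theta(r_x)$) survives the noise. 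This is what forces the ``sufficiently small $\epsilon$'' hypothesis and is the crux that converts additive concentration into multiplicative accuracy on $r_x$.
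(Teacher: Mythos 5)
Your proposal is correct and follows essentially the same route as the paper's proof: the same Hoeffding-plus-union-bound good event giving the stated $t$, the same doubling-stopping analysis trapping $f_x^{m_x}$ in a constant bracket (roughly $(\tfrac19,\tfrac13)$ up to $O(\epsilon)$) to conclude $m_x=\Theta(1/r_x)$, and the same propagation of the $O(\epsilon)$ multiplicative error through the $1/m_x$ power to get $|\hat r_x-r_x|\le O(\epsilon)r_x$. Your explicit check that the fallback branch ($\hat r_x\coloneqq 1$) never fires on the good event is a detail the paper leaves implicit, and is a welcome addition.
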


\begin{proof}
The following proof of multiplicative precision is a modification of a similar result proven recently in Ref.~\cite{Harper2018}.

We begin by considering each iteration of the loop in step 3.
Let $t$ be a fixed integer, $\ss{G}\in\ss{O}$ be a fixed group, and $\ss{X}'$ be the value in step 3a in the corresponding iteration of ${\bf Ratio}(\ss{O},\ss{X},t)$.
The inner loop (e) yields estimators $\hat{\bm{V}}_m = {\bf V}(\ss{X}', \ss{G}, t, m)$ for each $m\in\kappa = \{0, 1, 2, 4,\ldots, m_{\rm max}\}$ for some yet-unspecified terminating sequence length $m_{\rm max}$.
Note that we do not store all these intermediate estimators in the procedure as we only use them to prove correctness.
 
By \cref{prop:Tsuffice} and the union bound, for any fixed $\epsilon>0$, 
\begin{align*}
|\hat{V}_m(x) - A_x f_x^m |\leq \epsilon
\end{align*}
with probability at least $1-\delta_{\ss{X}'}$ where $\delta_{\ss{X}'} = 2|\kappa||\ss{X}'| \exp(-t\epsilon^2/2)$.
To bound the terminating value $m_{\rm max}$ for each $\ss{G}\in\ss{O}$ and hence prove the stated sample complexity, fix $g\in\ss{G}$, let $m$ be the corresponding value in the procedure when $\hat{r}_x$ is assigned and $w = \hat{V}_m(\ss{X'}, \ss{G}, t, m)$ and $u = \hat{V}_m(\ss{X'}, \ss{G}, t, \lfloor m/2\rfloor)$.
Then $u,v,w$ satisfy the following inequalities
\begin{align}\label{eq:mess}
	\frac{w}{v} \le \frac{1}{3} 
	< \frac{u}{v}.
\end{align}

For simplicity, we now use the notation $f=f_g$ and $A=A_g$.
We also assume that $m > 1$ as otherwise the proof is trivial.
Let us define the deviations of $u,v,w$ from their mean,
\begin{align*}
    w-Af^m =:\epsilon_w\,,\quad u-Af^{m/2} =:\epsilon_u\,, \quad \text{and } v-A =:\epsilon_v\,.
\end{align*}
By \cref{prop:Tsuffice}, $|\epsilon_x| \le \epsilon$ for $x=u,v,w$. 
Using the fact that $A \ge 1/2$ for $\tfrac{1}{2}$-stable noise,
\begin{align}\label{eq:mess1}
    \frac{w}{v} 
    = \frac{Af^m + \epsilon_w}{A + \epsilon_v}
    = \left(1 + \tfrac{\epsilon_v}{A}\right)^{-1} \left(f^m + \tfrac{\epsilon_w}{A}\right)
    &\geq (1+2\epsilon)^{-1}\left(f^m - 2\epsilon\right).
\end{align}
Similarly,
\begin{align}\label{eq:mess2}
    \frac{u}{v}
    &\leq (1-2\epsilon)^{-1}\left(f^{m/2} + 2\epsilon\right) \\
    \frac{w}{v}
    &\leq (1-2\epsilon)^{-1}\left(f^m + 2\epsilon\right) \label{eq:mess3}.
\end{align}
Substituting \cref{eq:mess1,eq:mess2} into \cref{eq:mess}, we have
\begin{align}
\label{eq:fbounds}
	\biggl(\frac{1-8\epsilon}{3}\biggr)^2 < f^m \le \frac{1+8\epsilon}{3}.
\end{align}

We have that $r = 1-f \ge \Delta_{\ss{X}}$, and note that $f^m = (1-r)^m$ will satisfy \cref{eq:fbounds} for fixed $\epsilon$ if and only if $m = \Theta(1/r)$.

To show that $|\hat{r} - r|\le O(\epsilon) r$, we have to analyze the accuracy of the estimator $(w/v)^{1/m}$.
Using \cref{eq:mess1,eq:mess3}, we have
\begin{align}
	\biggl(\frac{f^m + 2\epsilon}{1 - 2\epsilon}\biggr)^{1/m} \geq \hat{f} \geq \biggl(\frac{f^m - 2\epsilon}{1 + 2\epsilon}\biggr)^{1/m}.
\end{align}
We can factor out the $f^m$ (since $f>0$ as we have assumed the noise is $1/2$-weak) and use the lower bound from \cref{eq:fbounds} to find
\begin{align}
	f \biggl(\frac{1 - 18\epsilon/(1-8\epsilon)^2}{1 +2\epsilon}\biggr)^{1/m} < \hat{f} < f\biggl(\frac{1 + 18\epsilon/(1-8\epsilon)^2}{1-2\epsilon}\biggr)^{1/m}\,.
\end{align}
For sufficiently small $\epsilon$, this implies
\begin{align}
	f \bigl(1-O(\epsilon)\bigr)^{1/m} \le \hat{f} \le f \bigl(1+O(\epsilon)\bigr)^{1/m}\,.
\end{align}
Now we Taylor expand around $\epsilon = 0$ for sufficiently small $\epsilon$ and use $\tfrac{1}{m} = \Theta(r)$ to obtain the stated accuracy.

Now let $g\in\ss{X}$ be the element that requires the largest value of $m_{\rm max}$.
Then the set of sequence lengths $\kappa$ used to estimate all the $f_a$ to within the stated precision with probability at least $1-\delta_{\ss{X}'}$ is $\kappa = \{2^i:i=0,\ldots,\ell=\log_2(m_{\rm max})\}$.
As $m_{\rm max} = \Theta(1/r_g)$, we have $|\kappa|-1 = \log_2 m_{\rm max} = O\bigl(\log \frac{1}{\Delta_\ss{X}}\bigr)$, which is well defined as the noise is $1/2$-weak by assumption.
The total failure probability is then at most $\delta$ where $\delta = 2|\kappa||\ss{X}|\exp(-2t\epsilon^2/2)$ by the union bound.
Rearranging gives the stated sample complexity.
\end{proof}

\section{Reconstructing error rates for a group}

We have seen how we can estimate each of the Pauli eigenvalues $f_a$ of a Pauli channel for all $a\in\ss{X}$ with a number of measurements that scales like $O\bigl(|\ss{O}| \log |\ss{X}|\bigr)$ for any stabilizer covering $\ss{O}$ of $\ss{X}$, assuming that $\epsilon$, $\delta$ and the channel spectral gap $\Delta_\ss{X}$ are fixed.
When $\ss{X}$ is a group, we can choose $\ss{O}$ such that $|\ss{O}| \le \sqrt{|\ss{X}|}+1$ using the construction in \cref{lem:covering}, so the total number of samples is at most $O(\sqrt{|\ss{X}|} \log |\ss{X}|)$ (or even less if $\ss{X}$ has a nontrivial stabilizer subgroup).
Moreover, the precision of these estimates is multiplicative.
We now show how the estimates of Pauli eigenvalues can be used to estimate Pauli error rates.

Suppose we estimate all the Pauli eigenvalues of some group $\ss{G}$ (which could be the full Pauli group).
Then \cref{eq:p-ptilde} becomes
\begin{align}
    \bm{f}_{\ss{G}} = \W_{\ss{G},\ss{P}^n} \bm{p},
\end{align}
where we use the shorthand $\bm{v}_{\ss{G}}$ to denote the vector with entries $\{v_g: g\in\ss{G}\}$.
The columns of $\W_{\ss{G},\ss{P}^n}$ for Paulis that differ by an element of $\ss{C_G}$ are identical and so the corresponding Pauli error rates cannot be distinguished using only the estimated Pauli eigenvalues.
That is, we can only reconstruct a \textit{marginal} or coarse-grained probability distribution $\bm{p}_{\ss{A_G}}$ over $\ss{A_G}$ with marginal probabilities
\begin{align}
    p_{\ss{A_G},a} = \sum_{c\in \ss{C_G}} p_{a+c}
\end{align}
for $a\in\ss{A_G}$ via
\begin{align}\label{eq:marginal}
    \bm{f}_{\ss{G}} = \W_{\ss{G},\ss{A_G}} \bm{p}_{\ss{A_G}}.
\end{align}
When $\ss{G}$ is the group generated by the stabilizers and logical operators of an error-correcting code, the marginal probabilities give exactly the distribution of logical errors for each syndrome.
This marginal distribution can therefore be used to construct the maximum likelihood decoder for the actual noise afflicting a device.

In practice, inverting \cref{eq:marginal} to find the marginal error rates might be solved using, for example, a non-negative least squares solver, being careful to take the nontrivial covariance into account as well when estimating the errors.
However, to prove concrete theorems, we will use the estimator
\begin{align}\label{eq:IPEstimator}
    \hat{\bm{p}}_{\ss{A_G}}
    = \bigl[\W_{\ss{G}, \ss{A_G}}^{-1} \hat{\bm{f}}_{\ss{G}}\bigr]_\mc{D}
    = \bigl[|\ss{G}|^{-1} \W_{\ss{A_G}, \ss{G}} \hat{\bm{f}}_{\ss{G}}\bigr]_\mc{D}
\end{align}
using \cref{lem:WHtransform}.
Here we use the notation $[\bm{v}]_\mc{D}$ to mean taking a $k$-dimensional vector $\bm{v}$ and projecting it to the nearest point (according to a Euclidean metric) in the $k$-point simplex, which we denote $\mc{D}$, with the dimension being inferred from the context.
The informal statement in \cref{res:group} follows as $\|\bm{r}_\ss{G}\|_\infty \leq 2(1-p_0)$ by \cite[Lemma 4]{Erhard2019}.

\begin{proposition}
\label{prop:errorbound}
Let $\ss{G}$ be a group of Pauli matrices with a maximal stabilizer subgroup $\ss{S}$.
Then, under the conditions of \cref{prop:RatioProp}, an estimator $\hat{\bm{p}}_\ss{A_G}$ of $\bm{p}_\ss{A_G}$ satisfying
\begin{align*}
    \| \hat{\bm{p}}_\ss{A_G} - \bm{p}_\ss{A_G} \|_2 
    \leq O(\epsilon) \|\bm{r}_\ss{G}\|_\infty
\end{align*}
with probability at least $1-\delta$ can be obtained using $t= \frac{2}{\epsilon^2}\log\bigl(\frac{2 |\kappa| |\ss{G}|}{\delta}\bigr)$ measurements per round for $|\kappa|(\sqrt{|\ss{G}/\ss{S}|}+1)$ rounds,
where $|\kappa| = O\bigl(\log \tfrac{1}{\Delta_\ss{G}}\bigr)$ and $m_{\rm max} = O\bigl(\tfrac{1}{\Delta_\ss{G}}\bigr)$ is the longest sequence length used.
\end{proposition}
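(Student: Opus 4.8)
The plan is to reduce this to \Cref{prop:RatioProp} by choosing an efficient stabilizer covering, running the ratio estimator to obtain multiplicatively precise Pauli eigenvalues, and then inverting the Walsh--Hadamard transform and projecting onto the simplex. First I would invoke \Cref{lem:covering} with $\ss{X}=\ss{G}$: since $\ss{G}$ is a group with maximal stabilizer subgroup $\ss{S}$, there is a covering $\ss{O}$ with $|\ss{O}|\le\sqrt{|\ss{G}/\ss{S}|}+1$. Running ${\bf Ratio}(\ss{O},\ss{G},t)$ with $t=\tfrac{2}{\epsilon^2}\log\bigl(\tfrac{2|\ss{G}||\kappa|}{\delta}\bigr)$ then returns, by \Cref{prop:RatioProp}, an estimate $\hat{\bm r}_\ss{G}$ obeying the pointwise bound $|\hat r_g-r_g|\le O(\epsilon)r_g$ for every $g\in\ss{G}$ with probability $1-\delta$, using $|\kappa|=O(\log\tfrac1{\Delta_\ss{G}})$ sequence lengths per group and $m_{\rm max}=O(1/\Delta_\ss{G})$; this accounts for the stated $t$ measurements per round and $|\kappa|(\sqrt{|\ss{G}/\ss{S}|}+1)$ rounds. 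Setting $\hat f_g=1-\hat r_g$ and using $f_g=1-r_g$ turns this into $|\hat f_g-f_g|\le O(\epsilon)r_g$, and I would feed $\hat{\bm f}_\ss{G}$ into the estimator $\hat{\bm p}_\ss{A_G}=[\,|\ss{G}|^{-1}\W_{\ss{A_G},\ss{G}}\hat{\bm f}_\ss{G}\,]_\mc{D}$ of \cref{eq:IPEstimator}.

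The heart of the argument is a norm computation that cancels the dimension factor. Because $\ss{A_G}$ is a transversal for $\ss{C_G}$, the intersection $\ss{C_G}\cap\ss{A_G}$ is trivial, so \Cref{lem:WHtransform} gives $\W_{\ss{G},\ss{A_G}}^\dagger\W_{\ss{G},\ss{A_G}}=|\ss{G}|\,\Pi_{\ss{A_G}}$; as $|\ss{A_G}|=|\ss{G}|$ this says $\W_{\ss{G},\ss{A_G}}$ equals $|\ss{G}|^{1/2}$ times a unitary, whence $\W_{\ss{G},\ss{A_G}}^{-1}=|\ss{G}|^{-1}\W_{\ss{A_G},\ss{G}}$ has operator norm exactly $|\ss{G}|^{-1/2}$. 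Writing $\tilde{\bm p}_\ss{A_G}=\W_{\ss{G},\ss{A_G}}^{-1}\hat{\bm f}_\ss{G}$ for the pre-projection estimate and using $\bm f_\ss{G}=\W_{\ss{G},\ss{A_G}}\bm p_\ss{A_G}$ from \cref{eq:marginal}, I would bound
\begin{align*}
  \|\tilde{\bm p}_\ss{A_G} - \bm p_\ss{A_G}\|_2
  &\le \|\W_{\ss{G},\ss{A_G}}^{-1}\|\,\|\hat{\bm f}_\ss{G} - \bm f_\ss{G}\|_2
  \le |\ss{G}|^{-1/2}\,\|\hat{\bm f}_\ss{G} - \bm f_\ss{G}\|_2 .
\end{align*}
The eigenvalue error is controlled entrywise by $|\hat f_g-f_g|\le O(\epsilon)r_g\le O(\epsilon)\|\bm r_\ss{G}\|_\infty$ (the identity term $g=0$ contributes nothing since $r_0=0$), so $\|\hat{\bm f}_\ss{G}-\bm f_\ss{G}\|_2\le|\ss{G}|^{1/2}O(\epsilon)\|\bm r_\ss{G}\|_\infty$, and the two factors of $|\ss{G}|^{1/2}$ cancel to leave $\|\tilde{\bm p}_\ss{A_G}-\bm p_\ss{A_G}\|_2\le O(\epsilon)\|\bm r_\ss{G}\|_\infty$.

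The final step is to remove the projection $[\,\cdot\,]_\mc{D}$ harmlessly. Since $\bm p_\ss{A_G}$ is itself a (sub)normalized probability vector lying in the convex set $\mc{D}$, and Euclidean projection onto a closed convex set is non-expansive, I would conclude $\|\hat{\bm p}_\ss{A_G}-\bm p_\ss{A_G}\|_2=\|[\tilde{\bm p}_\ss{A_G}]_\mc{D}-\bm p_\ss{A_G}\|_2\le\|\tilde{\bm p}_\ss{A_G}-\bm p_\ss{A_G}\|_2\le O(\epsilon)\|\bm r_\ss{G}\|_\infty$, which is the claimed bound. The sample complexity then follows by multiplying the $t$ measurements used in each ${\bf V}$ call by the $|\kappa|$ sequence lengths and the $|\ss{O}|\le\sqrt{|\ss{G}/\ss{S}|}+1$ groups in the covering.

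The step I expect to require the most care is the dimension cancellation: one must check that the \emph{multiplicative} (rather than additive) guarantee of \Cref{prop:RatioProp}, together with the exact isometry normalization of $\W_{\ss{G},\ss{A_G}}$, is precisely what trades the worst-case $|\ss{G}|^{1/2}$ blowup from the $\ell_2$ aggregation against the $|\ss{G}|^{-1/2}$ shrinkage of the inverse transform. Handling the $g=0$ eigenvalue, which is pinned to $1$ by trace preservation rather than estimated, and confirming that $\ss{C_G}\cap\ss{A_G}$ is trivial so that $\W_{\ss{G},\ss{A_G}}$ is genuinely invertible, are the remaining technical points.
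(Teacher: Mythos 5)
Your proposal is correct and follows essentially the same route as the paper's proof: a stabilizer covering from \cref{lem:covering}, the multiplicative guarantee of \cref{prop:RatioProp}, the proportional-unitarity of the Walsh--Hadamard transform from \cref{lem:WHtransform} so that the $|\ss{G}|^{1/2}$ factor from converting the entrywise bound to an $\ell_2$ bound cancels against the $|\ss{G}|^{-1/2}$ of the inverse transform, and non-expansiveness of the Euclidean projection onto the simplex. The only cosmetic differences are that you phrase the argument in terms of $\hat{\bm{f}}_\ss{G}$ rather than $\hat{\bm{r}}_\ss{G}$ (these differ by a sign inside the norm) and that you verify the isometry condition as $\ss{C_G}\cap\ss{A_G}=0$ for $\W_{\ss{G},\ss{A_G}}$ rather than $\ss{C_{A_G}}\cap\ss{G}=0$ for its adjoint, which are equivalent.
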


\begin{proof}
Suppose we have an estimator $\hat{\bm{r}}_\ss{G}$ satisfying 
\begin{align*}
    \| \hat{\bm{r}}_\ss{G} - \bm{r}_\ss{G} \|_\infty \leq \epsilon_0.
\end{align*}
First note that projecting to the simplex can only decrease the error, that is,
\begin{align*}
    \| \hat{\bm{p}}_\ss{A_G} - \bm{p}_{\ss{A_G}} \|_2
    \leq \| |\ss{G}|^{-1} \W_{\ss{A_G}, \ss{G}}( \hat{\bm{r}}_\ss{G} - \bm{r}_\ss{G}) \|_2.
\end{align*}
By \cref{lem:WHtransform} and using $\ss{C_{A_G}} \cap \ss{G} = 0$, $|\ss{G}|^{-1/2} \W_{\ss{A_G}, \ss{G}}$ is a unitary transformation on the space $\bm{v}_\ss{G}$, so, using a standard norm equivalence,
\begin{align}\label{eq:prbound}
    \| \hat{\bm{p}}_{\ss{A_G}} - \bm{p}_{\ss{A_G}} \|_2
    &\leq |\ss{G}|^{-1/2} \| \hat{\bm{r}}_\ss{G} - \bm{r}_\ss{G} \|_2 
    \leq \|\hat{\bm{r}}_\ss{G} - \bm{r}_\ss{G} \|_\infty \leq \epsilon_0.
\end{align}
Under the conditions of \cref{prop:RatioProp}, $\epsilon_0 \le O(\epsilon) \|\bm{r}_{\ss{G}}\|_\infty$.
The bound on the number of rounds and the total number of measurements follows from the bound on the size of a stabilizer covering of $\ss{G}$ from \cref{lem:covering}.
\end{proof}

In the special case that the group $\ss{G} = \ss{P}^n$,  \cref{prop:errorbound} provides a direct method for estimating the Pauli projection of a noise channel and gives pointwise precision to within $O(\epsilon)(1-p_0)$, where $p_0$ is the probability of no error.
Because of its frequent use in applications, it may be desirable to state the following corollary of \cref{prop:errorbound} in terms of the diamond distance.

\begin{corollary}
Let $\Lambda = \mathcal{E}^{\ss{P}^n}\bigr|_\ss{G}$ be the restriction to a subgroup $\ss{G}\subseteq\ss{P}^n$ of the Pauli projection of a channel $\mathcal{E}$.
For any sufficiently small $\epsilon,\delta\geq 0$, an estimate $\hat{\Lambda}$ of any $\tfrac{1}{2}$-weak, $\tfrac{1}{2}$-stable, GTM noise model $\Lambda$ can be reconstructed with
\begin{align*}
    \|\hat{\Lambda} - \Lambda\|_\diamond \leq O(\epsilon) \|\mathcal{I}_{\ss{G}} - \Lambda\|_F
\end{align*}
using $O\Bigl[\tfrac{|\kappa||\ss{G}|^{1/2}}{\epsilon^2} \log\bigl(\tfrac{|\kappa| |\ss{G}|}{\delta}\bigr)\Bigr]$ samples with $|\kappa| = O\bigl(\log(1/\Delta)\bigr)$ and sequence lengths at most $O(1/\Delta)$.
\end{corollary}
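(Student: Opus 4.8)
The plan is to convert the pointwise multiplicative control on the Pauli eigenvalues established in \cref{prop:RatioProp,prop:errorbound} into a diamond-distance guarantee by passing through the error-rate representation. The key structural fact I would use is that for Pauli channels the diamond distance equals the total variation distance between their error-rate distributions, i.e.\ the generalization of \cref{eq:WF} from the distance to the identity of a single channel to the distance between two arbitrary Pauli channels, giving $\|\hat{\Lambda}-\Lambda\|_\diamond = \|\hat{\bm{p}}_\ss{A_G}-\bm{p}_\ss{A_G}\|_1$ up to the factor-of-two convention in the norm. Here $\hat{\bm{p}}_\ss{A_G}$ and $\bm{p}_\ss{A_G}$ are the marginal error-rate distributions over $\ss{A_G}$ reconstructed in \cref{prop:errorbound}; since the simplex projection in \cref{eq:IPEstimator} only decreases the error, I can work throughout with the linear estimator $|\ss{G}|^{-1}\W_{\ss{A_G},\ss{G}}\hat{\bm{f}}_\ss{G}$.

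First I would bound the $1$-norm by the $2$-norm by norm equivalence on the $|\ss{A_G}|$-dimensional space, $\|\hat{\bm{p}}_\ss{A_G}-\bm{p}_\ss{A_G}\|_1 \le \sqrt{|\ss{A_G}|}\,\|\hat{\bm{p}}_\ss{A_G}-\bm{p}_\ss{A_G}\|_2$. Next, exactly as in \cref{eq:prbound}, \cref{lem:WHtransform} together with $\ss{C_{A_G}}\cap\ss{G}=0$ shows that $|\ss{G}|^{-1/2}\W_{\ss{A_G},\ss{G}}$ is unitary, so $\|\hat{\bm{p}}_\ss{A_G}-\bm{p}_\ss{A_G}\|_2 = |\ss{G}|^{-1/2}\|\hat{\bm{r}}_\ss{G}-\bm{r}_\ss{G}\|_2$. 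The two dimensional prefactors then cancel: as $\ss{G}$ is a subgroup of $\ss{P}^n\cong\mathbb{Z}_2^{2n}$, its symplectic complement obeys $|\ss{C_G}|=4^n/|\ss{G}|$, whence $|\ss{A_G}|=|\ss{P}^n|/|\ss{C_G}|=|\ss{G}|$ and $\sqrt{|\ss{A_G}|/|\ss{G}|}=1$. This yields the clean intermediate bound $\|\hat{\bm{p}}_\ss{A_G}-\bm{p}_\ss{A_G}\|_1 \le \|\hat{\bm{r}}_\ss{G}-\bm{r}_\ss{G}\|_2 = \|\hat{\Lambda}-\Lambda\|_F$, where the last equality holds because both superoperators are diagonal in the orthonormal Pauli basis with entries $\hat{f}_g$ and $f_g$.

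Finally I would feed the multiplicative guarantee $|\hat{r}_x-r_x|\le O(\epsilon)r_x$ of \cref{prop:RatioProp} in termwise to obtain $\|\hat{\bm{r}}_\ss{G}-\bm{r}_\ss{G}\|_2 \le O(\epsilon)\|\bm{r}_\ss{G}\|_2$, and identify $\|\bm{r}_\ss{G}\|_2 = \|\mathcal{I}_\ss{G}-\Lambda\|_F$ since $\mathcal{I}_\ss{G}-\Lambda=\sum_{g\in\ss{G}}r_g|g)(g|$. Chaining the inequalities gives $\|\hat{\Lambda}-\Lambda\|_\diamond \le O(\epsilon)\|\mathcal{I}_\ss{G}-\Lambda\|_F$, and the stated sample complexity is inherited verbatim from \cref{prop:errorbound} after using $\sqrt{|\ss{G}/\ss{S}|}\le\sqrt{|\ss{G}|}$ to bound the number of rounds. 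The step I expect to require the most care is the very first one: justifying that the diamond distance between the two marginal Pauli channels really equals the $1$-norm of their error-rate difference. Since \cref{eq:WF} is stated only for the distance to the identity of a single channel, I would need to check that the Bell-diagonal Choi structure underlying \cite{Magesan2012a} extends to the difference of two Pauli channels and that coarse-graining onto $\ss{A_G}$ produces a genuine marginal Pauli channel whose diamond norm is the total variation distance of its error rates.
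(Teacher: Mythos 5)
Your proposal is correct and follows essentially the same route as the paper's own proof: the chain \cref{eq:prbound} (unitarity of $|\ss{G}|^{-1/2}\W_{\ss{A_G},\ss{G}}$ from \cref{lem:WHtransform}), the $1$-norm/$2$-norm equivalence with the cancelling $\sqrt{|\ss{A_G}|}=\sqrt{|\ss{G}|}$ factors, the multiplicative bound $\|\hat{\bm{r}}_\ss{G}-\bm{r}_\ss{G}\|_2\le O(\epsilon)\|\bm{r}_\ss{G}\|_2$ from \cref{prop:RatioProp}, and the identification $\|\bm{r}_\ss{G}\|_2=\|\mathcal{I}_\ss{G}-\Lambda\|_F$. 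The one place you go beyond the paper is in flagging and sketching the justification of the diamond-norm-to-$1$-norm step (the Bell-diagonal Choi argument extending \cref{eq:WF} to differences of Pauli channels), which the paper's proof uses implicitly without comment; your caution there is warranted and the generalization does hold.
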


\begin{proof}
This follows from \cref{eq:prbound} in \cref{prop:errorbound} using the norm equivalence
\begin{align*}
    |\ss{G}|^{-1/2} \| \hat{\bm{p}}_\ss{A_G} - \bm{p}_{\ss{A_G}} \|_1 
    \le \| \hat{\bm{p}}_\ss{A_G} - \bm{p}_{\ss{A_G}} \|_2\,,
\end{align*}
with $\|\hat{\bm{r}}_\ss{G} - \bm{r}_\ss{G}\|_2 \le O(\epsilon)\| \bm{r}_\ss{G} \|_2 $ from \cref{prop:RatioProp} and $\| \bm{r}_\ss{G} \|_2  = \|\mathcal{I}_{\ss{G}} - \Lambda\|_F$, where $\mathcal{I}_{\ss{G}}$ is the restriction of the identity channel to $\ss{G}$.
\end{proof}

\section{Reconstructing a subset of errors}

We have shown that all Pauli error rates for a marginal model can be reliably reconstructed with substantially fewer resources than might have been anticipated.
However, reconstructing an exponential number of probabilities is manifestly inefficient.
We now show how to reconstruct the dominant error rates in an approximately sparse noise model.
The motivating application is that errors can be divided into a ``background'' noise process due to, e.g., independent dephasing and amplitude damping errors, and some additional errors that arise from unknown couplings.
The goal is to learn these couplings and either determine the physical mechanism causing them and eliminate them via re-engineering the device or to introduce compensating pulses.

To set the scale for the number of dominant error rates, consider a background local depolarizing process on $n$ qubits where independent single-qubit errors occur with probability $p$.
Then a weight $w$ error (that is, an error that acts nontrivially on $w$ qubits) occurs with probability $\binom{n}{w}(1-p)^{n-w}p^w$.
For $np\ll 1$ (where, e.g., 50 qubits with $p=0.001$ is approximately the current state of the art), the probability of an error with more than weight 1 is approximately $(np)^2/2$.
Now suppose an additional noise process is mixed in that, with probability $p'$ uniformly at random applies one of $t$ additional errors with no \textit{a priori} structure, representing, e.g., unknown correlated errors.
Then $3n + t$ errors account for almost all of the errors in the device.

We first show how to use the ${\bf Ratio}$ subroutine to estimate the probabilities of all errors within an arbitrary set $\ss{E}$.
Formally, we will prove pointwise precision to within $O(\epsilon)(1-p_0)$, where $p_0$ is the probability of no error.
By \cref{eq:WF} we have $1-p_0 = (1+1/d)r$ where $r$ is the so-called average error rate.
We will then present a divide-and-conquer routine to efficiently identify a set $\ss{E}$ of fixed size that can be used to identify correlated errors.
The informal version of the following theorem stated as \cref{res:sparse} in the introduction follows from using the trivial stabilizer covering $\ss{O}$ of $\ss{X}$ with $|\ss{O}|=|\ss{X}|$.

\begin{theorem}\label{thm:sparse}
For any set $\ss{E}\subseteq\ss{P}^n$,
an estimator $\hat{\bm{p}}_\ss{E}$ of $\bm{p}_\ss{E}$ satisfying
\begin{align*}
    \| \hat{\bm{p}}_\ss{E} - \bm{p}_\ss{E} \|_\infty 
    \leq O(\epsilon)(1-p_0)
\end{align*}
with probability at least $1-\delta$ can be obtained from the output of ${\bf Ratio}(\ss{O},\ss{X},t)$ using a random set $\ss{X}\subseteq\ss{P}^n$ with $|\ss{X}| = \tfrac{1}{\epsilon^2}\log(4|\ss{E}|/\delta)$, any stabilizer covering $\ss{O}$ of $\ss{X}$, and $t = \tfrac{1}{\epsilon^2}\log(4|\ss{X}||\kappa|/\delta)$ measurements per round for $|\ss{O}|$ rounds.
\end{theorem}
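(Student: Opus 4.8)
The plan is to recover each $p_e$ by inverting the Walsh--Hadamard transform of \cref{eq:p-ptilde}, but to replace the exponentially large sum by a Monte Carlo average over the random set $\ss{X}$. By \Cref{lem:WHtransform} with $\ss{A}=\ss{B}=\ss{P}^n$ we have $\W^{-1}=|\ss{P}^n|^{-1}\W$, so writing $f_b=1-r_b$ as in \cref{eq:defr},
\begin{align*}
    p_e = \frac{1}{|\ss{P}^n|}\sum_{b\in\ss{P}^n}(-1)^{\langle e,b\rangle}(1-r_b)
    = \mathbb{E}_b\bigl[(-1)^{\langle e,b\rangle}(1-r_b)\bigr],
\end{align*}
with $b$ uniform on $\ss{P}^n$. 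For $e\neq 0$ the constant term cancels in expectation, giving $p_e=-\mathbb{E}_b[(-1)^{\langle e,b\rangle}r_b]$, while $p_0=1-\mathbb{E}_b[r_b]$, so that $1-p_0=\mathbb{E}_b[r_b]$, the average error rate up to the factor in \cref{eq:WF}. This identity suggests the estimator $\hat p_e=-\tfrac{1}{|\ss{X}|}\sum_{b\in\ss{X}}(-1)^{\langle e,b\rangle}\hat r_b$ for $e\neq 0$ (and $\hat p_0=1-\tfrac{1}{|\ss{X}|}\sum_{b\in\ss{X}}\hat r_b$), where $\ss{X}$ is drawn uniformly and the $\hat r_b$ are the output of {\bf Ratio}$(\ss{O},\ss{X},t)$. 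Using the trivial stabilizer covering gives each $b\in\ss{X}$ its own two-element group $\langle b\rangle$, which is why the procedure makes $|\ss{O}|=|\ss{X}|$ rounds.

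I would then split the error at a fixed $e$ into an estimation part and a sampling part,
\begin{align*}
    \hat p_e - p_e
    = \underbrace{-\frac{1}{|\ss{X}|}\sum_{b\in\ss{X}}(-1)^{\langle e,b\rangle}(\hat r_b-r_b)}_{\text{(i) estimation}}
    \;-\; \underbrace{\Bigl(\tfrac{1}{|\ss{X}|}\sum_{b\in\ss{X}}(-1)^{\langle e,b\rangle}r_b-\mathbb{E}_b[(-1)^{\langle e,b\rangle}r_b]\Bigr)}_{\text{(ii) sampling}}.
\end{align*}
For (i), \Cref{prop:RatioProp} guarantees $|\hat r_b-r_b|\le O(\epsilon)r_b$ simultaneously for all $b\in\ss{X}$ with a failure probability controlled by $t=\tfrac{1}{\epsilon^2}\log(4|\ss{X}||\kappa|/\delta)$ together with a union bound over $\ss{X}$; hence $\text{(i)}\le O(\epsilon)\tfrac{1}{|\ss{X}|}\sum_{b\in\ss{X}}r_b$, and since the empirical average concentrates around $\mathbb{E}_b[r_b]=1-p_0$ this term is $O(\epsilon)(1-p_0)$. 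For (ii), the summands $(-1)^{\langle e,b\rangle}r_b$ lie in $[-\tfrac12,\tfrac12]$, so a concentration inequality combined with a union bound over the $|\ss{E}|$ targets and the choice $|\ss{X}|=\tfrac{1}{\epsilon^2}\log(4|\ss{E}|/\delta)$ controls it. Allocating the two failure budgets to sum to $\delta$ then yields the claimed $\ell_\infty$ bound.

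The hard part will be forcing the sampling term (ii) down to the scale $(1-p_0)$ rather than merely $O(\epsilon)$. A naive Hoeffding bound only exploits the range $r_b\le\tfrac12$ and gives additive error $O(\epsilon)$; to reach the relative scale $1-p_0=\mathbb{E}_b[r_b]$ one must instead use the variance bound $\mathbb{E}_b[r_b^2]\le\tfrac12(1-p_0)$, which follows from $r_b\le\tfrac12$, through a Bernstein-type inequality, and reconcile the residual Monte Carlo noise with the bias introduced by the nonlinear $(w/v)^{1/m}$ step in {\bf Ratio}. Unlike the group case (\Cref{prop:errorbound}), where the transform is inverted exactly and no sampling error appears, here the interplay between the sampling variance and the estimator bias is precisely what produces the $\epsilon^{-4}$ scaling flagged after \cref{res:sparse}; I expect controlling this interplay, rather than the routine Fourier-inversion and union-bound bookkeeping, to be the crux of the argument.
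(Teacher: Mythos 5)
Your estimator and its split into an estimation term (i) and a sampling term (ii) are exactly the decomposition used in the paper's proof, as is the surrounding bookkeeping (union bound over $b\in\ss{X}$ and $\kappa$ inside ${\bf Ratio}$ via \cref{prop:RatioProp}, union bound over $\ss{E}$, splitting the failure budget between the two terms, trivial covering with $|\ss{O}|=|\ss{X}|$). The genuine gap is precisely in what you flag as the crux. The missing ingredient is the pointwise bound $r_b \le 2(1-p_0)$ for \emph{all} $b\in\ss{P}^n$ (equivalently $f_b\ge 2p_0-1$; this is \cite[Lemma 4]{Erhard2019}, mentioned after \cref{def:weakstable} and invoked explicitly in the paper's proof). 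This single fact puts both terms at the right scale by elementary means. For (i), you get $|\hat r_b - r_b|\le O(\epsilon) r_b \le O(\epsilon)\cdot 2(1-p_0)$ pointwise, so no concentration of the empirical mean $\tfrac{1}{|\ss{X}|}\sum_{b\in\ss{X}} r_b$ is needed at all. For (ii), the summands have range proportional to $(1-p_0)$, so a plain Hoeffding-type bound --- the paper uses Serfling's inequality~\cite{Serfling1974} because $\ss{X}$ is drawn without replacement --- gives deviation $\epsilon(1-p_0)$ with failure probability $2\exp\bigl(-\Omega(|\ss{X}|\epsilon^2)\bigr)$, with no dependence on $p_0$ in the exponent. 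No Bernstein-type argument is needed or used.

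Your proposed substitutes do not close this gap, and in fact cannot recover the stated sample complexity. For (ii), Bernstein with your variance bound $\mathbb{E}_b[r_b^2]\le\tfrac12(1-p_0)$ and target deviation $\epsilon(1-p_0)$ gives an exponent of order $|\ss{X}|\epsilon^2(1-p_0)$, forcing $|\ss{X}| = \Omega\bigl(\epsilon^{-2}(1-p_0)^{-1}\log(|\ss{E}|/\delta)\bigr)$; this diverges precisely in the high-fidelity regime the theorem targets and is incompatible with the theorem's $p_0$-independent choice $|\ss{X}| = \tfrac{1}{\epsilon^2}\log(4|\ss{E}|/\delta)$. For (i), concentrating $\tfrac{1}{|\ss{X}|}\sum_{b\in\ss{X}} r_b$ around $1-p_0$ using only the range $r_b\le\tfrac12$ gives $\tfrac{1}{|\ss{X}|}\sum_{b\in\ss{X}} r_b \le (1-p_0)+O(\epsilon)$, hence an additive $O(\epsilon^2)$ contribution that is not $O(\epsilon)(1-p_0)$ when $1-p_0\ll\epsilon$. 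Both problems vanish once $r_b\le 2(1-p_0)$ is in hand. Finally, the $\epsilon^{-4}$ in \cref{res:sparse} is not produced by any variance--bias interplay inside the proof: it is simply the product of $|\ss{X}|=O(\epsilon^{-2}\log(\cdot))$ rounds with $t=O(\epsilon^{-2}\log(\cdot))$ measurements per round; the paper's remark about the bias of ${\bf Ratio}$ concerns a conjectured future improvement, not a step of this argument.
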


\begin{proof}
By \cref{eq:p-ptilde,lem:WHtransform,lem:pauliOrthogonality}, we have
\begin{align*}
    p_a = (a|\W^{-1} \bm{f} &= \frac{1}{|\ss{P}^n|} \W_{a,\ss{P}} \bm{f} \\
    &= \mathbb{E}_{b\in\ss{P}^n}\left[ (-1)^{\langle a, b\rangle}f_b\right] \\
    &=1[a=0] -\mathbb{E}_{b\in\ss{P}^n}\left[ (-1)^{\langle a, b\rangle}r_b\right]
\end{align*}
for any $a\in\ss{P}^n$.
Let $\ss{X}\subseteq\ss{P}^n$ be a set with $|\ss{X}|$ elements sampled independently and uniformly at random without replacement and
\begin{align*}
    \tilde{p}_a = 1[a=0] -\mathbb{E}_{b\in\ss{X}}\left[ (-1)^{\langle a, b\rangle}r_b\right].
\end{align*}
By \cite[Lemma 4]{Erhard2019}, we have $r_b\in[0,2(1-p_0)]$ for all $b\in\ss{P}^n$.
Applying Serfling's inequality~\cite{Serfling1974} to the first term gives
\begin{align}
\label{eq:ptildebound}
    \Pr\bigl(|\tilde{p}_a - p_a | \geq \epsilon (1-p_0)\bigr) \leq 2\mathrm{e}^{-|\ss{X}|\epsilon^2/2 (1-|\ss{X}|/|\ss{P}^n|)} \leq 2\mathrm{e}^{-|\ss{X}|\epsilon^2/2}=:\delta/2.
\end{align}

Now suppose we have estimates $\hat{r}_b$ of $r_b$ for all $b\in\ss{X}$ and let
\begin{align}
    \hat{p}_a = 1[a=0] -\mathbb{E}_{b\in\ss{X}} \left[ (-1)^{\langle a, b\rangle}\hat{r}_b\right].
\end{align}
By the triangle inequality and \cref{eq:ptildebound}, we have that with probability at least $1-\delta/2$,
\begin{align*}
    |\hat{p}_a - p_a | &\leq \epsilon(1-p_0) + \bigl|\mathbb{E}_{b\in\ss{X}}  (-1)^{\langle a, b\rangle}(\hat{r}_b - r_b)\bigr|\,.
\end{align*}
By \cref{prop:RatioProp}, $|\hat{r}_b-r_b| \le O(\epsilon)r_b$ with probability at least $1-\delta/2$ using $t=\frac{2}{\epsilon^2} \log\bigl(\frac{4 |\ss{X}||\kappa|}{\delta}\bigr)$ measurements per round.
Therefore by the union bound, we have that with probability at least $1-\delta$,
\begin{align*}
    |\hat{p}_a - p_a |\leq O(\epsilon) (1-p_0)
\end{align*}
for any $a\in\ss{P}^n$. 
The final result holds for any set $\ss{E}\subseteq\ss{P}^n$ by the union bound, redefining $\delta\to\delta/|\ss{E}|$.
\end{proof}

We now provide a search heuristic to identify sets $\ss{E}$ of interest using a function ${\bf Select}$ that returns the indices (as Pauli operators) of entries in a vector $\bm{p}$ of probabilities satisfying a desired condition.
We use three additional subroutines in what follows.
${\bf Choose}(\ss{A}, b)$ returns a set of $\min\{b,|\ss{A}|\}$ elements of $\ss{A}$ chosen uniformly at random without replacement.
${\bf Cover}(\ss{F})$ returns any valid stabilizer covering of a set $\ss{F}\subseteq\ss{P}^n$.
${\bf Ratio}$ is as described in \cref{sec:analysis}.
For any $\ss{A}\subset\mathbb{N}$, let $\ss{P}_\ss{A}\cong \ss{P}^{|\ss{A}|}$ be the set of Pauli operators that act trivially on all qubits not in $\ss{A}$ and the \textit{support} of a set $\mathrm{supp}(\ss{X})\subseteq\ss{P}^n$ be the set of qubits that are acted on nontrivially  by some element of $\ss{X}$.

\begin{alg}
[${\bf TreeReconstruction}(t,{\bf Select},u,n)$ Reconstruct an $n$-qubit error model using a function ${\bf Select}$ to select error terms at each stage of the iteration.]
\item Set $E'\coloneqq\{\ss{P}_{\{j\}}:j\in\mathbb{Z}_n\}$.
\item For $j=0,1,\ldots,\lceil\log_2 n\rceil$, Do
\begin{enumerate}
    \item Set $E\coloneqq E'$.
    \item Set $F\coloneqq \emptyset$.
    \item For each $e\in E$, Do
    \begin{itemize}
        \item Set $F \coloneqq F \cup {\bf Choose}(\ss{P}_{\mathrm{supp}(e)}, u)$.
    \end{itemize}
    \item Set $\ss{O} \coloneqq {\bf Cover}\bigl(\cup_{f\in F} f\bigr)$.
    \item Set $\bm{r} \coloneqq {\bf Ratio}(\ss{O}, \cup_{f\in F} f, t)$.
    \item Set $E'\coloneqq \emptyset$.
    \item For $k = 0, 1, \ldots, |E|-1$, Do
    \begin{itemize}
        \item Set $\displaystyle \bm{p} \coloneqq \sum_{e\in E_k} \mathbb{E}_{x\in F_k}(-1)^{\langle e, x\rangle}r_x|e)$.
        \item Set $E_k \coloneqq {\bf Select}(\bm{p})$.
        \item If $k$ is odd, Set $E'\coloneqq E' \cup \{E_{k-1} \otimes E_k\}$.
    \end{itemize}
    \item If $|E|$ is odd, set $E' \coloneqq E' \cup \{E_{|E|-1}\}$.
\end{enumerate}
\item Return $\displaystyle \bm{p} \coloneqq \sum_{e\in E_0} \mathbb{E}_{x\in F_0}(-1)^{\langle e, x\rangle}r_x|e)$.
\end{alg}

\begin{theorem}
Let ${\bf Select}$ be any function, $u$ be any positive integer and ${\bf Cover}(\ss{X})$ return the trivial cover from \cref{lem:covering}.
Then for sufficiently large $t$, ${\bf TreeReconstruction}(t,{\bf Select},u,n)$ uses $O\bigl(u t n\log(1/\Delta)\bigr)$ measurements.
\end{theorem}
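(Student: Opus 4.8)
The plan is to observe that the only measurements consumed by ${\bf TreeReconstruction}$ arise from the calls to ${\bf Ratio}$ in step 3(e): the subroutines ${\bf Choose}$, ${\bf Cover}$, and ${\bf Select}$ are purely classical post-processing, and the final return in step 3 merely reuses the last computed $\bm{r}$ without acquiring new data. There are exactly $\lceil\log_2 n\rceil+1$ such calls, one per value of $j$, so it suffices to bound the cost of each call and sum the resulting series.

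First I would bound the cost of a single ${\bf Ratio}$ call at iteration $j$. Writing $\ss{X}_j = \cup_{f\in F}f$ for the set probed at that iteration and using the trivial cover $\ss{O}$ from \cref{lem:covering}, the analysis of \cref{prop:RatioProp} shows (for $t$ sufficiently large that the stated $m_{\max}$ and $|\kappa|$ bounds hold) that each of the $|\ss{O}| = |\ss{X}_j|$ two-element groups is processed using $O(t|\kappa|)$ measurements, where $|\kappa| = O\bigl(\log\tfrac{1}{\Delta_{\ss{X}_j}}\bigr)$. Since $\Delta_{\ss{X}_j}\ge \Delta_{\ss{P}^n}=:\Delta$ at every level, we have $|\kappa| = O\bigl(\log\tfrac{1}{\Delta}\bigr)$ uniformly in $j$, so the $j$th call uses $O\bigl(|\ss{X}_j|\, t\log\tfrac{1}{\Delta}\bigr)$ measurements.

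Next I would track the size of $\ss{X}_j$. At iteration $j$ the collection $E$ contains $|E_j|$ clusters, and for each cluster $e$ the call ${\bf Choose}(\ss{P}_{\mathrm{supp}(e)},u)$ contributes at most $u$ Pauli operators, so $|\ss{X}_j|\le u|E_j|$. The pairing in step 3(g) combines consecutive clusters two at a time, giving $|E_{j+1}| = \lceil|E_j|/2\rceil$ with $|E_0| = n$, hence $|E_j|\le \lceil n/2^j\rceil$. Summing over all iterations,
\begin{align*}
	\sum_{j=0}^{\lceil\log_2 n\rceil} |\ss{X}_j|
	\le u\sum_{j=0}^{\lceil\log_2 n\rceil}\Bigl\lceil\tfrac{n}{2^j}\Bigr\rceil
	= O(un),
\end{align*}
since the geometric series $\sum_j n/2^j$ sums to at most $2n$. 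Multiplying by the per-group factor $O\bigl(t\log\tfrac{1}{\Delta}\bigr)$ gives the claimed total of $O\bigl(u t n\log\tfrac{1}{\Delta}\bigr)$ measurements.

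The one point requiring care is ensuring that the $\log n$ coming from the number of tree levels does not survive into the final bound: a crude level-by-level estimate using $|\ss{X}_j|\le un$ for every $j$ would produce a spurious extra factor of $\log n$. The geometric decay $|E_j|\le\lceil n/2^j\rceil$ induced by the pairing step is exactly what collapses the sum to $O(un)$, and this is the main structural observation. The only other ingredient is that ``sufficiently large $t$'' validates the per-call complexity characterization of \cref{prop:RatioProp} simultaneously at every level, which holds because the relevant spectral gap $\Delta_{\ss{X}_j}$ is bounded below by the global gap $\Delta$ throughout.
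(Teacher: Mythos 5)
Your proposal is correct and follows essentially the same route as the paper: bound the cover size at level $j$ by $u$ times the cluster count, observe that the pairing step halves the cluster count so the level sizes form a geometric series summing to $O(un)$, and multiply by the per-group cost $O\bigl(t\log\tfrac{1}{\Delta}\bigr)$ from \cref{prop:RatioProp}. Your explicit justification that $|\kappa|=O\bigl(\log\tfrac{1}{\Delta}\bigr)$ uniformly across levels (via $\Delta_{\ss{X}_j}\ge\Delta_{\ss{P}^n}$) and your warning against the naive level-by-level bound that would introduce a spurious $\log n$ factor are both sound and, if anything, slightly more careful than the paper's own write-up.
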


\begin{proof}
At step 2(c) in the $j$th iteration, we are guaranteed that each of the at most $2^j$ subsets of $F$ each contain at most $u$ terms.
Therefore $|\cup_{f\in\ss{F}} f| = |\ss{O}| \leq 2^{\lceil \log_2(n) \rceil - j} u$ in each iteration.
For sufficiently large $t$, ${\bf Ratio}(\ss{O}, \cup_{f\in F} f, t)$ uses $t$ measurements for each of $|\ss{O}||\kappa|$ rounds where $|\kappa| = O\bigl(\log\tfrac{1}{\Delta}\bigr)$.
Therefore the $j$th iteration uses $O\bigl(2^{\lceil \log_2(n) \rceil - j} u t \log\tfrac{1}{\Delta}\bigr)$ measurements.
Summing over $j=0,\ldots, \lceil \log_2 n\rceil$ completes the proof.
\end{proof}

A natural choice of ${\bf Select}$ is the function that returns the indices of the $s$ largest elements of the input $\bm{p}$.
With this function, ${\bf TreeReconstruction}$ will heuristically return a set $\ss{E}$ that is a good $s$-sparse approximation of the Pauli channel.
We note that we do not attempt to prove that the resulting set is a best $s$-sparse approximation, that is, that it is a maximizer (or near-maximizer) of $\|\bm{p}_\ss{E}\|_1$ over all $\ss{E}$ of size $s$.
It would be possible to prove an upper bound on how close it gets to an optimal approximation by estimating the precision of each error in the set to high precision and backtracking if $\sum_{e\in E_k}\bm{p}_e$ is below some threshold value at any iteration.
However, a rigorous statement along these lines would likely only hold with impractical amounts of resources or a tighter analysis of the distribution of the output of ${\bf Ratio}$.
We leave the question of provable approximation ratios to future work. 

We also note that a better run-time may be achieved in this setting using methods for reconstructing a sparse signal from few measurements~\cite{Candes2006, Candes2006a, Donoho2006}.
Specifically, the present variant can be cast in a similar form to sparse Fourier and Hadamard transforms~\cite{Scheibler2015, Cheraghchi2017, Li2015,Lu2018}.
However, it is an open question if the methods in those references can be adapted to the symplectic structure of the Pauli group to achieve near-optimal sparse reconstruction.
Moreover, while finding such a set is desirable, we do not require any property of the set $\ss{E}$ for the reconstruction of $\bm{p}_\ss{E}$ to be pointwise convergent.
Thus, when the heuristic converges to a sparse set with large measure it is a certificate of correctness since each estimate in the set is pointwise accurate.

Another interesting choice of ${\bf Select}$ is a function that returns the indices of errors that do not fit a background model.
The most natural background model is for independent single-qubit errors, which can be directly estimated using \cref{thm:sparse}.

\section{Bounded degree graphical models}\label{sec:boundeddegree}

In this section we show how the reconstruction procedures above for complete Pauli channels can be used to learn a Pauli channel on $n$ qubits with bounded degree correlations.
It will be convenient to adopt a slightly different notation than what was used previously that treats the probability distribution $p$ over Pauli errors as a function of a random variable rather than a vector.
That is, we write $p(\bm{x})$ for the probability of the string of Paulis $\bm{x}$, $p(\bm{x}|\bm{y})$ for the conditional probability of $\bm{x}$ given $\bm{y}$, and so on.
In particular, if $p(\bm{x},\bm{y})$ is a joint distribution, then we write $p(\bm{x})$ for the marginal.
We will now review some concepts from the field of probabilistic graphical models; see Ref.~\cite{Koller2009} for an introduction.

Probabilistic graphical models are, at the most basic level, probability distributions over collections of random variables where the dependency structure between the variables is specified by a bipartite graph called a factor graph.
Consider a collection of random variables with $x_j$ denoting the $j$th random variable.
The variable $\bm{x}$ will be a random $n$-qubit Pauli, with $x_j$ being the $j$th tensor factor (i.e., the single-qubit Pauli acting on qubit $j$), but we leave the discussion more general for the moment.
Suppose that there exist \emph{factors} $C_k$ that are subsets of the random variables such that the joint probability distribution $p$ over all the $x_j$ obeys
\begin{align}
	p(\bm{x}) = \frac{1}{Z} \prod_{k} \phi_k(\bm{x}_{C_k}),
\end{align}
where $\phi_k : \bm{x}_{C_k} \to \mathbb{R}^+$ are strictly positive functions called \emph{factor potentials} supported on the factors; the argument of $\phi_k$ is shorthand notation for the subset of variables contained in a factor, $\bm{x}_{C} =\{ x_j : j \in C\}$; and $Z$ is a normalization constant called the \emph{partition function}.
(Note that some authors use the term factor potential to refer instead to $\log \phi_k$.)
Then the factor graph for this model is the bipartite graph with one set of nodes labeled by the variable labels $j$ and the other set of nodes labeled by the factors $C_k$, and an edge between $j$ and $C_k$ if and only if $x_j \in C_k$.
Such a factorized strictly positive probability distribution is known as a \emph{Gibbs random field} since the factor potentials play the role of exponentiated energy potentials in the Gibbs distribution of a statistical mechanical model.
Note that there always exists a Gibbs random field for any probability distribution, although the factor graph may be trivial, that is, contain only one factor.

\begin{figure}[tb]
\includegraphics{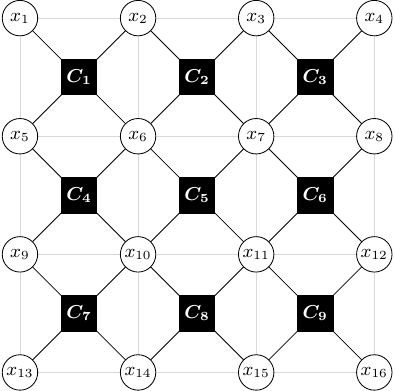}
\caption{\label{fig:factorgraph}
An example of a factor graph.
The factors are the black squares and the variable nodes are the white circles.
This is also an example of a bounded degree factor graph because each each variable belongs to at most 4 factors, and each factor couples only 4 variables.}
\end{figure}

The structure of a nontrivial factor graph implies that not all variables in $\bm{x}$ can be arbitrarily correlated.
In particular, the correlation is controlled by the \emph{Markov blanket} $\bm{x}_{\partial S}$ of a subset of variables $\bm{x}_S$, where the set $\partial S$ is defined as
\begin{align}
	\partial S = \bigcup_k\{C_k : C_k \cap S \not= \emptyset\} - S.
\end{align}
That is, it is the set of variables that are adjacent to $S$ via a factor node in the factor graph, but excluding $S$ itself.
We will refer to both $\partial S$ and $\bm{x}_{\partial S}$ as the Markov blankets of both $S$ and $\bm{x}_S$ with context resolving any ambiguity.
The closure $\bar{S}$ is the union of a set and its Markov blanket, $\bar{S} = S \cup \partial S$.

The celebrated Hammersley-Clifford theorem~\cite{Hammersley1971, Besag1974} describes how the Markov blanket controls the dependency between variables in a Gibbs random field.
The variables $\bm{x}_A$ and $\bm{x}_{A^c}$, where $A^c$ is the complement of $A$, obey the following conditional independence relation in terms of the Markov blanket of $A$,
\begin{align}
	p(\bm{x}_A | \bm{x}_{A^c}) = p(\bm{x}_A | \bm{x}_{\partial A}).
\end{align}
This relation, known as the \textit{local Markov property}, states that the variables in a set $A$ have only bounded correlation in the sense that they are conditionally independent of any other variables outside of their Markov blanket $\partial A$.

To illustrate some of these concepts, consider the example factor graph given in \cref{fig:factorgraph}.
Here the factors are the neighborhoods of the black squares, so for example $C_1 = \{x_1, x_2, x_5, x_6\}$.
The Markov blanket of $x_1$ is given by $\{x_2, x_5, x_6\}$, and the Markov blanket of $\{x_1, x_5\}$ is $\{x_2, x_6, x_9, x_{10}\}$.
The closure of $\{x_1, x_2\}$ is $\{x_1, x_2, x_3, x_5, x_6, x_7\}$.
The variable $x_1$ is conditionally independent of every variable except $\{x_2, x_5, x_6\}$ and the set of random variables $\{x_6, x_7, x_{10}, x_{11}\}$ depends conditionally on every other variable.

We will provide an estimate of a Gibbs random field by estimating the factor potentials.
We will not provide estimates for the partition function because estimating general partition functions is believed to be computationally hard~\cite{Barahona1982,Jerrum1993}, and so properly normalizing our probability estimate will in general be intractable.
However, there are still several cases of practical interest where this is provably not the case. 
For example, when the factor graph is a tree, and when $p(\bm{0})$ is not too small to be estimated via sampling.
Even without being able to efficiently compute the normalization exactly, either heuristics could be used, or ratios of probabilities can be used unconditionally.
Ratios of probabilities are all that is needed in, for example, most Monte Carlo methods.
However, as we now show, the error on the renormalized distribution can be bounded by the error on the factor potentials without needing to obtain the partition functions.

\begin{lemma}
\label{lemma:factorBound}
Let $p$ and $q$ be Gibbs random fields with the same factor graph and factorizations  $p(\bm{x}) = \tfrac{1}{Z_p}\prod_C \phi_{p,C}(\bm{x}_C)$ and $q(\bm{x}) = \tfrac{1}{Z_q}\prod_C \phi_{q,C}(\bm{x}_C)$ respectively.
Then
\begin{align*}
	\| p-q \|_1 \le \sum_C \max_{\bm{x}_C}\lvert\log \tfrac{\phi_{p,C}(\bm{x}_C)}{\phi_{q,C}(\bm{x}_C)}\rvert.
\end{align*}
\end{lemma}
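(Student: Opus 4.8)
The plan is to reduce the statement to a one-parameter interpolation between $q$ and $p$ and to bound the rate of change of the $1$-norm along that path. First I would set $h_C(\bm{x}_C) = \log\frac{\phi_{p,C}(\bm{x}_C)}{\phi_{q,C}(\bm{x}_C)}$, which is finite because the factor potentials of a Gibbs random field are strictly positive, and write $\delta_C = \max_{\bm{x}_C}\lvert h_C(\bm{x}_C)\rvert$ for the $C$-th term on the right-hand side, with $S = \sum_C \delta_C$. Collecting the factors, the likelihood ratio becomes
\begin{align*}
\frac{p(\bm{x})}{q(\bm{x})} = \frac{Z_q}{Z_p}\prod_C e^{h_C(\bm{x}_C)} = \frac{e^{g(\bm{x})}}{\mathbb{E}_q[e^g]}, \qquad g(\bm{x}) = \sum_C h_C(\bm{x}_C),
\end{align*}
where the last equality uses $Z_p/Z_q = \mathbb{E}_q[e^g]$, obtained by dividing $Z_p = \sum_{\bm{x}}\prod_C \phi_{p,C}(\bm{x}_C)$ by $Z_q$; this cleanly eliminates the (intractable) partition functions. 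The crucial structural fact is that $g$ is a \emph{sum} of the $h_C$, so $\max_{\bm{x}} g - \min_{\bm{x}} g \le \sum_C(\max_{\bm{x}_C} h_C - \min_{\bm{x}_C} h_C) \le 2S$.

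Next I would introduce the exponential family path $p_t(\bm{x}) = q(\bm{x})e^{tg(\bm{x})}/\mathbb{E}_q[e^{tg}]$ for $t\in[0,1]$, which interpolates from $p_0 = q$ to $p_1 = p$. Differentiating yields the standard identity $\frac{d}{dt}p_t(\bm{x}) = p_t(\bm{x})\bigl(g(\bm{x}) - \mathbb{E}_{p_t}[g]\bigr)$, so the total variation of the velocity is exactly a mean absolute deviation,
\begin{align*}
\Bigl\|\tfrac{d}{dt}p_t\Bigr\|_1 = \sum_{\bm{x}} p_t(\bm{x})\,\bigl|g(\bm{x}) - \mathbb{E}_{p_t}[g]\bigr| = \mathbb{E}_{p_t}\bigl|g - \mathbb{E}_{p_t} g\bigr|.
\end{align*}
Because $g$ takes values in an interval of length at most $2S$, the mean absolute deviation about the mean is at most half that length, i.e.\ $\mathbb{E}_{p_t}\lvert g - \mathbb{E}_{p_t}g\rvert \le S$; this is the elementary fact that among distributions supported on a fixed interval the mean absolute deviation is maximized by the symmetric two-point distribution on the endpoints.

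Finally I would integrate pointwise and then sum: writing $p(\bm{x}) - q(\bm{x}) = \int_0^1 \frac{d}{dt}p_t(\bm{x})\,dt$ and applying the triangle inequality gives $\lvert p(\bm{x}) - q(\bm{x})\rvert \le \int_0^1 \lvert\frac{d}{dt}p_t(\bm{x})\rvert\,dt$, and summing over $\bm{x}$ with the previous display yields
\begin{align*}
\|p - q\|_1 \le \int_0^1 \Bigl\|\tfrac{d}{dt}p_t\Bigr\|_1\,dt \le \int_0^1 S\,dt = S = \sum_C \max_{\bm{x}_C}\Bigl\lvert\log\tfrac{\phi_{p,C}(\bm{x}_C)}{\phi_{q,C}(\bm{x}_C)}\Bigr\rvert,
\end{align*}
which is the claim. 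Carrying out the fundamental theorem of calculus at the level of each individual $\bm{x}$ before summing sidesteps the non-smoothness of $\lvert\cdot\rvert$. I expect the main obstacle to be obtaining the \emph{sharp} constant: the temptingly direct route of bounding $\lvert\log(p/q)\rvert \le 2S$ and converting to total variation gives both the wrong functional form (roughly $e^{2S}-1$) and the wrong constant, while even a crude pointwise bound $\lvert g - \mathbb{E}_{p_t}g\rvert \le \max g - \min g \le 2S$ loses a factor of two. The argument therefore hinges on linearizing via the interpolation path and replacing the range by the mean-absolute-deviation (half-range) estimate, which is precisely what matches the stated right-hand side.
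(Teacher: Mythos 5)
Your proof is correct, and it takes a genuinely different route from the paper. The paper's proof is information-theoretic: it invokes Pinsker's inequality in the symmetrized form $\|p-q\|_1^2 \le D(p\|q)+D(q\|p)$, expands the symmetrized relative entropy using the factorizations (the partition functions drop out because $\sum_{\bm{x}}[p(\bm{x})-q(\bm{x})]=0$), applies H\"older's inequality factor by factor to get $\sum_C \|p-q\|_1\max_{\bm{x}_C}\lvert\log\tfrac{\phi_{p,C}}{\phi_{q,C}}\rvert$, and then cancels one power of $\|p-q\|_1$. Your argument instead linearizes along the exponential-family path $p_t \propto q\,e^{tg}$ and bounds the total-variation speed $\|\tfrac{d}{dt}p_t\|_1 = \mathbb{E}_{p_t}\lvert g-\mathbb{E}_{p_t}g\rvert$ by the half-range of $g$, then integrates; the partition functions are absorbed into the path normalization, playing the same role as the paper's cancellation trick. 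What your approach buys: it is self-contained (no Pinsker needed), and it actually proves the slightly stronger bound $\|p-q\|_1 \le \sum_C \tfrac{1}{2}\bigl(\max_{\bm{x}_C} h_C - \min_{\bm{x}_C} h_C\bigr)$ with $h_C = \log\tfrac{\phi_{p,C}}{\phi_{q,C}}$, i.e.\ half the oscillation of each log-ratio rather than its sup-norm, which is strictly better whenever the $h_C$ are not centered about zero. What the paper's approach buys is brevity given Pinsker as a black box, and the H\"older-plus-cancellation step is arguably the more transparent way to see why the bound is linear rather than exponential in the log-ratios. One small polish for your write-up: the ``elementary fact'' that the mean absolute deviation about the mean is at most half the range deserves a one-line justification, e.g.\ $\mathbb{E}\lvert X-\mu\rvert \le \sqrt{\operatorname{Var}(X)} \le \tfrac{b-a}{2}$ by Jensen and Popoviciu's inequality for a random variable supported on $[a,b]$.
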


\begin{proof}
We begin by recalling Pinsker's inequality~\cite{Pinsker1964}, one form of which is given by
\begin{align}
\label{eq:Pinsker}
	\|p-q\|_1^2 \le D(p\|q) + D(q\|p)
\end{align}
where $D(p\|q)$ is the relative entropy, defined by
\begin{align*}
	D(p\|q) = \sum_{\bm{x}} p(\bm{x}) \log \tfrac{p(\bm{x})}{q(\bm{x})}.
\end{align*}
Thus it suffices to bound the symmetric relative entropy, which simplifies to
\begin{align*}
    D(p\|q) + D(q\|p) = \sum_{\bm{x}} \left[p(\bm{x}) - q(\bm{x})\right] \log \tfrac{p(\bm{x})}{q(\bm{x})}
\end{align*}
for any probability distributions $p$ and $q$.
Substituting the definitions of the Gibbs random fields into the logarithms, we have
\begin{align*}
    D(p\|q) + D(q\|p) &= \sum_{\bm{x}} \left[p(\bm{x}) - q(\bm{x})\right] \left[\log\tfrac{Z_p}{Z_q} +\sum_C \log\tfrac{\phi_{p,C}}{\phi_{q,C}}\right] \\
    &= \sum_C\sum_{\bm{x}} \left[p(\bm{x}) - q(\bm{x})\right]  \log\tfrac{\phi_{p,C}}{\phi_{q,C}},
\end{align*}
where to obtain the second line we have used the fact that $Z_p$ and $Z_q$ are independent of $\bm{x}$ and that $\sum_{\bm{x}} p(x) - q(x) = 0$, which holds as $p$ and $q$ are probability distributions.
By applying Holder's inequality separately for each $C$, we obtain
\begin{align*}
    D(p\|q) + D(q\|p) 
    &\le \sum_C \|p - q \|_1 \max_{\bm{x}}|\log\tfrac{\phi_{p,C}}{\phi_{q,C}}| \\
    &\le \sum_C \|p - q \|_1 \max_{\bm{x}_C}|\log\tfrac{\phi_{p,C}}{\phi_{q,C}}|,
\end{align*}
where the second line follows from the bounded dependence of the factors.
Substituting the bound on the symmetric relative entropy into \cref{eq:Pinsker} and canceling the common factor of $\|p-q\|_1$ gives the desired result.
\end{proof}

Given a probability distribution $p(\bm{x})$ that is a Gibbs random field, the choice of factor potentials and partition function are not unique.
However, the Hammersley-Clifford theorem also gives an explicit description of a certain canonical choice for the factor potentials as a function of the marginal probability distributions on the factors and their Markov blankets, as well as an explicit partition function.
It is this description that we will use for our estimators below.
We use the parameterization from \citet{Abbeel2006} that applies directly to factor graphs (as opposed to less general formulations like Bayesian networks or Markov random fields) and makes explicit use of the local Markov property.
We first fix a fiducial assignment to $\bm{x}$ which we call $\bm{0}$.
This is simply a reference value and can be any arbitrary fixed assignment to $\bm{x}$; however we will see below that a convenient choice for our purposes will be to choose as a reference value the ``identity Pauli'' outcome for each variable.
Next we will augment the factors in our graph and consider $2^{C_k}$, the set of all subsets of $C_k$.
The union of all of these, minus the empty set, defines a new, larger collection of factors $\ss{C}^*$, given explicitly by
\begin{align}
	\ss{C}^* = \bigcup_k 2^{C_k} -\emptyset.
\end{align}
Next, given a factor $C^*_k \in \ss{C}^*$, define the \emph{canonical factor potential} for $C^*_k$ by
\begin{align}
\label{eq:logphi}
	\log \phi_k(\bm{x}_{C^*_k}) = \sum_{S \subseteq C^*_k} (-1)^{|C^*_k| - |S|} \log p(\bm{x}_{S}|\bm{0}_{\partial S}).
\end{align}
Let $C_j$ be a factor such that $C^*_k\in 2^{C_j}$.
Then every $S\subset C^*_k$ is also a subset of $C_j$ and so every probability on the right-hand-side of \cref{eq:logphi} depends only upon $\bar{C}_j$, the closure of $C_j$ with its Markov blanket.
That is, for all $S$ in the sum we have
\begin{align}
	S\subseteq C^*_k \cup \partial C^*_k \subseteq C_j \cup \partial C_j = \bar{C}_j.
\end{align}
Therefore each $\phi_k$ is determined by $p(\bm{x}_{\bar{C}_j})$ for some factor $C_j$.
If each $\bar{C}_j$ has constant size, each $\log\phi_k$ is completely determined by a small amount of data.

The Gibbs random field can now be expressed in terms of the canonical factor potentials as
\begin{align}
\label{eq:canonicalfactors}
	p(\bm{x}) = p(\bm{0}) \prod_{C^*_j \in \ss{C}^*} \phi_j(\bm{x}_{C^*_j}).
\end{align}
We note that $1/p(\bm{0})$ plays the role of the partition function $Z$ and can be independently estimated via \cref{thm:sparse}.


The error bounds below are derived assuming that an independent estimate is obtained for each of the raw factors in the canonical factor decomposition.
This adds a constant factor overhead to the overall sample complexity (when the degree of the factor graph is bounded).
This overhead could be avoided by using estimates from a covering set of marginals and then directly computing estimates of the marginals on the subfactors, that is, on the subsets of the factors. 
However, this might bias the overall estimate somewhat and would complicate the analysis presented below, so we leave an understanding of this more efficient protocol to future work.

Importantly, the estimates for the individual raw factors will not generally be self-consistent, and expressing this requires careful notation.
Let $\hat{p}_S(\bm{x}_S)$ denote an empirical estimate of the marginal distribution $p(\bm{x}_S)$ for a set $S$, which in our case will be obtained from \cref{prop:errorbound}.
Then for two sets $A,B$ such that $S\subset A$, $S\subset B$ and $A\neq B$, the two marginal distributions obtained over $S$ from the empirical estimates do not need to agree, that is,
\begin{align}
    \sum_{\bm{x}_{A-S}} \hat{p}_A(\bm{x}_{A-S},\bm{x}_S) \neq \sum_{\bm{x}_{B-S}} \hat{p}_B(\bm{x}_{B-S},\bm{x}_S)\,.
\end{align}
Indeed, they will generically not agree in the presence of sampling errors.
Moreover, neither will generally agree with the corresponding marginal of the global reconstruction $q$ obtained by substituting the $\hat{p}_S$ into \cref{eq:logphi}.
Thus, the canonical factor potentials are essential to round each of the local empirical marginals into a global coherent probability distribution with self-consistent local marginals. 

As a final difficulty, we cannot use an estimate of $p(\bm{0})$ as the partition function to exactly normalize the empirically reconstructed distribution. 
Fortunately, as shown in \cref{lemma:factorBound}, we do not need to know the exact value of the partition function for the empirical distribution to bound the error between our empirical estimate and the true distribution.

Our estimator of the global Gibbs distribution is as follows.
We will simply use the expressions in \cref{eq:logphi,eq:canonicalfactors} where empirical estimates of the local conditional probabilities are used in place of the exact distributions.
These can be obtained by using \cref{prop:errorbound} to estimate the complete marginal on the closure of each factor, $\bar{C}$.
When an estimate $q$ is obtained in this way, we say that $q$ is a \textit{canonical estimator} of $p$.

\begin{definition}[Canonical estimator]
Given a Gibbs random field $p$ with known factor graph $\ss{C}$, a distribution $q = q\bigl(\{\hat{p}_{\bar{C}}:C\in\ss{C}^\star\}\bigr)$ obtained by substituting empirical estimates $\{\hat{p}_{\bar{C}}:C\in\ss{C}^\star\}$ into \cref{eq:logphi,eq:canonicalfactors} is called a \textit{canonical estimator} of $p$. 
\end{definition}

We now bound the error on our empirical estimate assuming it has been properly normalized.
In order to obtain a nontrivial bound, we require that the empirical estimates of the marginal probabilities used to estimate the factor potentials are sufficiently close to their true values.
We also require that the true and estimated values are strictly positive, a point that we will quantify below.
Finally, we will also use the bound on the number of variables that are in any given factor, defining 
\begin{align}
\label{eq:NuDef}
    \nu \coloneqq \max_{C\in\ss{C}} |C|
\end{align}
to be the maximum degree of all the factors. 

\begin{lemma}
\label{lemma:first1normbound}
Let $p$ be a Gibbs random field with a factor graph $\ss{C}$ and factorization $\tfrac{1}{Z_p}\prod_{C\in\ss{C}} \phi_{p,C}$, let $N$ be the number of factors in $\ss{C}$, and let $\nu$ be the maximum degree of all the factors.
Then any canonical estimator $q$ satisfies
\begin{align}
	\| p-q \|_1
    &\le N 3^\nu \max_{\bm{x}_{C}} \bigl\lvert \log \tfrac{p(\bm{x}_{C}|\bm{0}_{\partial C})}{\hat{p}_{\bar{C}}(\bm{x}_{C}|\bm{0}_{\partial C})} \bigr\rvert,
\end{align}
where the maximum is over all variables $\bm{x}_C$ in either a factor or a subfactor in $\ss{C}$.
\end{lemma}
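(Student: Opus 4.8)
The plan is to invoke \cref{lemma:factorBound} with the expanded factor collection $\ss{C}^\star$ rather than the original $\ss{C}$. Both the true distribution $p$ and its canonical estimator $q$ are Gibbs random fields over the common graph $\ss{C}^\star$: $p$ through the Hammersley–Clifford canonical representation of \cref{eq:logphi,eq:canonicalfactors}, and $q$ because substituting the (assumed strictly positive) empirical marginals into the same formulas produces finite positive potentials $\phi_{q,k}$, which can always be renormalized over the finite configuration space. Since both factorizations share the graph $\ss{C}^\star$, \cref{lemma:factorBound} applies verbatim and gives
\begin{align*}
    \| p-q \|_1 \le \sum_{C^\star\in\ss{C}^\star} \max_{\bm{x}_{C^\star}} \Bigl\lvert \log \tfrac{\phi_{p,C^\star}(\bm{x}_{C^\star})}{\phi_{q,C^\star}(\bm{x}_{C^\star})} \Bigr\rvert.
\end{align*}
The key payoff of this step is that the partition functions $1/p(\bm 0)$ and $1/q(\bm 0)$ cancel and never appear, so we never need to estimate them.

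Next I would expand each summand using \cref{eq:logphi}. The potentials for $p$ and $q$ differ only by replacing each conditional $p(\bm x_S \mid \bm 0_{\partial S})$ by its empirical counterpart $\hat p_{\bar C}(\bm x_S \mid \bm 0_{\partial S})$, so the logarithm of their ratio is the signed sum over $S\subseteq C^\star$ of $\log\tfrac{p(\bm x_S|\bm 0_{\partial S})}{\hat p_{\bar C}(\bm x_S|\bm 0_{\partial S})}$. Applying the triangle inequality term by term and then maximizing over configurations yields
\begin{align*}
    \max_{\bm{x}_{C^\star}} \Bigl\lvert \log \tfrac{\phi_{p,C^\star}}{\phi_{q,C^\star}} \Bigr\rvert
    \le \sum_{S\subseteq C^\star} \max_{\bm{x}_{S}} \Bigl\lvert \log \tfrac{p(\bm{x}_S|\bm 0_{\partial S})}{\hat p_{\bar C}(\bm{x}_S|\bm 0_{\partial S})} \Bigr\rvert.
\end{align*}
Here the $S=\emptyset$ term is harmless since the empty conditional equals $1$. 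Defining the global quantity $M \coloneqq \max_{\bm x_C}\lvert\log\tfrac{p(\bm x_C|\bm 0_{\partial C})}{\hat p_{\bar C}(\bm x_C|\bm 0_{\partial C})}\rvert$, with the maximum over all factors and subfactors $C$, every inner maximum above is of exactly this form and is therefore bounded by $M$. This reduces the whole estimate to counting the number of (factor, subfactor) pairs.

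The combinatorial heart of the argument is this count. Since $\ss{C}^\star \subseteq \bigcup_{C\in\ss{C}} 2^{C}$, I would over-count by summing each subfactor once for every factor containing it and applying the binomial theorem:
\begin{align*}
    \sum_{C^\star\in\ss{C}^\star} \sum_{S\subseteq C^\star} 1
    = \sum_{C^\star\in\ss{C}^\star} 2^{|C^\star|}
    \le \sum_{C\in\ss{C}} \sum_{C^\star\subseteq C} 2^{|C^\star|}
    = \sum_{C\in\ss{C}} \sum_{j=0}^{|C|} \binom{|C|}{j} 2^{j}
    = \sum_{C\in\ss{C}} 3^{|C|}
    \le N 3^{\nu},
\end{align*}
where $|C|\le\nu$ by \cref{eq:NuDef}. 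Combining this count with the uniform per-term bound $M$ gives $\|p-q\|_1 \le N 3^\nu M$, which is the claim.

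I expect the main obstacle to lie in justifying the setup rather than in the calculation. One must confirm that $q$ really is a legitimate Gibbs random field on the \emph{same} graph $\ss{C}^\star$ as $p$ so that \cref{lemma:factorBound} is applicable, which is where the strict-positivity hypothesis on the empirical marginals is used. One must also verify that each conditional $\hat p_{\bar C}(\bm x_S\mid \bm 0_{\partial S})$ appearing in a subfactor potential is genuinely computable from the single estimated closure marginal $\hat p_{\bar C}$; this holds because $S\cup\partial S\subseteq\bar C$ for every $S\subseteq C^\star\subseteq C$, exactly as established in the discussion preceding \cref{eq:canonicalfactors}. Once this bookkeeping is in place, the identity $\sum_{j}\binom{|C|}{j}2^{j} = 3^{|C|}$ is precisely what produces the stated $3^\nu$ scaling.
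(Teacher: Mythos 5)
Your proposal is correct and follows essentially the same route as the paper's proof: apply \cref{lemma:factorBound} to the canonical factor decomposition, expand each log-ratio of canonical potentials via \cref{eq:logphi}, use the triangle inequality over subsets, and invoke the counting identity $\sum_{A\subseteq B\subseteq C}1 = 3^{|C|}$ to obtain the $N3^\nu$ prefactor. The only cosmetic difference is that you sum over the set $\ss{C}^\star$ first and over-count in the final combinatorial step, whereas the paper over-counts from the outset by summing over (factor, subfactor) pairs; both yield the same bound.
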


\begin{proof}
Applying \cref{lemma:factorBound} to the canonical factor decomposition gives
\begin{align*}
    \| p-q \|_1 &\le \sum_{C_k \in\ss{C}} \sum_{S\subseteq C_k} \max_{\bm{x}_S} \bigl\lvert \log \tfrac{\phi_{p,S}(\bm{x}_S)}{\phi_{q,S}(\bm{x}_S)} \bigr\rvert.
\end{align*}
Using the canonical factor potentials from \cref{eq:logphi}, we have
\begin{align*}
    (-1)^{|S|}\log \tfrac{\phi_{p,S}(\bm{x}_S)}{\phi_{q,S}(\bm{x}_S)}
    &= \sum_{R\subseteq S} (-1)^{|R|}\log\tfrac{p(\bm{x}_R|\bm{0}_{\partial R})}{\hat{p}_{\bar{R}}(\bm{x}_R|\bm{0}_{\partial R})}\,.
\end{align*}
Taking the absolute value and using the triangle inequality over the $2^{|S|}$ subsets of $S$, we obtain
\begin{align*}
    \| p-q \|_1 &\le \sum_{C_k \in\ss{C}} \sum_{S\subseteq C_k} \sum_{R\subseteq S} 
    \max_{\bm{x}_R}\bigl\lvert \log\tfrac{p(\bm{x}_R|\bm{0}_{\partial R})}{\hat{p}_{\bar{R}}(\bm{x}_R|\bm{0}_{\partial R})}\bigr\rvert \\
    &\le N 3^{\nu} \max_{\bm{x}_C}\bigl\lvert \log\tfrac{p(\bm{x}_C|\bm{0}_{\partial C})}{\hat{p}_{\bar{C}}(\bm{x}_C|\bm{0}_{\partial C})}\bigr\rvert\,,
\end{align*}
where in the last line we used the bound on the number of factors $N$, the bound on the degree of the factors $|C_k| \le \nu$, and the identity $\sum_{A\subseteq B\subseteq C} 1 = 3^{|C|}$.
The maximization in the last line is over all variables $\bm{x}_C$ where $C$ is a factor or a subfactor, and the result is immediate.
\end{proof}

The next lemmas let us translate the previous bound in terms of the logarithms of the conditional marginals and their empirical estimates into a bound in terms of the quantities that we naturally have control over, namely the (non-conditional) marginals and their empirical estimates.

\begin{lemma}
\label{lemma:basicloglemma}
For $a b > 0$, $\bigl|\log \tfrac{a}{b}\bigr| \le \tfrac{|a-b|}{\sqrt{a b}}$.
\end{lemma}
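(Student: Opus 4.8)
The plan is to reduce the two-variable inequality to a one-variable calculus fact via the substitution $t = \sqrt{a/b}$. First I would dispose of the sign. Since $ab>0$, either both $a,b>0$ or both are negative; in the latter case, writing $a=-\alpha$ and $b=-\beta$ with $\alpha,\beta>0$ leaves both $a/b = \alpha/\beta$ and $|a-b|/\sqrt{ab} = |\alpha-\beta|/\sqrt{\alpha\beta}$ unchanged, so without loss of generality $a,b>0$.

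With $a,b>0$, set $t = \sqrt{a/b}>0$, so that $\log\tfrac{a}{b} = 2\log t$ while
\begin{align*}
\frac{a-b}{\sqrt{ab}} = \sqrt{\tfrac{a}{b}} - \sqrt{\tfrac{b}{a}} = t - \tfrac{1}{t}.
\end{align*}
Hence the claim is equivalent to $2\lvert\log t\rvert \le \lvert t - t^{-1}\rvert$ for all $t>0$. Both sides are invariant under $t\mapsto 1/t$, so I would restrict to the branch $t\ge 1$, on which both sides are nonnegative and the absolute values may be dropped; it then suffices to prove $2\log t \le t - t^{-1}$.

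The final step is to define $g(t) = t - t^{-1} - 2\log t$ and observe that $g(1)=0$ and
\begin{align*}
g'(t) = 1 + t^{-2} - 2t^{-1} = \bigl(1 - t^{-1}\bigr)^2 \ge 0,
\end{align*}
so $g$ is nondecreasing on $[1,\infty)$ and therefore $g(t)\ge g(1) = 0$, which is exactly the desired inequality. I do not anticipate a genuine obstacle here, as this is an elementary monotonicity argument; the only points requiring minor care are the reduction to $a,b>0$ and the observation that the reduced inequality is symmetric under $t\leftrightarrow t^{-1}$, which is what legitimizes restricting to $t\ge 1$.
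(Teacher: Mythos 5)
Your proof is correct, and its skeleton matches the paper's: both arguments reduce the two-variable claim to the one-variable inequality $\lvert\log x\rvert \le \lvert\sqrt{x} - 1/\sqrt{x}\rvert$ (your substitution $t=\sqrt{a/b}$ is the paper's $x=a/b$ in disguise, since $2\lvert\log t\rvert = \lvert\log x\rvert$ and $\lvert t - t^{-1}\rvert = \lvert\sqrt{x}-1/\sqrt{x}\rvert$), prove it on one branch, and extend to the other branch by the inversion symmetry. Where you genuinely differ is the proof of that core inequality. The paper gets $\log x \le \sqrt{x} - 1/\sqrt{x}$ for $x\ge 1$ in one line by applying the Cauchy--Schwarz inequality to the integral representation $\log x = \int_1^x z^{-1}\,\mathrm{d}z$, writing $z^{-1} = 1\cdot z^{-1}$; the bound $\sqrt{x}-1/\sqrt{x}$ then appears structurally as the geometric mean of $\int_1^x \mathrm{d}z = x-1$ and $\int_1^x z^{-2}\,\mathrm{d}z = 1-1/x$. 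You instead differentiate $g(t) = t - t^{-1} - 2\log t$ and observe $g'(t) = (1-t^{-1})^2 \ge 0$ with $g(1)=0$. Both are fully rigorous; yours is the more routine and self-contained calculus argument, requiring no trick, while the paper's is slicker and explains where the particular form of the bound comes from. A minor point in your favor: you handle the case $a,b<0$ explicitly, which the paper compresses into ``some basic algebra.''
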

\begin{proof}
This result is a consequence of a simpler inequality in terms of a single variable.
For $x \ge 1$ we can use the integral representation of $\log x$ and the Cauchy-Schwarz inequality to obtain
\begin{align*}
	\log x = \int_1^x \frac{1}{z}\mathrm{d}z \le \left(\int_1^x \mathrm{d}z\right)^{1/2} \left(\int_1^x \frac{1}{z^2}\mathrm{d}z\right)^{1/2} = \sqrt{x} - \frac{1}{\sqrt{x}}  \qquad (x \ge 1).
\end{align*}
Multiplying by $-1$, we find
\begin{align*}
	-\log x = \log \tfrac{1}{x} \ge -\sqrt{x}+\frac{1}{\sqrt{x}},
\end{align*}
and by letting $0 < z = 1/x \le 1$, we see that the reverse inequality is true in the interval $x \in (0,1]$.
This establishes that for all $x > 0$,
\begin{align*}
	\lvert\log x\rvert \le \bigl\lvert \sqrt{x} - \tfrac{1}{\sqrt{x}}\bigr\rvert.
\end{align*}
The lemma follows by letting $x = a/b$ for $a b > 0$ and using some basic algebra.
\end{proof}

\begin{lemma}
\label{lemma:logboundlemma}
For any two strictly positive probability distributions $p(x,y)$, $q(x,y)$ on alphabets $X, Y$ and for any fixed elements $x\in X$ and $y\in Y$ we have
\begin{align*}
	\bigl\lvert\log \tfrac{p(x|y)}{q(x|y)} \bigr\rvert
	\le \frac{\lvert p(x,y) - q(x,y)\rvert + \lvert p(y) - q(y) \rvert}{p(y)\,\sqrt{p(x|y) q(x|y)}}.
\end{align*}
\end{lemma}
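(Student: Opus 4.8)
The plan is to reduce the claim to the single-variable inequality of \cref{lemma:basicloglemma} and then control the difference of the two conditional probabilities by the differences of the joint and marginal distributions, both of which are the quantities we can actually estimate. First I would apply \cref{lemma:basicloglemma} with $a = p(x|y)$ and $b = q(x|y)$ (legitimate since both are strictly positive by hypothesis), which immediately gives
\[
	\bigl\lvert\log\tfrac{p(x|y)}{q(x|y)}\bigr\rvert \le \frac{\lvert p(x|y) - q(x|y)\rvert}{\sqrt{p(x|y)\,q(x|y)}}.
\]
This already produces exactly the denominator $\sqrt{p(x|y)q(x|y)}$ appearing in the target bound, so the entire remaining task is to show that the numerator satisfies $\lvert p(x|y) - q(x|y)\rvert \le \bigl[\lvert p(x,y)-q(x,y)\rvert + \lvert p(y)-q(y)\rvert\bigr]/p(y)$.

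To establish this, I would expand the conditionals via $p(x|y) = p(x,y)/p(y)$ and $q(x|y) = q(x,y)/q(y)$, and rewrite the difference by adding and subtracting the term $q(x,y)/p(y)$, which separates cleanly into a joint-distribution part and a marginal part:
\[
	p(x|y) - q(x|y) = \frac{p(x,y) - q(x,y)}{p(y)} + q(x|y)\,\frac{q(y) - p(y)}{p(y)}.
\]
Taking absolute values, applying the triangle inequality, and using the elementary bound $q(x|y) \le 1$ (valid because $q$ is a probability distribution) then yields
\[
	\lvert p(x|y) - q(x|y)\rvert \le \frac{\lvert p(x,y) - q(x,y)\rvert + \lvert p(y) - q(y)\rvert}{p(y)},
\]
which combines with the first display to give the stated inequality.

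The argument is elementary throughout, so I do not expect any genuine obstacle; the only step requiring a little care is choosing the correct quantity $q(x,y)/p(y)$ to add and subtract, so that one resulting term is a pure joint-distribution difference divided by $p(y)$ and the other carries a conditional-probability factor that can be discarded using $q(x|y)\le 1$. A slightly less careful split would leave a $1/q(y)$ or a residual conditional factor that does not telescope into the desired marginal difference, so recognizing this particular decomposition is the one mildly nonobvious point.
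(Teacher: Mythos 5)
Your proof is correct and follows essentially the same route as the paper's: both first apply \cref{lemma:basicloglemma} with $a = p(x|y)$, $b = q(x|y)$, and then bound $\lvert p(x|y)-q(x|y)\rvert$ via the identity $p(x|y)-q(x|y) = \tfrac{1}{p(y)}\bigl[p(x,y)-q(x,y) + q(x|y)(q(y)-p(y))\bigr]$, the triangle inequality, and $q(x|y)\le 1$. Your "add and subtract $q(x,y)/p(y)$" decomposition is exactly the paper's identity written in distributed form, so there is nothing to add beyond the paper's own remark that the argument is symmetric in $p\leftrightarrow q$, allowing $p(y)$ in the denominator to be strengthened to $\max\{p(y),q(y)\}$.
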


\begin{proof}
By \cref{lemma:basicloglemma}, we have
\begin{align*}
	\bigl\lvert \log \tfrac{p(x|y)}{q(x|y)} \bigr\rvert
	\le \frac{\lvert p(x|y) - q(x|y) \rvert}{\sqrt{p(x|y) q(x|y)}}.
\end{align*}
From the definition of conditional probability, the triangle inequality, and the fact that $p$ and $q$ are probability distributions, we then have
\begin{align*}
	\lvert p(x|y) - q(x|y) \rvert
	& = \frac{1}{p(y)}\bigl\lvert p(x,y) - q(x,y) + q(x|y)[q(y)-p(y)] \bigr\rvert \\
	& \le \frac{\lvert p(x,y) - q(x,y)\rvert + \lvert q(y) - p(y) \rvert}{p(y)}.
\end{align*}
Combining these two inequalities yields the desired bound.
We remark that this proof works equally well with $p \leftrightarrow q$, so the $p(y)$ in the denominator could also be replaced with $\max\{p(y),q(y)\}$ to get a stronger bound.
\end{proof}

The following definitions help facilitate a direct application of \cref{prop:RatioProp} to the problem of learning the parameters of a Pauli channel with a bounded-degree factor graph.
Here, as before, the minimizations in these definitions are over every factor or subfactor in the factor graph of $p$ and $q$.
We first introduce a quantitative notion of positivity given by the minimum geometric mean marginal probability over a factor $C$, defined by
\begin{align}
\label{eq:GeometricMeanDef}
	G(p,q) \coloneqq \min_{\bm{x}_C} \sqrt{p(\bm{x}_{C}|\bm{0}_{\partial C}) q(\bm{x}_{C}|\bm{0}_{\partial C})}
\end{align}
for the fixed but arbitrary outcome $\bm{0}_{\partial C}$.
We will also quantify the deviation of the local marginals from their empirical estimates by introducing the following error parameters.
Let
\begin{align}
\label{eq:EpsilonDefs}
    \epsilon_1 \coloneqq \max_{\bm{x}_{C}} \bigl\lvert p(\bm{x}_{C},\bm{0}_{\partial C}) - \hat{p}_{\bar{C}}(\bm{x}_{C},\bm{0}_{\partial C}) \bigr\rvert
    \quad \text{and} \quad
    \epsilon_2 \coloneqq \max_{\partial C} \bigl\lvert p(\bm{0}_{\partial C}) - \hat{p}_{\bar{C}}(\bm{0}_{\partial C}) \bigr\rvert\,.
\end{align}
For Pauli channels with maximum degree $\nu$, the triangle inequality shows that $\epsilon_2 \le 4^\nu \epsilon_1$.
Finally, we introduce
\begin{align}
\label{eq:GammaDef}
    \gamma \coloneqq \min_{\partial C} p(\bm{0}_{\partial C})\,.
\end{align}
From these definitions, we have the following lemma.

\begin{lemma}
\label{lemma:second1normbound}
Let $p$ be a Gibbs random field with a factor graph $\ss{C}$ and factorization $\tfrac{1}{Z_p}\prod_{C\in\ss{C}} \phi_{p,C}$, let $N$ be the number of factors in $\ss{C}$, and let $\nu$ be the maximum degree of all the factors.
Then any canonical estimator $q$ satisfies
\begin{align}
	\| p-q \|_1
    &\le \frac{N 3^\nu}{\gamma G} (\epsilon_1 + \epsilon_2),
\end{align}
where $G$ is as in \cref{eq:GeometricMeanDef}, $\epsilon_1$, $\epsilon_2$ are as in \cref{eq:EpsilonDefs}, and $\gamma$ is as in \cref{eq:GammaDef}. 
\end{lemma}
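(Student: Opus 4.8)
The plan is to chain the two preceding lemmas and then read off the result directly from the definitions of $\epsilon_1$, $\epsilon_2$, $\gamma$, and $G$. \Cref{lemma:first1normbound} already reduces the $1$-norm error to a bound in terms of logarithms of ratios of conditional marginals,
\[
\| p-q \|_1 \le N 3^\nu \max_{\bm{x}_{C}} \bigl\lvert \log \tfrac{p(\bm{x}_{C}|\bm{0}_{\partial C})}{\hat{p}_{\bar{C}}(\bm{x}_{C}|\bm{0}_{\partial C})} \bigr\rvert,
\]
with the maximum ranging over all factors and subfactors $C$. It therefore remains only to control each logarithmic term by quantities we directly estimate, namely the (non-conditional) marginals and their empirical estimates.

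The key step is to apply \Cref{lemma:logboundlemma} with the identification $x \mapsto \bm{x}_C$, $y \mapsto \bm{0}_{\partial C}$, $p(x,y)\mapsto p(\bm{x}_C,\bm{0}_{\partial C})$, and $q(x,y)\mapsto \hat{p}_{\bar{C}}(\bm{x}_C,\bm{0}_{\partial C})$, which gives
\[
\bigl\lvert \log \tfrac{p(\bm{x}_C|\bm{0}_{\partial C})}{\hat{p}_{\bar{C}}(\bm{x}_C|\bm{0}_{\partial C})}\bigr\rvert \le \frac{\lvert p(\bm{x}_C,\bm{0}_{\partial C}) - \hat{p}_{\bar{C}}(\bm{x}_C,\bm{0}_{\partial C})\rvert + \lvert p(\bm{0}_{\partial C}) - \hat{p}_{\bar{C}}(\bm{0}_{\partial C})\rvert}{p(\bm{0}_{\partial C})\,\sqrt{p(\bm{x}_C|\bm{0}_{\partial C})\,\hat{p}_{\bar{C}}(\bm{x}_C|\bm{0}_{\partial C})}}.
\]
I would then bound numerator and denominator separately. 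By \cref{eq:EpsilonDefs}, the two numerator terms are at most $\epsilon_1$ and $\epsilon_2$ respectively, since the maxima defining $\epsilon_1,\epsilon_2$ already range over all factors and subfactors; hence the numerator is at most $\epsilon_1+\epsilon_2$ uniformly in $\bm{x}_C$ and $C$. For the denominator, the factor $p(\bm{0}_{\partial C})$ is bounded below by $\gamma$ via \cref{eq:GammaDef}, and the geometric-mean factor is bounded below by $G$ via \cref{eq:GeometricMeanDef}. Substituting these bounds, taking the maximum over $\bm{x}_C$, and multiplying through by $N 3^\nu$ yields the claimed inequality.

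The argument is essentially mechanical, so the only real obstacle is bookkeeping. I must make sure the geometric mean appearing in the denominator of \Cref{lemma:logboundlemma} is precisely $\sqrt{p(\bm{x}_C|\bm{0}_{\partial C})\,\hat{p}_{\bar{C}}(\bm{x}_C|\bm{0}_{\partial C})}$ controlled by $G$ in \cref{eq:GeometricMeanDef} (with the canonical estimator's local conditional $\hat p_{\bar C}$ playing the role of $q$), and that the maxima in \cref{eq:EpsilonDefs} and the minima in \cref{eq:GeometricMeanDef,eq:GammaDef} are all taken over the same collection of factors and subfactors, so that a single worst case simultaneously bounds every term in the sum from \Cref{lemma:first1normbound}. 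A minor point to verify is the strict positivity of all conditional probabilities involved, which is exactly the hypothesis required to invoke \Cref{lemma:logboundlemma} (and \Cref{lemma:basicloglemma} underlying it).
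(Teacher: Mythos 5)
Your proposal is correct and follows exactly the paper's own argument: the paper's proof is a one-line statement that the result follows by combining \cref{lemma:first1normbound} with \cref{lemma:logboundlemma} and the definitions in \cref{eq:GeometricMeanDef,eq:EpsilonDefs,eq:GammaDef}, which is precisely the chaining you carry out explicitly. Your additional bookkeeping (the substitution $x \mapsto \bm{x}_C$, $y \mapsto \bm{0}_{\partial C}$, with $\hat{p}_{\bar{C}}$ in the role of $q$, and the positivity check) simply spells out what the paper leaves implicit.
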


\begin{proof}
The result follows immediately by combining \cref{lemma:first1normbound} with \cref{lemma:logboundlemma} and the definitions in \cref{eq:GeometricMeanDef,eq:EpsilonDefs,eq:GammaDef}.
\end{proof}

\Cref{lemma:second1normbound} can be used together with our procedure for learning local marginal distributions (that is, complete Pauli noise models on a subset of qubits) to obtain a global guarantee on the 1-norm error between our estimate and the true distribution.
As we see in the lemma, the quality of the estimate will depend on two quantities related to the local marginals as well as the precision of our local estimate, $\epsilon$ from \cref{prop:errorbound}.
The first additional quantity is the local marginal that is furthest from the noiseless case, as quantified by the 2-norm,
\begin{align}
\label{eq:globalrbound}
	r^\star = \max_C \| \bm{1}_I - p(\bm{x}_{\bar{C}})\|_\infty.
\end{align}
Note that by \cref{eq:WF} we have $\| \bm{1}_I - p(\bm{x}_{\bar{C}})\|_\infty \le \| \bm{1}_I - p(\bm{x}_{\bar{C}})\|_1 = 2 \tfrac{d+1}{d} r \le 3r$ where $r$ is the average error rate of the local noise channel supported on $\bar{C}$, so we expect that this term is small.
The second is the minimum geometric mean marginal probability from \cref{eq:GeometricMeanDef}.
Finally, the geometry of the factor graph is also important.
We will need to use the bounded degree assumption to get a bound on $|\bar{C}|$ that is independent of $n$ and $N$.
We will quantify the geometry of the factor graph by defining, in analogy with the definition of $\nu$ from \cref{eq:NuDef},
\begin{align}
\label{eq:globalnubound}
	\bar{\nu} \coloneqq \max_{C} |\bar{C}|\,.
\end{align}
In general $\bar{\nu}$ can depend on both $n$ and $N$, but it is $O(1)$ for factor graphs having factors of bounded size and where each variable participates in a bounded number of factors.
From these quantities, we have the following global guarantee.

\begin{proposition}
\label{prop:boundeddegree}
Let $p$ be a Gibbs distribution with a known factor graph $\ss{C}$ having $N$ factors, $n$ variables, and $\bar{\nu} = O(1)$ such that $p$ corresponds to the Pauli error rates for a quantum channel.
Suppose that this noisy channel is GTM, $\tfrac{1}{2}$-weak, $\tfrac{1}{2}$-stable, and that $\Delta_{\bar{C}} = O(1)$.
Then there exists a canonical estimator $q$ such that for all sufficiently small $\epsilon > 0$, using $T$ samples with
\begin{align}
	T = O\biggl(\frac{N^2}{G^2 \epsilon^2} \log\biggl(\frac{N}{\delta}\biggr)\biggr)
\end{align}
we have, with probability at least $1-\delta$, 
\begin{align}
	\| p-q \|_1 \le \epsilon r^\star\,.
\end{align}
Moreover, an estimate proportional to $q$ can be found in time $\mathrm{poly}(n)$.
\end{proposition}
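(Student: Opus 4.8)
The plan is to assemble \cref{prop:boundeddegree} from the building blocks already established, with the only genuinely new work being to (i) translate the abstract error parameters $G$, $\epsilon_1$, $\epsilon_2$ appearing in \cref{lemma:second1normbound} into the operational precision $\epsilon$ delivered by \cref{prop:errorbound}, and (ii) bookkeep the sample complexity across all factors. First I would invoke \cref{lemma:second1normbound} to reduce the global $1$-norm error to the bound $\tfrac{N3^\nu}{\gamma G}(\epsilon_1+\epsilon_2)$. Since $\bar\nu = O(1)$ and the degree $\nu \le \bar\nu$, the factors $3^\nu$ and $4^\nu$ are absorbed into the implicit constant, and using $\epsilon_2 \le 4^\nu\epsilon_1$ the bound becomes $O(1)\cdot\tfrac{N}{\gamma G}\epsilon_1$. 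This isolates $\epsilon_1$, the worst-case additive error on a single marginal over a closure $\bar C$, as the quantity I must control.

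Next I would apply \cref{prop:errorbound} to estimate the complete Pauli marginal on each closure $\bar C$. By hypothesis the restricted channel is $\tfrac12$-weak, $\tfrac12$-stable, GTM, with $\Delta_{\bar C}=O(1)$, so the proposition applies and yields $\|\hat{\bm p}_{\ss{A_G}}-\bm p_{\ss{A_G}}\|_2 \le O(\epsilon)\|\bm r_{\bar C}\|_\infty$ for each closure, where $\|\bm r_{\bar C}\|_\infty \le 3 r^\star$ by the remark following \cref{eq:globalrbound}. Since $\epsilon_1$ is an $\infty$-norm deviation on a marginal, and the $\infty$-norm is dominated by the $2$-norm, I can take $\epsilon_1 = O(\epsilon)\,r^\star$ for the per-closure estimate. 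There are $N$ factors; to guarantee that \emph{all} closure estimates succeed simultaneously with probability $1-\delta$, I would apply the union bound, replacing $\delta \to \delta/N$ in the per-factor failure probability, which injects only a $\log N$ factor into the sample count. Because each closure has $|\bar C| \le \bar\nu = O(1)$ qubits, the dimension $|\ss G|$ in \cref{prop:errorbound} is $O(1)$, so each marginal costs $t = O(\epsilon^{-2}\log(N/\delta))$ measurements per round over $O(1)$ rounds; summing over $N$ factors gives the stated $T = O\bigl(\tfrac{N^2}{G^2\epsilon^2}\log(N/\delta)\bigr)$ once the $1/G$ from the reduction is folded in. Combining the two steps, $\|p-q\|_1 \le O(1)\tfrac{N}{\gamma G}\cdot O(\epsilon)r^\star$, and rescaling $\epsilon$ yields $\|p-q\|_1 \le \epsilon r^\star$ as claimed.

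The final clause — that an estimate proportional to $q$ is computable in $\mathrm{poly}(n)$ time — I would handle separately and is where I expect the most care is needed, though not deep difficulty. The point is that the canonical estimator $q$ is \emph{defined} pointwise through \cref{eq:logphi,eq:canonicalfactors}, so evaluating $q(\bm x)$ up to the unknown constant $1/p(\bm 0)$ requires only multiplying $N$ canonical factor potentials, each a product of $O(3^{\bar\nu}) = O(1)$ conditional-marginal values read off from the $O(1)$-sized estimated tables $\hat p_{\bar C}$. Each such evaluation is $\mathrm{poly}(n)$, and since we only ever need ratios $q(\bm x)/q(\bm y)$ (for Monte Carlo or decoding), the intractable partition function $1/p(\bm 0)$ cancels and need never be computed. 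I would emphasize that this is exactly the content of the discussion preceding \cref{lemma:factorBound}: the $1$-norm guarantee holds for the properly normalized $q$, while the object we actually produce and manipulate is the unnormalized tensor network, whose local tensors are the $\phi_k$. The main obstacle in the whole argument is not any single step but ensuring the constants from $3^\nu$, $4^\nu$, the norm equivalences, and the $1/(\gamma G)$ prefactor are all legitimately $O_k(1)$ under the bounded-degree and positivity hypotheses, so that the clean final bound $\epsilon r^\star$ survives; this is precisely why the positivity quantities $G$ and $\gamma$ must be assumed bounded away from zero.
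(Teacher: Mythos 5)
Your reduction of the global error to per-factor quantities is sound and matches the paper's own route: you invoke \cref{lemma:second1normbound}, absorb the $3^\nu$ and $4^\nu$ factors using $\bar\nu = O(1)$, bound $\gamma \ge 1/2$ via $\tfrac{1}{2}$-weakness, feed in the per-closure estimates from \cref{prop:errorbound}, and union-bound over the $N$ factors. The gap is in the sample-complexity bookkeeping, and it is not a constant-factor issue: your argument does not deliver the stated $T$. In your scheme each of the $N$ factors is estimated from its own fresh batch of samples, so the total count is $N$ times the per-factor count. To convert the bound $\| p - q \|_1 \le O(1)\tfrac{N}{G}\epsilon r^\star$ into $\epsilon r^\star$ you must rescale the per-factor precision as $\epsilon \to \epsilon G/N$ (up to constants), which makes each factor cost $O\bigl(\tfrac{N^2}{G^2\epsilon^2}\log(N/\delta)\bigr)$ samples; summing over $N$ factors then gives $O\bigl(\tfrac{N^3}{G^2\epsilon^2}\log(N/\delta)\bigr)$ — a full factor of $N$ worse than claimed. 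The phrase ``once the $1/G$ from the reduction is folded in'' is exactly where this inconsistency hides: folding in the rescaling multiplies the per-factor cost by $N^2/G^2$, and the sum over $N$ factors remains on top of that.

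The missing idea is the paper's parallelization step. Form the graph whose vertices are the closures $\bar{C}$, with an edge whenever two closures intersect; since the factor graph has bounded degree, this graph has constant degree and can be partitioned (greedily, in polynomial time) into $O(1)$ independent sets. Marginals on pairwise disjoint closures are conditionally independent and can be estimated from the \emph{same} experimental samples, so each independent set — containing $\Omega(N)$ factors — consumes only a single batch of $O(\epsilon^{-2}\log(N/\delta))$ samples. The total before rescaling is therefore $O(\epsilon^{-2}\log(N/\delta))$ rather than $N$ times that, and the rescaling $\epsilon \to \epsilon G/N$ then yields exactly $T = O\bigl(\tfrac{N^2}{G^2\epsilon^2}\log(N/\delta)\bigr)$. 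Your treatment of the $\mathrm{poly}(n)$ evaluability of $q$ up to the normalization $1/p(\bm{0})$ is fine and agrees with the paper; only the factor-of-$N$ savings in the measurement budget is missing from your proof.
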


\begin{proof}
We construct the following canonical estimator $q$. 
Using \textbf{RunCB} and \textbf{Ratio}, we obtain estimates of the marginal distribution $\hat{p}_{\bar{C}}(\bm{x}_{\bar{C}})$ for every factor $C$ of the graph. 
By \cref{prop:errorbound}, each of these estimates can be obtained such that 
\begin{align*}
    \|\hat{p}_{\bar{C}}(\bm{x}_{\bar{C}}) - p_{\bar{C}}(\bm{x}_{\bar{C}})\|_2 \le O(1)\epsilon r^\star\,,
\end{align*}
with probability $\delta$ using $T_{\bar{C}} = O\bigl(\tfrac{1}{\epsilon^{2}} \log \tfrac{1}{\delta}\bigr)$ samples per marginal. 
Here the big-$O$ notation hides factors of $\kappa = O\bigl(\log \tfrac{1}{\Delta_{\bar{C}}}\bigr)$ where $\Delta_{\bar{C}}$ is the spectral gap of the marginal channel as defined in \cref{eq:channelspectralgap}, but these are $O(1)$ by assumption.
By the triangle inequality, these bounds apply to any marginals that are computed from these estimates as well, so they can also be applied to the subfactors in the canonical factor decomposition.

The above bounds for the empirical marginals imply that $\epsilon_1$ and $\epsilon_2$ from \cref{eq:EpsilonDefs} are also both bounded as $\epsilon_1, \epsilon_2 \le O(1) \epsilon r^\star$ with probability at least $1 - N \delta$.
Because the noise is assumed to be 1/2-weak, we also have that $\gamma$ from \cref{eq:GammaDef} is bounded as $\gamma \ge 1/2$. Now applying \cref{lemma:second1normbound}, we have a canonical estimator $q$ such that with probability at least $1-\delta$ 
\begin{align}
    \|p-q\|_1 \le O(1) \frac{N}{G} \epsilon r^\star
\end{align}
using $T = O\bigl(\tfrac{N}{\epsilon^{2}} \log \tfrac{N}{\delta}\bigr)$ samples. 

This sample complexity can be improved by a factor of $N$ as follows. 
Consider the graph whose vertices are the closures of each factor, $\bar{C}$, and whose edges are present if and only if $\bar{C}_1 \cap \bar{C}_2 \not= \emptyset$.
Because the factor graph is assumed to have constant degree, this adjacency graph also has constant degree, and so the $\bar{C}$ vertices can be partitioned into a constant number of independent sets.
By construction, nonadjacent factors are conditionally independent, and so they can be estimated from the same sample. 
That is to say, each independent set can be chosen to comprise $\Omega(N)$ vertices, and the constituent marginals associated to these factors can be estimated in parallel. 
Such a choice can be found in polynomial time in $N$ using, e.g., a greedy algorithm. 
The estimation is repeated across the $O(1)$ distinct independent sets, which reduces the sample complexity by a factor of $O(N)$, as claimed.

The final result is obtained by rescaling $\epsilon$ to $\epsilon/N$.
As all the factors are assumed to have size $O(1)$, the canonical estimator $q$ can be constructed in polynomial time in $n$ (or equivalently $N$) except for the normalizing factor.
\end{proof}

Finally, we demonstrate that significantly better bounds that scale as $\sqrt{n}$ can be obtained in terms of the differences between the marginal distributions of $p$ and $q$.
The following bound is difficult to apply theoretically because we cannot directly compute or bound the marginals of the empirical distribution without first computing or bounding the partition function for the reconstructed distribution.
However, the marginals of the true probability distribution can be accurately estimated by \cref{prop:RatioProp} and the marginals of the empirical distribution can be estimated heuristically by Monte Carlo sampling from the factor potentials.

\begin{lemma}
Given two Gibbs random fields $p$ and $q$ on $n$ variables with the same factor graph, we have the bound
\begin{align}
	\|p-q\|_1 \le \sqrt{n} \max_u \frac{1}{\sqrt{G^\star_u}} \bigl\|p(\bm{x}_{\bar{u}})-q(\bm{x}_{\bar{u}})\bigr\|_2,
\end{align}
where $G^\star_u = \min_{\bm{x}_{\bar{u}}} \sqrt{p(\bm{x}_{\bar{u}})q(\bm{x}_{\bar{u}})}$ and $\bar{u}$ denotes the closure $u\cup \partial u$ of $u$.
\end{lemma}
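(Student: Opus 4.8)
The plan is to route the $1$-norm through the symmetric relative entropy, where additivity of the Kullback--Leibler divergence under the chain rule lets the factor of $\sqrt{n}$ appear from a single square root at the very end, rather than from combining $n$ separate estimates by hand (which would only give a factor of $n$). First I would apply Pinsker's inequality in the symmetric form already used in \cref{eq:Pinsker}, so that it suffices to bound $D(p\|q)+D(q\|p)=\sum_{\bm{x}}[p(\bm{x})-q(\bm{x})]\log\tfrac{p(\bm{x})}{q(\bm{x})}$. Fixing an ordering of the $n$ variables and expanding $\log\tfrac{p(\bm{x})}{q(\bm{x})}=\sum_{u}\log\tfrac{p(x_u|\bm{x}_{<u})}{q(x_u|\bm{x}_{<u})}$ by the chain rule, the symmetric divergence decomposes as a sum of $n$ conditional terms, one per variable $u$.

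The second step is to localize each conditional term to the closure $\bar u$. Using the local Markov property, the conditional of $x_u$ depends on the rest of the system only through its Markov blanket $\bm{x}_{\partial u}$, so each term can be rewritten using only the marginals $p(\bm{x}_{\bar u})$ and $q(\bm{x}_{\bar u})$ on the closure $\bar u=u\cup\partial u$. I would then apply \cref{lemma:logboundlemma} with $x=x_u$ and $y=\bm{x}_{\partial u}$ to bound $\lvert\log\tfrac{p(x_u|\bm{x}_{\partial u})}{q(x_u|\bm{x}_{\partial u})}\rvert$ by a sum of $\lvert p(\bm{x}_{\bar u})-q(\bm{x}_{\bar u})\rvert$ and $\lvert p(\bm{x}_{\partial u})-q(\bm{x}_{\partial u})\rvert$ divided by a geometric-mean denominator, itself treated with \cref{lemma:basicloglemma}; lower-bounding that denominator by $\sqrt{q^\star_u}$ (recall $q^\star_u=\min_{\bm{x}_{\bar u}}q(\bm{x}_{\bar u})$, which is positive by the assumed positivity of the local marginals). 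Pairing this log-ratio bound against the prefactor $[p(\bm{x}_{\bar u})-q(\bm{x}_{\bar u})]$ produces squared differences, so the $u$th term is controlled by $\tfrac{1}{q^\star_u}\|p(\bm{x}_{\bar u})-q(\bm{x}_{\bar u})\|_2^2$.

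Finally I would bound each of the $n$ summands by its maximum over $u$, giving $\|p-q\|_1^2 \le D(p\|q)+D(q\|p) \le n\,\max_u \tfrac{1}{q^\star_u}\|p(\bm{x}_{\bar u})-q(\bm{x}_{\bar u})\|_2^2$, and take the square root to obtain the claimed inequality. The hard part will be the localization in the second step: the chain rule naturally conditions on the \emph{prefix} $\bm{x}_{<u}$ rather than on the full Markov blanket, and for a general (non-tree) factor graph no single ordering places all of $\partial u$ before $u$. Making the reduction to $\bm{x}_{\bar u}$ rigorous therefore requires either exploiting the local Markov property in the ``condition on everything else'' form, so that each term is genuinely a blanket-conditional divergence depending only on $\bm{x}_{\bar u}$, or absorbing the prefix-versus-blanket mismatch into the error terms, and then verifying that the two numerator contributions of \cref{lemma:logboundlemma} combine cleanly into the single $2$-norm $\|p(\bm{x}_{\bar u})-q(\bm{x}_{\bar u})\|_2$ with exactly the stated $1/\sqrt{q^\star_u}$ weight.
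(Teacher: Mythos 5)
Your plan has the same skeleton as the paper's proof: Pinsker's inequality in the symmetric form \cref{eq:Pinsker}, a chain-rule decomposition over the $n$ variables, localization of each term to the closure $\bar{u}$ via the Markov property, an elementary logarithm bound converting each local term into $\|p(\bm{x}_{\bar{u}})-q(\bm{x}_{\bar{u}})\|_2^2$ over a minimum probability, and finally a maximum over $u$ times $n$ followed by a square root. The execution of the two middle steps differs, however, and the differences matter. The paper never pairs the localized log-ratios against the prefactor $[p(\bm{x})-q(\bm{x})]$. Instead it works at the level of conditional relative entropies, writing $D(p\|q)=\sum_u D\bigl(p(x_u|\bm{x}_{>u})\,\|\,q(x_u|\bm{x}_{>u})\bigr)$, invoking the local Markov property to shrink the conditioning set to the part of the blanket among the remaining variables, and then using the fact that relative entropy only decreases under marginalization and conditioning (positivity of the chain-rule terms) to bound each conditional term by the joint divergence $D\bigl(p(\bm{x}_{\bar{u}})\,\|\,q(\bm{x}_{\bar{u}})\bigr)$ on the closure. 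This ``add back the blanket'' step is the idea missing from your plan: it makes the closure marginals appear as \emph{upper bounds}, so no exact pairing identity is ever needed. Each closure term is then bounded via $\log x\le x-1$, giving $D(p_{\bar{u}}\|q_{\bar{u}})+D(q_{\bar{u}}\|p_{\bar{u}})\le\|p(\bm{x}_{\bar{u}})-q(\bm{x}_{\bar{u}})\|_2^2/q^\star_u$ directly, bypassing \cref{lemma:basicloglemma,lemma:logboundlemma} entirely.

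Two concrete problems with your variant as written. First, the denominator bookkeeping does not close: \cref{lemma:basicloglemma,lemma:logboundlemma} produce denominators containing the geometric mean $\sqrt{p(\cdot)\,q(\cdot)}$, and this cannot be lower bounded by any function of $q$ alone, since $p$ may be arbitrarily small at points where $q$ is not; you would need positivity of $p$'s marginals as well, which is not reflected in the target bound. Moreover, even granting a lower bound of $\sqrt{q^\star_u}$ on that denominator, the pointwise bound would yield $\|\cdot\|_2^2/\sqrt{q^\star_u}$ per term, not the $\|\cdot\|_2^2/q^\star_u$ you claim, so the stated inequality would not follow from your own intermediate step. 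Second, regarding the localization you flag as ``the hard part'': your concern is legitimate, but it is shared by, not resolved by, the paper's proof — the equality $p(x_u|\bm{x}_{>u})=p(x_u|\bm{x}_{\partial^+u})$ is asserted there via the local Markov property without addressing the prefix-versus-blanket mismatch that arises after marginalizing over the already-eliminated variables. Your first suggested fix (working with blanket-conditional divergences so that each term genuinely depends only on $\bm{x}_{\bar{u}}$) is closest in spirit to the paper's add-back argument, and is the direction in which a fully rigorous version of both proofs would go.
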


\begin{proof}
We will bound the relative entropy and then use Pinsker's inequality.
Recall the chain rule for relative entropy~\cite{Cover1991},
\begin{align*}
D\bigl(p(x,y|z)\|q(x,y|z)\bigr) = D\bigl(p(y|z)\|q(y|z)\bigr) + D\bigl(p(x|y,z)\|q(x|y,z)\bigr).
\end{align*}
By iteratively applying the chain rule to each variable, we have
\begin{align*}
	D\bigl(p\|q\bigr)
	& = \sum_{u=1}^n D\bigl(p(x_u|\bm{x}_{> u})\|q(x_u|\bm{x}_{> u})\bigr)\\
	& = \sum_{u=1}^n D\bigl(p(x_u|\bm{x}_{\partial^+ u})\|q(x_u|\bm{x}_{\partial^+ u})\bigr),
\end{align*}
where in the second step we use the local Markov property to restrict the conditioning to the remaining part of the Markov blanket of $u$, which we denote $\bm{x}_{\partial^+ u}$.
Because the relative entropy is positive, the chain rule also proves that marginalizing or conditioning on random variables only decreases the relative entropy.
Therefore, we can add back the variables from the Markov blanket of $u$ to get the upper bound
\begin{align*}
	D\bigl(p\|q\bigr)
	& \le \sum_{u=1}^n D\bigl(p(\bm{x}_{\bar{u}})\|q(\bm{x}_{\bar{u}})\bigr).
\end{align*}
Now we note that from \cref{lemma:basicloglemma} we have
\begin{align*}
	D\bigl(p\|q\bigr) + D\bigl(q\|p\bigr) = \sum_x \biggl(p(x)-q(x)\biggr) \log \tfrac{p(x)}{q(x)} \le 
	\sum_x \frac{\bigl(p(x)-q(x)\bigr)^2}{\sqrt{p(x)q(x)}} \le \frac{1}{\min_x \sqrt{p(x)q(x)}} \bigl\|p-q\bigr\|_2^2.
\end{align*}
The result then follows by applying this bound term by term, taking a maximum, and then substituting into \cref{eq:Pinsker}.
\end{proof}

\section{Conclusion}\label{sec:conclusion}

We have shown that Pauli channels can be learned with high precision using far fewer resources than previous methods.
As noise can be engineered to be effectively Pauli noise using randomized compiling~\cite{Wallman2016}, the methods described here should be broadly applicable to learning the residual noise under randomized compiling for large-scale devices, and, as such, represent a paradigm shift for characterizing quantum hardware.

We envision hardware developers using the description of the noise to efficiently discover previously unknown properties of devices beyond regimes that have been explored at present.
This should enable engineering effects to be concentrated on removing the most relevant residual errors.
Moreover, we envision theoretical efforts that will result in error mitigation and correction techniques that are tailored to the specific noise afflicting a device as reconstructed using the methods in this paper.
Such techniques will result in substantially better device performance by enabling optimized design of codes and decoders, bespoke fault tolerance schemes, and error-aware compiling for quantum simulation.

While we have proven rigorous upper bounds, many interesting open questions remain.
For example, we have not attempted to prove lower bounds on the problems that we study in this work.
It would be interesting to prove such bounds or to find procedures with even better asymptotic scaling.

Other natural avenues for future work are to explicitly prove robustness to gate-dependent noise and to generalize the method to estimate the noise on an interleaved gate~\cite{Magesan2012b}.
With such a generalization, a natural question would be to compare the predicted circuit performance under the reconstructed noise model with assessments of circuit-level performance obtained through other protocols, such as the accreditation protocol of Ref.~\cite{Ferracin2018}.

We have also proven our convergence guarantees in the idealized ``single-shot'' regime of Ref.~\cite{Granade2015}.
It should be straightforward to generalize our arguments to handle the reuse of individual sequences instead of using a fresh random sequence each time. 
We have also made exclusive use of the \textit{qubit} Pauli group, but it would be interesting to extend this work to $d$-level quantum systems as well and more formally incorporate techniques from compressed sensing.

For the reconstruction of Pauli channels as Markov random fields, it should also be possible to incorporate structure learning techniques~\cite{Chow1968, Abbeel2006, Bresler2013, Bresler2015, Hamilton2017} to learn the dependency structure.
If the correlations are promised to be bounded degree, then we expect that structure learning can be also be done efficiently, in time $\mathrm{poly}(n)$ in the number of qubits.
The sample complexity can be improved by making use of parallel samples on the same system so long as the measurements being used in parallel lie in disjoint neighborhoods of the factor graph.
This can be used to improve the efficiency still further.
Finally, a natural generalization of this idea is to learn a matrix product operator quantum channel~\cite{Verstraete2004, Zwolak2004}.
The state and unitary analogs of this idea have been explored before~\cite{Cramer2010a, Holzapfel2015}, but not yet in a way that is robust to SPAM errors.
Perhaps combining the ideas in the present paper with those of Kimmel \textit{et al}.~\cite{Kimmel2013} could lead to a more general procedure. 

Finally, there is the practical application of our methods to real world quantum devices. 
A first exploration along these lines can be found in a companion paper by R.~Harper and the present authors~\cite{Harper2019}.

\begin{acknowledgments}
We thank Robin Harper and Wenjun Yu for comments on an earlier draft.
This work was supported in part by the US Army Research Office grant numbers W911NF-14-1-0098 and W911NF-14-1-0103, the Australian Research Council Centre of Excellence for Engineered Quantum Systems (EQUS) CE170100009, the Government of Ontario, and the Government of Canada through the Canada First Research Excellence Fund (CFREF) and Transformative Quantum Technologies (TQT), the Natural Sciences and Engineering Research Council (NSERC), Industry Canada.
\end{acknowledgments}

\bibliographystyle{apsrev4-1}
\bibliography{refs}
\end{document}